\documentclass[12pt]{article}

\usepackage{amssymb, amsmath, amsthm}
\usepackage{mathrsfs}
\usepackage[left=2cm,top=4cm,bottom=2.5cm,right=2cm]{geometry}
\usepackage[utf8]{inputenc} 
\usepackage[T1]{fontenc}    
\usepackage{hyperref}       
\usepackage{url}            
\usepackage{booktabs}       
\usepackage{amsfonts}       
\usepackage{nicefrac}       
\usepackage{microtype}      
\usepackage{xcolor}         
\usepackage{physics}        
\usepackage{amssymb, amsmath, amsthm}
\usepackage{xpatch}
\usepackage[pdftex]{graphicx}
\usepackage{authblk}
\usepackage[symbol]{footmisc}
\usepackage{multirow}       
\usepackage{diagbox}       
\usepackage{makecell}
\usepackage{caption}
\usepackage{enumitem}
\usepackage{comment}
\usepackage{natbib}         

\makeatletter
\AtBeginDocument{\xpatchcmd{\@thm}{\thm@headpunct{.}}{\thm@headpunct{}}{}{}}
\makeatother

\newcommand{\mathbbm}[1]{\text{\usefont{U}{bbm}{m}{n}#1}}

\DeclareMathOperator\arctanh{arctanh}
\DeclareMathOperator{\Var}{Var}
\DeclareMathOperator{\diag}{diag}
\DeclareMathOperator{\linspan}{span}

\newtheorem{theorem}{Theorem}
\newtheorem*{theorem*}{Theorem}
\newtheorem{proposition}[theorem]{Proposition}
\newtheorem{corollary}[theorem]{Corollary}
\newtheorem{lemma}{Lemma}
\newtheorem*{lemma*}{Lemma}
\newtheorem{definition}{Definition}
\newtheorem*{definition*}{Definition}
\newtheorem{remark}{Remark}
\newtheorem*{remark*}{Remark}

\newtheorem*{fact*}{Fact}

\newtheoremstyle{theoremnoperiod}
    {\medskipamount}%
    {\medskipamount}%
    {\itshape}%
    {\parindent}%
    {\scshape}%
    {}%
    {1em}%
    {}%
\theoremstyle{theoremnoperiod}
\newtheorem*{Restatementlemma*}{Restatement of Lemma\hspace*{-9pt}}  
\newtheorem*{Restatementtheorem*}{Restatement of Theorem\hspace*{-9pt}}
\newtheorem*{Restatementproposition*}{Restatement of Proposition\hspace*{-9pt}}

\begin{document}

\title{Optimal gentle measurements of finite-dimensional quantum states}
\date{}

\author[1]{\small Cristina Butucea \footnote{\texttt{cristina.butucea@ensae.fr}}}
\author[2]{\small Jan Johannes \footnote{\texttt{johannes@math.uni-heidelberg.de}}}
\author[1, 2]{\small Henning Stein \footnote{\texttt{henning.stein@math.uni-heidelberg.de}}}

\affil[1]{\footnotesize CREST, ENSAE, Institut Polytechnique de Paris, 91120 Palaiseau, France}
\affil[2]{\footnotesize Heidelberg University, 69120 Heidelberg, Germany}

\maketitle

\begin{abstract}
    Standard approaches to quantum statistical inference rely on measurements that induce a collapse of the wave function, effectively consuming the quantum state to extract information. In this work, we investigate the fundamental limits of \emph{locally-gentle} quantum state certification, where the learning algorithm is constrained to perturb the state by at most $\alpha$ in trace norm, thereby allowing for the reuse of samples. We analyze the hypothesis testing problem of distinguishing whether an unknown state $\rho$ is equal to a reference state $\rho_0$ or $\epsilon$-far from it. We derive the minimax sample complexity for this problem, quantifying the information-theoretic price of non-destructive measurements. Specifically, by constructing explicit measurement operators, we show that the constraint of $\alpha$-gentleness imposes a sample size penalty of $\frac{d}{\alpha^2}$, yielding a total sample complexity of $n = \Theta(\frac{d^3}{\epsilon^2 \alpha^2})$. Our results clarify the trade-off between information extraction and state disturbance, and highlight deep connections between physical measurement constraints and privacy mechanisms in quantum learning. Crucially, we find that the sample size penalty incurred by enforcing $\alpha$-gentleness scales linearly with the Hilbert-space dimension $d$ rather than the number of parameters $d^2-1$ typical for high-dimensional private estimation. 
\end{abstract}

\section{Introduction}
A central postulate of quantum mechanics dictates that observation inevitably alters the observed system. In the standard regime of projective measurements, this disturbance is maximal, resulting in the "collapse of the wave function". Consequently, the quantum state is effectively consumed upon measurement, preventing any further extraction of information from that specific copy. While this destructive nature is assumed in most quantum testing algorithms, it is not an absolute necessity.

In this work, we depart from the destructive paradigm and investigate the challenge of locally-gentle quantum state certification. Here, the learning algorithm is constrained to perform measurements that are minimally invasive. Formally, we aim to learn an unknown quantum state $\rho \in \mathbb{C}^{d \times d}$ given access to $n$ copies. We consider the hypothesis testing task:
\begin{equation}
\label{eqn::state_certification_problem}
    H_0: \rho = \rho_0 \hspace{20pt} \text{vs} \hspace{20pt} H_1: \norm{\rho - \rho_0}_{Tr} > \epsilon,
\end{equation}
where $\norm{\cdot}_{Tr}$ denotes the trace norm. Crucially, we enforce that the measurements are $\alpha$-gentle. That is, for a gentleness parameter $\alpha$, the measurement ensures that the distance between the pre-measurement state $\rho$ and the post-measurement state $\rho_{M \to y}$ is bounded by $\norm{\rho - \rho_{M \to y}}_{Tr} \leq \alpha$ for all possible outcomes $y$.

Unlike the standard setting where the state collapses to an eigenstate, this constraint ensures the state is only altered $\alpha$-minimally and can therefore be reused for subsequent information extraction. It has been suggested (\citet{Abbas2023}) that this preservation of quantum information is essential for the efficient implementation of quantum backpropagation algorithms. The central objective of this paper is to determine the information-theoretic price of this preservation.
\subsection{Related Work.} 
In recent years, in part thanks to \citet{artiles_invitation_2005}, quantum statistics has received a surge of interest from statisticians who aim to understand the inherently random nature of quantum systems. While quantum systems do have fundamentally different properties to classical systems \citet{NielsenChuang}, statisticians were fast to discover that some tools of classical statistics allow for a very fruitful application in the quantum world. In particular, it has been shown that the measurement postulate of quantum mechanics, which dictates that any quantum system can only be observed indirectly, is closely related to the concept of compressed sensing \citet{wang_asymptotic_2013}. Since quantum states can be described by positive semi-definite hermitian matrices, many quantum algorithms arose from the better understood field of matrix recovery using compressed sensing such as \citet{gross_quantum_2010, koltchinskii_von_2011, flammia_quantum_2012, kueng_low_2017,  carpentier_uncertainty_2019} and their adaptation to the quantum setting have proven optimal not only for the estimation problem \citet{koltchinskii_optimal_2015, haah_sample-optimal_2017, guta2018faststatetomographyoptimal} but also for quantum testing \citet{yu:LIPIcs.ITCS.2021.11, liu2024rolesharedrandomnessquantum}. In order for us to study the effect of locally gentle measurements for quantum state certification, we combine these statistical insights with the study of gentle measurements and, interestingly, differential privacy. \\

\textbf{State Certification.}
Quantum state certification is a special task of quantum property testing (\citet{montanaro_survey_2016}) in which one aims to learn properties of a quantum state by means of quantum measurements. This field of research has seen a great amount of interest in recent years with the works of \citet{odonnell_quantum_2021, badescu_quantum_2019, Bubeck2020, yu:LIPIcs.ITCS.2021.11} and \citet{liu2024rolesharedrandomnessquantum}. While \citet{odonnell_quantum_2021} and \citet{badescu_quantum_2019} consider the theoretically more efficient regime of entangled measurements, given their current experimental unfeasibility (\citet{cotler_quantum_2020}), the focus has shifted in recent years to develop sample optimal algorithms for unentangled (product) measurements. \citet{Bubeck2020} showed that for unentangled randomized measurements, a total of $n = \Theta(\frac{d^{3/2}}{\epsilon^2})$ copies of $\rho$ are needed to distinguish it from the maximally mixed state. Later, \citet{yu:LIPIcs.ITCS.2021.11} provided an algorithm using mutually unbiased bases that achieves state certification using $n = O(\frac{d^2}{\epsilon^2})$ copies with fixed measurements. \citet{liu2024rolesharedrandomnessquantum} subsequently proved the corresponding lower bound $n = \Omega(\frac{d^2}{\epsilon^2})$ for fixed unentangled measurements. Furthermore, their proof technique recovered the lower bound from \citet{Bubeck2020} as a corollary. They also provided a generalized upper bound algorithm utilizing 2-designs, which coincides with that of \citet{yu:LIPIcs.ITCS.2021.11} when $d$ is a prime power. \\

\quad \textbf{Gentleness.} Prior to recent developments, the literature on the statistical properties of gentle measurements was sparse, extending little beyond the rudimentary bounds provided by the gentle measurement lemma \citet{Winter_1999}. 
A significant shift occurred when \citet{aaronson2019gentle} established a connection between gentleness and differential privacy, proposing a gentle algorithm for shadow tomography—a task distinct from ours where the goal is to learn only some derived properties of the state rather than the state itself. Recently, \citet{butucea2025sampleoptimallearningquantumstates} analyzed locally-gentle state tomography and certification specifically for qubits. They established that for these single-qubit tasks, gentleness incurs a multiplicative penalty of $\frac{1}{\alpha^2}$, resulting in a sample optimal rate of $n = \Theta(\frac{1}{\epsilon^2 \alpha^2})$. \\

\textbf{Connections to Differential Privacy.} It was observed by \citet{aaronson2019gentle} that a global gentle measurement can be constructed such that the resulting statistics follow the Laplace mechanism known from classical differential privacy \citet{Dwork2006}. While \citet{aaronson2019gentle} discussed approximate implementations of the aforementioned mechanism, \citet{butucea2025sampleoptimallearningquantumstates} constructed an implementable physical measurement, gentle on each system component, where the outcomes follow the label switching privacy mechanism (\citet{steinberger2023efficiency}). Given the prominence of differential privacy in statistical inference \citet{amorino_minimax_2025, berrett_locally_2020, duchi2014localprivacydataprocessing, kent_locally_2026}, a major challenge lies in upgrading classical privacy mechanisms into gentle measurements that preserve the quantum state while providing optimal guarantees for the inference problem at hand.

\section{Main Results}

We now state our main result, which establishes the fundamental limit for quantum state certification under local gentleness constraints and give a short description of the main technical difficulties of our contributions. The following Theorem is a consequence of the more general upper bound in Theorem \ref{thm::upper_bound} and the matching lower bound in Theorem \ref{thm::lower_bound}, for locally-$\alpha$-gentle quantum state certification in a simplified form.

\begin{theorem*}[Minimax Sample Complexity]
    Consider the testing task $H_0: \rho = \rho_0$ versus $H_1: \norm{\rho - \rho_0}_{\mathrm{tr}} > \epsilon$, where $\rho_0 = \frac{1}{d}\mathbbm{1}$ is the maximally mixed state. Then, a total number of 
    \begin{equation*}
        n = \Theta\left( \frac{d^3}{\alpha^2 \epsilon^2} \right)
    \end{equation*}
    copies are needed and sufficient to achieve a success probability of at least $2/3$ when restricting ourselves to fixed, unentangled, locally $\alpha$-gentle measurements,.
\end{theorem*}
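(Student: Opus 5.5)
The plan is to prove the upper and lower bounds separately and then specialise to $\rho_0 = \frac{1}{d}\mathbbm{1}$. This mirrors the announced Theorem \ref{thm::upper_bound} and Theorem \ref{thm::lower_bound}.

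\textbf{Upper bound.} The starting point is the fixed-measurement certification scheme of \citet{yu:LIPIcs.ITCS.2021.11} and \citet{liu2024rolesharedrandomnessquantum}. It uses a 2-design or a family of $d+1$ mutually unbiased bases $\{B_k\}$, each measured on roughly $n/(d+1)$ copies. The outcome frequencies $\hat p_k$ are combined into the U-statistic $T=\sum_k \sum_{i\neq j}\langle e_{X_i}-p^0_k, e_{X_j}-p^0_k\rangle$, an unbiased estimator of $\sum_k\|p_k-p_k^0\|_2^2$. Since the MUB outcome vectors form a 2-design, this quantity equals $\|\rho-\rho_0\|_{HS}^2\ge \epsilon^2/d$.

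To make the scheme $\alpha$-gentle, I would replace each projective basis measurement by a quantum version of the label-switching (randomized response) mechanism. The measurement operators are $M_y=\sqrt{\lambda |y\rangle\langle y| + \frac{1-\lambda}{d}\mathbbm{1}}$, where $|y\rangle$ runs over the basis and the effective noise level is tuned by $\lambda$. This generalises the qubit construction of \citet{butucea2025sampleoptimallearningquantumstates}. The first step is to check that for $\lambda \asymp \alpha$ (possibly $\alpha/\sqrt{d}$, to be determined) the post-measurement state $M_y\rho M_y/\mathrm{tr}(M_y^2\rho)$ stays within $\alpha$ of $\rho$ for every $\rho$ and every $y$. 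I would verify this by writing $M_y = a\mathbbm{1} + b|y\rangle\langle y|$ with $b/a = O(\alpha)$ and expanding. The outcome law becomes $q_k = \lambda p_k + (1-\lambda)u$ with $u$ uniform, so $\|q_k-q_k^0\|_2^2=\lambda^2\|p_k-p^0_k\|_2^2$. The variance analysis of the U-statistic then proceeds as in the non-gentle case, since $q_k$ remains near-uniform. Chebyshev gives success as soon as $n\lambda^2\epsilon^2/d \gtrsim \sqrt{d}\cdot(\text{null std})$. The bookkeeping should give $n \gtrsim d^2/(\lambda^2\epsilon^2)$. Matching $d^3/(\alpha^2\epsilon^2)$ therefore requires gentleness to force $\lambda^2 \asymp \alpha^2/d$.

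The crux is exactly this calibration: determining the largest $\lambda$ compatible with $\alpha$-gentleness uniformly over pure states $\rho$, and checking that it loses only a factor $d$. I would compute the worst case explicitly on $\rho=|\psi\rangle\langle\psi|$ using the two-dimensional subspace spanned by $|y\rangle$ and $|\psi\rangle$.

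\textbf{Lower bound.} I would follow \citet{liu2024rolesharedrandomnessquantum}. Take the mixture $\rho_U = \frac{1}{d}(\mathbbm{1} + \frac{c\epsilon}{\sqrt d}\, U\Delta U^\dagger)$ with Haar-random $U$ and $\Delta$ a fixed traceless $\pm1$ diagonal. This satisfies $\|\rho_U-\rho_0\|_{tr}\gtrsim\epsilon$. I would bound the $\chi^2$-divergence between the mixture and the null of the product outcome distributions for any fixed unentangled measurement sequence. Their argument reduces this to controlling, per copy, the operator $\sum_y \frac{\tilde M_y\otimes \tilde M_y}{\mathrm{tr} M_y}$ with $\tilde M_y$ the traceless part of $M_y^\dagger M_y$. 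The gentleness input is a lemma stating that an $\alpha$-gentle POVM element satisfies $\|\tilde M_y\|_{op}\lesssim \alpha\,\mathrm{tr}(M_y)/d$, i.e. each element is close to a multiple of identity. This should follow by applying the gentleness condition to pure states: a large eigenvalue gap would rotate some pure state by more than $\alpha$. Inserting this bound into their second-moment computation multiplies the per-copy information by $\alpha^2/d$ relative to the projective case. This turns their $d^2/\epsilon^2$ bound into $d^3/(\alpha^2\epsilon^2)$. I expect this contraction lemma and its interaction with the Haar moment computation to be the main technical obstacle. The first thing to try is bounding the relevant Hilbert--Schmidt norm $\sum_y \|\tilde M_y\|_{HS}^2/\mathrm{tr}M_y \le \alpha^2 \sum_y \mathrm{tr}(M_y)/d \cdot O(1)= O(\alpha^2)$, compared with $O(d)$ for projective measurements.
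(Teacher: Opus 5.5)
Both halves have a genuine gap.

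\textbf{Upper bound.} The calibration you flag as the crux fails. The element $E_y=\lambda\ket{y}\bra{y}+\frac{1-\lambda}{d}\mathbbm{1}$ has $\lambda_{\max}(E_y)/\lambda_{\min}(E_y)=1+\frac{\lambda d}{1-\lambda}$. For your positive square-root implementation, $\alpha$-gentleness forces this ratio to be at most $\bigl(\frac{1+\alpha}{1-\alpha}\bigr)^2=1+O(\alpha)$ (Lemma~\ref{lem::improved_constant_positivity}); Proposition~\ref{lem::gentleness_implies_privacy} gives the same order for any implementation. Hence $\lambda=O(\alpha/d)$, not $\alpha/\sqrt d$. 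Your own parametrization says the same: completeness of $M_y=a\mathbbm{1}+b\ket{y}\bra{y}$ forces $a^2\approx 1/d$, so $\lambda=2ab+b^2\approx 2(b/a)/d$.

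With $\lambda\asymp\alpha/d$, your bookkeeping $n\gtrsim d^2/(\lambda^2\epsilon^2)$ gives $d^4/(\alpha^2\epsilon^2)$. This is a limitation of the mechanism, not of the analysis. For your scheme, $\bar{\mathcal H}$ from \eqref{eqn::average_Kraus_channel} equals $\frac{\lambda^2}{d+1}$ times the identity on traceless matrices. Its trace there is therefore $O(\alpha^2/d)$, a factor $d$ below the $O(\alpha^2)$ budget that gentleness allows. This is the quantum version of $k$-ary randomized response losing a factor $k$ at high privacy.

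The missing idea is the paper's RAPPOR-type POVM with $2^d$ outcomes per basis, $E^{(b)}_{\delta,z}\propto\sum_k e^{-\frac{\delta}{2}\norm{z-e_k}_1}\ket{e_k^{(b)}}\bra{e_k^{(b)}}$ for $z\in\{0,1\}^d$. Since $\norm{z-e_k}_1$ varies by at most $2$ in $k$, every element has eigenvalue ratio at most $e^{\delta}$ whatever $d$ is. Each bit $z_k$ is Bernoulli$(\beta+\alpha\,p^{(b)}_\rho(k))$, so the signal is attenuated only by $\alpha^2$. The cost is a per-basis null variance of order $d n_b^2$ instead of $n_b^2/d$, which is exactly the $d/\alpha^2$ penalty.

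\textbf{Lower bound.} A Haar-random prior chosen independently of the measurement cannot give the fixed-measurement rate. Anything it proves also holds against shared-randomness schemes, and a single fixed gentle measurement beats it. Take $E_\pm=\frac12(\mathbbm{1}\pm\eta Z)$ with $Z$ a traceless $\pm1$ observable and $\eta=\frac{2\alpha}{1+\alpha^2}$; its positive square roots are $\alpha$-gentle by Theorem~\ref{thm::duality_gentleness_privacy}(i). Use the corrected normalization $\rho_U=\frac1d(\mathbbm{1}+c\epsilon\,U\Delta U^\dagger)$; your factor $\frac{c\epsilon}{\sqrt d}$ only gives $\norm{\rho_U-\rho_0}_{Tr}=\frac{c\epsilon}{2\sqrt d}$. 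Each outcome is then a coin with bias $\frac{\eta c\epsilon}{2d}\Tr[U\Delta U^\dagger Z]$, of order $\alpha\epsilon/d$ for typical $U$. So $O(d^2/(\alpha^2\epsilon^2))$ copies already separate the mixture from $\rho_0^{\otimes n}$.

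In your second-moment computation this appears as the Haar average $\approx\frac{n^2\epsilon^4}{d^4}\sum_i\mu_i^2$, where $\mu_i$ are the eigenvalues of $\bar{\mathcal H}$ on traceless matrices. Your contraction lemma is the right gentleness input, but it controls only $\sum_i\mu_i=O(\alpha^2)$, which is Proposition~\ref{prop::properties_of_Kraus_channel_2}(iv). A fixed measurement may put that whole budget on one direction, so you recover only $\Omega(d^2/(\alpha^2\epsilon^2))$, i.e.\ Corollary~\ref{thm::lower_bound_randomized}. Likewise, in \citet{liu2024rolesharedrandomnessquantum} a measurement-independent prior yields the randomized $d^{3/2}/\epsilon^2$ bound, not the fixed $d^2/\epsilon^2$ one.

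The missing step is to choose the prior after the measurement, as the paper does. Perturb $\rho_0$ by Rademacher combinations of the eigenvectors of $\bar{\mathcal H}$ with the $D\approx d^2/2$ smallest eigenvalues. By the trace budget each of these eigenvalues is $O(\alpha^2/d^2)$, so the sub-Gaussian bound gives exponent $O(n^2\epsilon^4\alpha^4/d^6)$. Because these perturbations are no longer unitary conjugates of a fixed matrix, you also need Lemma~\ref{lem::probability_valid_states} to ensure that most $\rho_\nu$ are valid states at distance $\epsilon$.
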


\textbf{The Price of Gentleness.}
Comparing our result to the standard unentangled setting, where $n = \Theta(d^2/\epsilon^2)$ is sufficient for fixed measurements (\citet{yu:LIPIcs.ITCS.2021.11}), we observe that enforcing $\alpha$-gentleness incurs a multiplicative penalty of $d/\alpha^2$.
This scaling highlights a surprising efficiency in quantum gentle learning. The parameter space of a $d$-dimensional quantum state has dimension $d^2-1$. In classical differential privacy, the sample complexity penalty typically scales linearly with the dimension of the parameter space (i.e., one might expect a penalty of order $d^2$) 
However, we find that the quantum ``price of gentleness'' scales only as $ d$.
This separation suggests that ``lifting'' classical privacy mechanisms to the quantum regime, that is converting stochastic maps into physical measurement operators, yields fundamentally different statistical behaviors due to the specific geometric properties of quantum state space. \\

\textbf{Constructive Upper Bound via Noisy measurements.}
A major challenge in gentle learning is ``lifting'' classical private mechanisms (which act on probability distributions) into valid quantum instruments (which act on density matrices). We propose an explicit construction of measurement operators that satisfy the gentleness constraint while preserving statistical utility.
Our construction utilizes mutually unbiased bases, which are sets of orthonormal bases that are equally spaced apart in the $d$ dimensional Hilbert space. Aiming at generalizing the classical RAPPOR mechanism for high-dimensional private testing (\citet{acharya2021inferenceinformationconstraintsiii}), we define a ``noisy'' measurement operator that is both gentle and statistically optimal. For a set of mutually unbiased bases comprised of vectors $(|e_k^{(b)}\rangle)_{k = 1,...,d; b = 1,...,d+1}$, a vector $z \in \{0,1\}^z$, and a noise parameter $\delta > 0$, we define the operator:
\begin{equation*}
    E_{\delta, z}^{(b)} = \left( \frac{e^{\delta/2}}{e^{\delta/2} + 1} \right)^d  \sum_{k = 1}^d e^{-\frac{\delta}{2} \norm{z - e_k}_1} \ket{e_k^{(b)}}\bra{e_k^{(b)}},
\end{equation*}
where $e_k \in \{0,1\}^d$ is the standard basis vector for the $k$-th entry. We prove that for appropriate $\delta$, the collection $\{E_{\delta, z}\}_z^{(b)}$ forms a valid POVM that is $\alpha$-gentle. Furthermore, we show that classical post-processing of the outcomes of this POVM achieves the upper bound. \\

\textbf{Lower Bound Framework for Full-Rank Measurements.}
To establish optimality, we extend the lower bound framework of \citet{liu2024rolesharedrandomnessquantum}. A critical technical obstacle arises from the nature of gentleness: as noted in \citet{butucea2025sampleoptimallearningquantumstates}, gentle measurements must be full-rank. Otherwise, input states near the operator's null space would suffer total collapse. Consequently, we cannot rely on standard techniques that assume rank-one POVMs. 

We overcome this by analyzing the specific structure of the information loss. For a gentle measurement defined by POVM elements $(E_y)_{y \in \mathcal{Y}}$, we characterize the $\chi^2$-fluctuation around the distribution induced by $\rho_0$ via the linear super-operator $\mathcal{H}: \mathbb{C}^{d \times d} \to \mathbb{C}^{d \times d}$, defined as:
\begin{equation}
\label{eqn::defn_linear_superoperator}
    \mathcal{H}(A) := \sum_{y \in \mathcal{Y}} \frac{\Tr[ A E_{y} ]}{\Tr[E_{y}]} E_{y}.
\end{equation}
We show that $\mathcal{H}$ captures the relevant statistical properties of general (non-rank-one) measurements, serving as a generalization of the analysis of the Lüders channel. By proving that $\mathcal{H}$ is self-adjoint, we can identify the least sensitive directions in the state space along which to perturb $\rho_0$ corresponding to the smallest eigenvalues of $\mathcal{H}$. Constructing local perturbations along these directions yields the matching lower bound.

\begin{table}[ht!]
\centering
\begin{tabular}{l|c|c}
    
     & \textbf{non-gentle} & \textbf{\phantom{no}gentle\phantom{n-}} \\ 
    \hline
    
    \textbf{\makecell[l]{Upper \\ Bound}} & \textbf{\makecell[c]{$\frac{d^2}{\epsilon^2}$}} (\citet{yu:LIPIcs.ITCS.2021.11}) \phantom{spaaaaaaaace} & \textbf{\makecell[c]{$\frac{d^{3}}{\epsilon^2 \alpha^2}$}} (Theorem~\ref{thm::upper_bound}) \\ 
    \hline
    
    \textbf{\makecell[l]{Lower \\ Bound}} & \textbf{\makecell[c]{$\frac{d^2}{\epsilon^2}$}} (\citet{liu2024rolesharedrandomnessquantum}) & \textbf{\makecell[c]{$\frac{d^{3}}{\epsilon^2 \alpha^2}$}} (Theorem~\ref{thm::lower_bound}) \\
\end{tabular}
\label{tab::bounds}
\vspace{10pt}
\caption{A comparison of the results on copy-complexity for quantum state certification in the gentle and non-gentle case.}
\end{table}

\section{Introduction to (quantum) statistics}

\subsection{Quantum states}

A $d$-dimensional quantum system is based on the Hilbert space $\mathbb{C}^d$ with the standard complex inner product. A state of the system is given by a positive self-adjoint matrix (positive for short) $\rho \in \mathbb{C}^{d \times d}$ with trace one, i.e. $\Tr[\rho] = 1$. A state $\rho$ of the system is said to be pure if it has rank one, i.e. $\rank(\rho) = 1$. In that case we can write $\rho = \ket{\psi}\bra{\psi}$ for a normalized vector $\ket{\psi} \in \mathbb{C}^d$. As such, we often identify pure states with one of their representational vectors $\ket{\psi}$. We denote by $\mathcal{S}(\mathbb{C}^d)$ the convex set of all quantum states and by $\mathcal{S}_{pure}(\mathbb{C}^d)$ the set of pure quantum states. Composite quantum systems are described by the tensor product of the individual parts. The product state of $n$ identical and independent states $\rho$ is given by $\rho^{\otimes n} := \rho \otimes ... \otimes \rho \in \mathcal{S}((\mathbb{C}^d)^{\otimes n})$.

\subsection{Quantum measurements} 
There are several notions of measurements in quantum mechanics that are more or less general. The one we chose here is as in \citet{NielsenChuang} and is the most general one that allows us to define the notion of post-measurement states which will be essential for gentleness. A quantum measurement is given by a set of measurement operators $M = (M_y)_{y \in \mathcal{Y}} \subseteq \mathbb{C}^{d \times d}$ such that
\begin{equation}
\label{eqn::resolution_of_identity}
    \sum_{y \in \mathcal{Y}} M_y^* M_y = \mathbbm{1},
\end{equation}
where $\mathbbm{1}$ denotes the $d \times d$ identity  and $M^*$ denotes the adjoint of a matrix $M$. The outcome of a quantum measurement is random and alters the state of the system. Measuring the states $\rho$ or $\ket{\psi}$ one obtains the outcome $y \in \mathcal{Y}$ with probability
\begin{align*}
    \mathbb{P}_{\rho}\left( R^M = y \right) = \Tr\left[ \rho M_y^*M_y \right] \hspace{20pt} &\text{and} \hspace{20pt} \mathbb{P}_{\ket{\psi}}\left( R^M = y \right) = \left| \bra{\psi}M_y^*M_y\ket{\psi} \right| = \norm{M_y \ket{\psi}}^2
\intertext{respectively. The state of the system after the measurement (the post-measurement state) is given by}
    \rho_{M \to y} = \frac{1}{\sqrt{\mathbb{P}_{\rho}\left( R^M = y \right)}} M_y\rho M_y^* \hspace{18pt} &\text{and} \hspace{18pt} \ket{\psi}_{M \to y} = \frac{1}{\sqrt{\mathbb{P}_{\ket{\psi}}\left( R^M = y \right)}} M_y \ket{\psi} = \frac{M_y \ket{\psi}}{\norm{M_y \ket{\psi}}}
\end{align*}
respectively. A measurement $M$ on a composite system $(\mathbb{C}^d)^{\otimes n}$ is said to be product if it acts independently on each state of the joint system and can be written as $M = M^{(1)} \otimes .... \otimes M^{(n)}$ given by measurement operators $M_y = M_{y_1}^{(1)} \otimes ... \otimes M_{y_n}^{(n)}$ for $y \in \mathcal{Y}_1 \times ... \times \mathcal{Y}_n$. If $M^{(1)} = ... = M^{(n)}$ we write $M = (M^{(1)})^{\otimes n}$ for short. The outcome of a product measurement on a product state are independent random variables. \medskip

A complete set of mutually unbiased bases (MUBs) is a collection $(|e_k^{(b)}\rangle)_{k = 1,...,d; b = 1,...,d+1}$ such that for each fixed $b$, the set $(|e_k^{(b)}\rangle)_{k = 1,...,d}$ is an orthonormal basis of $\mathbb{C}^d$ and for $b' \neq b$ it holds $|\langle e_k^{(b)}|e_{k'}^{(b')}\rangle|^2 = \frac{1}{d}$ for all $k, k' \in \{1,...,d\}$. The collection of all vectors of a complete set of mutually unbiased bases $(\ket{v_m})_{m =1,...,D}$ for $D = d(d+1)$ forms a quantum $2$-design (\citet{Klappenecker2005MutuallyUB}) and a such it holds (see \citet{liu2024rolesharedrandomnessquantum})
\begin{equation}
\label{eqn::property_2_design}
    \frac{1}{D} \sum_{m = 1}^D \bra{v_m}M\ket{v_m}^2 = \frac{1}{d(d+1)} \left( \Tr[M^2]  + \Tr[M]^2 \right)
\end{equation}
for a hermitian matrix $M$. Mutually unbiased bases have been proven to be optimal in both quantum state tomography (\citet{guta2018faststatetomographyoptimal}) and quantum state certification (\citet{liu2024rolesharedrandomnessquantum, yu:LIPIcs.ITCS.2021.11}). Since the basis-measurements defined by MUBs consist of rank-one operators, they cannot be gentle (\citet{butucea2025sampleoptimallearningquantumstates}). As such, in order to define a gentle measurement bases on MUBs for our state certification algorithm we must be more careful. Although the existence of a complete set of MUBs remains and open question in general dimensions, such a set can alwys be constructed for prime power dimensions $d = p^q$ which includes many interesting quantum system such as collections of lower-dimensional systems. 

\subsection{Quantum metrics}

The quantum analogues of the total variation and the $\mathbb{L}_2$ distances between likelihoods are the trace-norm distance and the Frobenius-norm distance, respectively, between quantum states. These distances between the states $\rho_1$ and $\rho_2$ are defined as
\begin{equation*}
    \norm{\rho_1 - \rho_2}_{Tr} = \frac{1}{2} \Tr\left[ |\rho_1 - \rho_2| \right] \hspace{20pt} \text{and} \hspace{20pt} \norm{\rho_1 - \rho_2}_F = \Tr\left[ (\rho_1 - \rho_2)^2 \right]^{\frac{1}{2}}
\end{equation*}
respectively. Here, for $A \in \mathbb{C}^{d \times d}$, $|A| = \sqrt{A^*A}$ denotes its matrix absolute value. Note that, since $\rho_1 - \rho_2$ is a self-adjoint operator, it admits a spectral decomposition $\rho_1 - \rho_2 = \sum_{j = 1}^d \mu_j \ket{\psi_j}\bra{\psi_j}$ for an orthonormal basis $(\ket{\psi}_j)_{j = 1,...,d}$ of $\mathbb{C}^d$. Then it holds
\begin{equation*}
    \norm{\rho_1 - \rho_2}_{Tr} = \frac{1}{2} \sum_{j = 1}^d |\mu_j| \hspace{20pt} \text{and} \hspace{20pt} \norm{\rho_1 - \rho_2}_F = \left(\sum_{j = 1}^d |\mu_j|^2\right)^{\frac{1}{2}},
\end{equation*}
i.e. the Trace-norm and Frobenius-norm are equivalent to the Schatten-1 and -2-norms respectively. As such, we also have
\begin{equation}
\label{eqn::Schatten_norm_inequalities}
    \norm{\rho_1 - \rho_2}_{Tr} \leq \frac{\sqrt{r}}{2} \norm{\rho_1 - \rho_2}_F,
\end{equation}
where $r = \rank(\rho_1 - \rho_2)$. Note that the Frobenius-norm is the norm induced by the inner product $\langle A, B \rangle = \Tr\left[ A^* B \right]$ on $\mathbb{C}^{d \times d}$. Both of the above norms have simpler forms for pure states.
\begin{lemma}{\normalfont (\citet{kargin2003chernoffboundefficiencyquantum})} \label{lemmaTraceNorm}
    Let $\rho_1 = \ket{\psi_1}\bra{\psi_1}, \rho_2 = \ket{\psi_2}\bra{\psi_2}$ be two pure states. Then the trace-norm and Frobenius-norm distance between the two is given by 
    \begin{equation*}
        \norm{\rho_1 - \rho_2}_{Tr} = \sqrt{1 - |\bra{\psi_1}\ket{\psi_2}|^2} \hspace{20pt} \text{and} \hspace{20pt} \norm{\rho_1 - \rho_2}_{F} = \sqrt{2(1 - |\bra{\psi_1}\ket{\psi_2}|^2)}.
    \end{equation*}
\end{lemma}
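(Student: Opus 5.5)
The plan is to compute both norms directly from the spectral decomposition of the difference $\Delta := \rho_1 - \rho_2 = \ket{\psi_1}\bra{\psi_1} - \ket{\psi_2}\bra{\psi_2}$, and then use the formulas $\norm{\Delta}_{Tr} = \frac12\sum_j|\mu_j|$ and $\norm{\Delta}_F = (\sum_j \mu_j^2)^{1/2}$ stated above.

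\textbf{Structure of $\Delta$.} $\Delta$ is self-adjoint, has rank at most two, and its range lies in $\linspan\{\ket{\psi_1},\ket{\psi_2}\}$. Hence it has at most two nonzero eigenvalues $\mu_1,\mu_2$. If $\ket{\psi_1}$ and $\ket{\psi_2}$ are collinear, then $\Delta=0$ and $|\langle\psi_1|\psi_2\rangle|=1$, so both formulas hold trivially. Otherwise we work in the two-dimensional span.

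\textbf{Determining the eigenvalues.} Two trace identities fix $\mu_1,\mu_2$:
\begin{align*}
\mu_1+\mu_2 &= \Tr[\Delta] = 1-1 = 0, \\
\mu_1^2+\mu_2^2 &= \Tr[\Delta^2] = \Tr[\rho_1^2] + \Tr[\rho_2^2] - 2\Tr[\rho_1\rho_2] = 2 - 2|\langle\psi_1|\psi_2\rangle|^2.
\end{align*}
Here $\Tr[\rho_i^2]=1$ because the states are pure, and $\Tr[\rho_1\rho_2] = \langle\psi_2|\psi_1\rangle\langle\psi_1|\psi_2\rangle = |\langle\psi_1|\psi_2\rangle|^2$. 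The first identity gives $\mu_2=-\mu_1$, and the second then gives $\mu_1^2 = 1-|\langle\psi_1|\psi_2\rangle|^2$. So the eigenvalues are $\pm\sqrt{1-|\langle\psi_1|\psi_2\rangle|^2}$.

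\textbf{Reading off the norms.} The Frobenius norm is $\sqrt{\Tr[\Delta^2]} = \sqrt{2(1-|\langle\psi_1|\psi_2\rangle|^2)}$, which is the second claim. The trace norm is $\frac12(|\mu_1|+|\mu_2|) = |\mu_1| = \sqrt{1-|\langle\psi_1|\psi_2\rangle|^2}$, which is the first claim.

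\textbf{Expected obstacle.} There is no real difficulty. The only point needing care is justifying that $\Delta$ has at most two nonzero eigenvalues, which follows from the rank bound. Alternatively, one can write $\ket{\psi_2} = a\ket{\psi_1} + b\ket{\psi_1^\perp}$ and compute the $2\times 2$ matrix of $\Delta$ explicitly in that basis.
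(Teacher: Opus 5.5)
Your proof is correct and complete. Since $\Delta=\rho_1-\rho_2$ is self-adjoint of rank at most two, it has at most two nonzero eigenvalues. The identities $\Tr[\Delta]=0$ and $\Tr[\Delta^2]=2-2|\langle\psi_1|\psi_2\rangle|^2$ then force these to be $\pm\sqrt{1-|\langle\psi_1|\psi_2\rangle|^2}$. You also correctly account for the factor $\frac12$ in the paper's definition of $\norm{\cdot}_{Tr}$; that factor is why the trace distance equals $|\mu_1|$ rather than $2|\mu_1|$.

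The paper itself gives no proof of this lemma; it only cites Kargin (2003), so there is no competing route to compare. Your trace-identity argument is a clean, self-contained substitute that never needs an explicit basis of $\mathrm{span}\{\ket{\psi_1},\ket{\psi_2}\}$. The explicit $2\times 2$ computation you mention as an alternative would also give the eigenvectors, but they are not needed here.
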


We can define the Trace- and Frobenius-norm for a super-operator $\mathcal{H}$ as the Schatten-1 and 2-norms of $\mathcal{H}$ on the Hilbert-space $\mathbb{C}^{d \times d}$ endowed with the inner product $\langle A, B \rangle = \Tr\left[ A^* B \right]$ on $\mathbb{C}^{d \times d}$. That is, for an orthonormal basis $(V_j)_{j = 1,...,d^2}$ of $\mathbb{C}^{d \times d}$ we define
\begin{equation*}
    \norm{\mathcal{H}}_{Tr} = \frac{1}{2} \sum_{j = 1}^{d^2} \left| \langle V_j, \mathcal{H}(V_j)\rangle \right|\hspace{20pt} \text{and} \hspace{20pt} \norm{\mathcal{H}}_F = \left( \sum_{j = 1}^{d^2} \left|\langle V_j, \mathcal{H}(V_j) \rangle \right|^2 \right)^{\frac{1}{2}}.
\end{equation*}

\subsection{Probability metrics}
In order to asses the quality of our tests we make use of metrics on the space of probability distributions. Notably the total-variation-distance that is intimately related to the optimal error of a test and the $\chi^2$-distance that allows us to bound the test error using properties that behave nicely for our set of local alternatives. The total-variation- and $\chi^2$-distances for mutually absolutely continuous probability distributions $\mathbb{P}_0, \mathbb{P}_1$ on $(\mathcal{Y}, \mathcal{P}(\mathcal{Y}))$ with probability mass functions $p$ and $q$ are define as 
\begin{equation*}
    \norm{\mathbb{P}_1 - \mathbb{P}_0}_{TV} = \frac{1}{2} \sum_{y \in \mathcal{Y}} \left| p_1(y) - p_0(y) \right| \hspace{20pt} \text{and} \hspace{20pt} d_{\chi^2}(\mathbb{P}_1, \mathbb{P}_0) = \sum_{y \in \mathcal{Y}} \frac{p_1(y)^2}{p_0(y)} -1
\end{equation*}
respectively. Let $H_0: \mathbb{P} = \mathbb{P}_0$ vs. $H_1: \mathbb{P} = \mathbb{P}_1$. Then for the error of the test $\Delta^*$ that distinguishes optimally between the two hypotheses $H_0$ and $H_1$, it holds
\begin{equation*}
    \mathbb{P}_0(\Delta^* = 1) + \mathbb{P}_1(\Delta^* = 0) = 1 - \norm{\mathbb{P}_0 - \mathbb{P}_1}_{TV}.
\end{equation*}
Together with the inequality $\norm{\mathbb{P}_1 - \mathbb{P}_0}_{TV} \leq \sqrt{d_{\chi^2}(\mathbb{P}_1, \mathbb{P}_0)}$ (see \citet{Tsybakov}), we can bound the optimal testing by controlling the $\chi^2$-distance. The following result will prove useful for our lower bounds as it evaluates the $\chi^2$-distance between a product distribution $\mathbb{P}_0=\mathbb{P}$ and an average $\mathbb{P}_1 = \mathbb{E}_{\nu}\left[ \mathbb{Q}_{\nu} \right]$ of product distributions $\mathbb{Q}_{\nu}$ with respect to a prior measure over the parameters $\nu$.
\begin{theorem}{(Lemma 8 in \citet{acharya_inference_2019})}
\label{thm::chi_squared_fluctuation}
    Let $\mathbb{P} = \mathbb{P}^{(1)} \otimes ... \otimes \mathbb{P}^{(n)}$ be a fixed probability distribution and $\mathbb{Q}_{\nu} = \mathbb{Q}_{\nu}^{(1)} \otimes ... \otimes \mathbb{Q}_{\nu}^{(n)}$ be another mutually absolutely continuous probability distribution indexed by a random parameter $\nu \in \mathcal{V}$ on a space $(\prod_{i = 1}^n \mathcal{Y}_i, \bigotimes_{i = 1}^n \mathscr{Y}_i)$ with densities $p(y) = \prod_{i = 1}^n p_i(y_i)$ and $q_{\nu}(y) = \prod_{i = 1}^n q_{\nu}^{(i)}(y_i)$ with respect to a common measure. Then it holds
    \begin{equation*}
        d_{\chi^2}\left( \mathbb{E}_{\nu}\left[ \mathbb{Q}_{\nu} \right], \mathbb{P} \right) = \mathbb{E}_{\nu_1, \nu_2} \left[ \prod_{i = 1}^n \left( 1 + H_i(\nu_1, \nu_2) \right) \right] - 1,
    \end{equation*}
    where $\nu_1, \nu_2$ are two independent copies of $\nu$ and
    \begin{equation*}
        H_i(\nu_1, \nu_2) = \mathbb{E}_{y_i \sim \mathbb{P}^{(i)}}\left[ \delta_{\nu_1}^{(i)}(y_i) \delta_{\nu_2}^{(i)}(y_i) \right] \hspace{10pt} \text{and} \hspace{10pt} \delta_{\nu}^{(i)}(y_i) = \frac{q_{\nu}^{(i)}(y_i) - p^{(i)}(y_i)}{p^{(i)}(y_i)}.
    \end{equation*}
\end{theorem}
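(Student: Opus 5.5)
The identity in Theorem~\ref{thm::chi_squared_fluctuation} follows by expanding the $\chi^2$-divergence of the mixture and exchanging the expectation over $\nu$ with the sum over outcomes. All spaces in the paper's applications are finite, so I work with sums; the same argument applies verbatim with integrals against the common dominating measure. Mutual absolute continuity ensures the likelihood ratios below are well defined.

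First I write the mixture density as $\bar q(y) = \mathbb{E}_\nu[q_\nu(y)]$. Then
\begin{equation*}
    d_{\chi^2}(\mathbb{E}_\nu[\mathbb{Q}_\nu],\mathbb{P}) + 1 = \sum_y \frac{\bar q(y)^2}{p(y)} = \sum_y \frac{\mathbb{E}_{\nu_1}[q_{\nu_1}(y)]\,\mathbb{E}_{\nu_2}[q_{\nu_2}(y)]}{p(y)},
\end{equation*}
where $\nu_1,\nu_2$ are independent copies of $\nu$. By independence, the product of the two expectations equals $\mathbb{E}_{\nu_1,\nu_2}[q_{\nu_1}(y)q_{\nu_2}(y)]$. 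All terms are nonnegative, so Tonelli's theorem lets me move the expectation outside the sum:
\begin{equation*}
    \mathbb{E}_{\nu_1,\nu_2}\Big[\sum_y \frac{q_{\nu_1}(y)q_{\nu_2}(y)}{p(y)}\Big].
\end{equation*}

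Next I use the product structure. For fixed $\nu_1,\nu_2$ the summand factorizes as $\prod_{i} q^{(i)}_{\nu_1}(y_i)q^{(i)}_{\nu_2}(y_i)/p^{(i)}(y_i)$, and the sum over $\prod_i\mathcal{Y}_i$ therefore splits into a product of one-dimensional sums:
\begin{equation*}
    \prod_{i=1}^n \sum_{y_i} \frac{q^{(i)}_{\nu_1}(y_i)q^{(i)}_{\nu_2}(y_i)}{p^{(i)}(y_i)}.
\end{equation*}

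Finally I identify each factor with $1+H_i(\nu_1,\nu_2)$. Write $q^{(i)}_{\nu}=p^{(i)}(1+\delta^{(i)}_\nu)$. The $i$-th factor then equals
\begin{equation*}
    \mathbb{E}_{y_i\sim \mathbb{P}^{(i)}}\big[(1+\delta^{(i)}_{\nu_1})(1+\delta^{(i)}_{\nu_2})\big] = 1 + \mathbb{E}[\delta^{(i)}_{\nu_1}] + \mathbb{E}[\delta^{(i)}_{\nu_2}] + H_i(\nu_1,\nu_2).
\end{equation*}
Each linear term vanishes because $\mathbb{E}_{\mathbb{P}^{(i)}}[\delta^{(i)}_\nu]=\sum_{y_i} q^{(i)}_\nu(y_i) - \sum_{y_i} p^{(i)}(y_i) = 1-1=0$. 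Subtracting $1$ gives the claimed formula.

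There is no serious obstacle in this argument. The only point needing care is justifying the exchange of the $\nu$-expectation and the outcome sum, and Tonelli's theorem handles it because every term is nonnegative. Both sides may be $+\infty$ simultaneously, and the identity still holds in that case.
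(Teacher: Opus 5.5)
Your argument is correct. You write $\sum_y \bar q(y)^2/p(y)$ as $\mathbb{E}_{\nu_1,\nu_2}\bigl[\sum_y q_{\nu_1}(y)q_{\nu_2}(y)/p(y)\bigr]$, factorize over coordinates, and drop the linear terms via $\mathbb{E}_{\mathbb{P}^{(i)}}[\delta^{(i)}_\nu]=0$; this is essentially the standard argument behind Lemma 8 of \citet{acharya_inference_2019}, which the paper only cites, so your write-up (with the Tonelli justification for swapping sum and expectation) supplies the proof the paper omits.
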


\section{Gentle measurements}
\label{sec::gentleness}
Quantum measurements generally alter the state that they measure. The idea of gentleness is to limit the amount of alteration by the measurement. This renders the post-measurement states useful for further application. We achieve this by limiting the distance between the states before and after the measurements.

\begin{definition}
\label{defn::gentleness}
    For a given \textit{gentleness parameter} $\alpha \in [0,1]$, a measurement $M$ is $\alpha$-gentle on a set $\mathcal{S}$ of quantum states if for all possible measurement outcomes $y$
    \begin{equation*}
        \norm{\rho - \rho_{M \to y}}_{Tr} \leq \alpha \hspace{20pt} \text{for all } \rho \in \mathcal{S}.
    \end{equation*}
    
    If $\rho = \rho_1 \otimes ... \otimes \rho_n $ is a product state belonging to $ \mathcal{S}_1 \otimes ... \otimes \mathcal{S}_n =: \mathcal{S}^n$, we say that a measurement $M$ is locally-$\alpha$-gentle if it is a product measurement $M = M_1 \otimes ... \otimes M_n$ and $M_i$ is $\alpha$-gentle on $\mathcal{S}_i$ for all $i$.
\end{definition}

Note that the notion of locally $\alpha$-gentle product measurements is different to that of (possibly coherent) $\alpha$-gentle measurements $M$ on product states in $\mathcal{S}^n$. An example of the latter is given by the Laplace mechanism from \citet{aaronson2019gentle} which is not locally-$\alpha$-gentle.

A concept that has been shown (by \citet{aaronson2019gentle}) to be closely related to gentleness is that of quantum differential privacy. In contrast to gentleness, quantum differential privacy is only concerned with the probability of the measurement outcomes.

\begin{definition}
\label{defn::quantum_differential_privacy}
    A measurement $M $ is said to be $\delta$-quantum-differentially-private ($\delta$-qDP) for $\delta>0$ on a set $\mathcal{S}$ of quantum states, if for the outcome probabilities of any two states $\rho_1, \rho_2 $ in $\mathcal{S}$ under $M$ it holds
    \begin{equation}
    \label{eqn::quantum_privacy_equation}
        \mathbb{P}_{1}(R^{M} = y) \leq e^{\delta} \mathbb{P}_2(R^{M} = y) \hspace{20pt} \text{for all } y \in \mathcal{Y}.
    \end{equation}
    We call a product measurement $M = M_1 \otimes ... \otimes M_n$ locally-$\delta$-quantum-differentially private on $ \mathcal{S}_1 \otimes ... \otimes \mathcal{S}_n =: \mathcal{S}^n$ if each $M_i$ is $\delta$-quantum-differentially private on $\mathcal{S}_i$.
\end{definition}

We see that, as with gentleness, the locality of the quantum differential privacy for product measurements can be verified by checking the property on each register separately. One of the main results from \citet{aaronson2019gentle} was the fact that gentleness and quantum differential privacy represent two sides of the same coin. Note however that for any measurement $M = (M_y)_{y \in \mathcal{Y}}$ the quantum differential privacy of $M$ is only dependent on the operators $E_y = M_y^*M_y$ while the gentleness of $M$ is dependent on the operators $M_y$ themselves. Since for any family $U = (U_y)_{y \in \mathcal{Y}}$ of unitaries on $\mathbb{C}^d$, the operators $UM = (U_y M_y)_{y \in \mathcal{Y}}$ also define a quantum measurement with a possibly different gentleness value. However, since $(M_y U_y)^* (U_y M_y) = M_y^* M_y$, the outcome distributions of both measurements are the same, which means that their quantum differential privacy parameter is also the same. When talking about gentle measurements we must therefore always consider which implementation of $M$ we are working with. The following theorem gives an overview on the relation between the gentleness and quantum differential privacy. Its proof can be found in Appendix~\ref{sec::appendix_gentleness_proofs}.
\begin{theorem}
\label{thm::duality_gentleness_privacy}
    Let $M = (M_y)_{y \in \mathcal{Y}}$ be a quantum measurement. Then the following two results hold
    \begin{enumerate}
        \item[i)] If $M$ is locally-$\delta$-quantum-differentially-private on $\mathcal{S}(\mathbb{C}^d)^{\otimes n}$, then there exists an implementation of $M$ such that $M$ is $\alpha$-gentle on $\mathcal{S}(\mathbb{C}^d)^{\otimes n}$ for $\alpha = (e^{\frac{\delta}{2}} -1)/(e^{\frac{\delta}{2}} +1) = \tanh(\delta/4)$. The implementation is given by the positive definite operators $(|M_y|)_{y \in \mathcal{Y}}$.

        \item[ii)] If $M$ is locally-$\alpha$-gentle on $\mathcal{S}(\mathbb{C}^d)^{\otimes n}$ for $\alpha < 1/2$, then $M$ is locally-$\delta$-quantum-differentially-private on $\mathcal{S}(\mathbb{C}^d)^{\otimes n}$ for $\delta = 2\log((1+2\alpha)/(1-2\alpha)) = 4 \arctanh(2 \alpha)$.
    \end{enumerate}
\end{theorem}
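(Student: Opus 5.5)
Both statements concern local properties, and Definitions~\ref{defn::gentleness} and~\ref{defn::quantum_differential_privacy} check each register separately. So I will prove both parts for a single register $M=(M_y)_{y\in\mathcal{Y}}$ on $\mathcal{S}(\mathbb{C}^d)$ and then apply the argument factor by factor. The common tool is the spectrum of $E_y=M_y^*M_y$.

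\emph{Step 1 (qDP as a spectral condition).} The outcome probabilities are $\Tr[\rho E_y]$, which range over $[\lambda_{\min}(E_y),\lambda_{\max}(E_y)]$ as $\rho$ ranges over states, the extremes being attained at eigenvectors. Hence $\delta$-qDP on $\mathcal{S}(\mathbb{C}^d)$ is equivalent to
\[
\lambda_{\max}(E_y)\le e^{\delta}\lambda_{\min}(E_y)\quad\text{for all } y .
\]
Both directions of the theorem therefore reduce to comparing this condition number with gentleness.

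\emph{Part i).} Take $A=|M_y|=E_y^{1/2}$, which is positive definite with spectrum in $[m,M']$ and $\kappa:=M'/m\le e^{\delta/2}$. Since $\sum_y A^*A=\sum_y E_y=\mathbbm{1}$, the family $(|M_y|)_y$ is a valid implementation.
\begin{itemize}
\item For a pure state $\ket\psi$, the post-measurement state is $A\ket\psi/\norm{A\ket\psi}$. By Lemma~\ref{lemmaTraceNorm} its trace distance to $\ket\psi$ is
\[
\sqrt{1-\bra\psi A\ket\psi^2/\bra\psi A^2\ket\psi}.
\]
\item The Kantorovich (P\'olya--Szeg\H{o}) inequality for a positive operator with spectrum in $[m,M']$ gives
\[
\bra\psi A^2\ket\psi\le \frac{(m+M')^2}{4mM'}\bra\psi A\ket\psi^2 .
\]
\item Combining the two, the squared distance is at most $1-\frac{4\kappa}{(1+\kappa)^2}=\left(\frac{\kappa-1}{\kappa+1}\right)^2$. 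The distance is therefore at most $\frac{e^{\delta/2}-1}{e^{\delta/2}+1}=\tanh(\delta/4)$.
\end{itemize}
For mixed $\rho$, I purify: $\rho=\Tr_E\ket\Psi\bra\Psi$. Then $(A\otimes\mathbbm{1})\ket\Psi$, normalized, purifies $\rho_{M\to y}$, and $A\otimes\mathbbm{1}$ has the same spectral range $[m,M']$. The pure-state bound therefore applies on the enlarged space, and trace distance contracts under the partial trace. This step, extending from pure to mixed inputs, is the one needing care, and purification seems the cleanest route.

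\emph{Part ii).} Write the polar decomposition $M_y=U|M_y|$, and let $u,v$ be orthonormal eigenvectors of $|M_y|$ with singular values $s=\lambda_{\min}(E_y)^{1/2}$ and $S=\lambda_{\max}(E_y)^{1/2}$.
\begin{itemize}
\item Test gentleness on $\rho=\tfrac12(\ket u\bra u+\ket v\bra v)$. The post-measurement state is $\rho_{M\to y}=\frac{s^2U\ket u\bra u U^*+S^2U\ket v\bra vU^*}{s^2+S^2}$, with eigenvalues $\frac{s^2}{s^2+S^2}$ and $\frac{S^2}{s^2+S^2}$.
\item The trace distance between two Hermitian matrices is at least half the $\ell_1$ distance between their sorted eigenvalue vectors (Lidskii/Mirsky). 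This lower bound does not depend on the unknown unitary $U$, which is the key point.
\item It gives $\norm{\rho-\rho_{M\to y}}_{Tr}\ge \frac{S^2-s^2}{2(S^2+s^2)}$. Gentleness then forces
\[
\frac{S^2}{s^2}\le\frac{1+2\alpha}{1-2\alpha}\quad\text{for } \alpha<1/2 .
\]
\end{itemize}
By Step 1, $M$ is qDP with parameter $\log\frac{1+2\alpha}{1-2\alpha}$. This is at most the stated $\delta=2\log\frac{1+2\alpha}{1-2\alpha}$, which proves ii).
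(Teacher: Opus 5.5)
Your proposal is correct. Part i) follows the paper's route: the spectral form of qDP as in Proposition~\ref{prop::equalities_qDP}, the implementation $|M_y|$, the Kantorovich inequality on pure states, then purification and contractivity of the partial trace. Your mixed-state step is slightly more direct. You note that $|M_y|\otimes\mathbbm{1}$ has the same spectral range. Proposition~\ref{prop::pure_states_imply_all_states} instead starts from pure-state gentleness and has to go through Lemma~\ref{lem::improved_constant_positivity} to recover qDP of $M\otimes I$.

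Part ii) takes a genuinely different route. The paper's argument runs as follows:
\begin{enumerate}
\item[(a)] It takes orthogonal pure states $\rho_1,\rho_2$ and the mixture $\rho_\lambda=\lambda\rho_1+(1-\lambda)\rho_2$.
\item[(b)] It writes $\rho_\lambda-(\rho_\lambda)_{M\to y}$ as a combination of the disturbances of $\rho_1$ and $\rho_2$ plus a multiple of $\rho_2-\rho_1$.
\item[(c)] It applies gentleness three times via the triangle inequality and optimizes at $\lambda_0=(1-2\alpha)/2$, obtaining $\lambda_{\max}(E_y)/\lambda_{\min}(E_y)\le\bigl(\frac{1+2\alpha}{1-2\alpha}\bigr)^2$.
\end{enumerate}
You use a single test state instead. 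You neutralize the unknown unitary factor of the polar decomposition with the Lidskii--Mirsky bound $\norm{A-B}_{Tr}\ge\frac12\sum_i|\lambda_i^{\downarrow}(A)-\lambda_i^{\downarrow}(B)|$.

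The paper's argument is more elementary: it needs only convexity, linearity of $\rho\mapsto M_y\rho M_y^*$, and the triangle inequality. Yours relies on a standard eigenvalue-perturbation theorem. In return it gives the sharper ratio bound $\frac{1+2\alpha}{1-2\alpha}$, i.e.\ $\delta=2\arctanh(2\alpha)$, which is half the stated value and therefore implies the claim. This is consistent with the sharp positive-definite bound of Lemma~\ref{lem::improved_constant_positivity}, because $\frac{1+2\alpha}{1-2\alpha}\ge\bigl(\frac{1+\alpha}{1-\alpha}\bigr)^2$.

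Only trivial edge cases are left implicit. Outcomes with $E_y=0$ are vacuous. In ii) the case $s=0$ is excluded by your own bound, since it would force $\alpha\ge 1/2$.
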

\citet{butucea2025sampleoptimallearningquantumstates} have shown that the relation from gentleness to quantum differential privacy can be improved for positive-definite operators. For positive-definite operators, a locally-$\alpha$-gentle measurement is locally-$\delta$-quantum-differentially-private for $\delta = 2\log((1+\alpha)/(1-\alpha)) = 4 \arctanh(\alpha)$ (see Lemma~\ref{lem::improved_constant_positivity}). This result, together with Theorem~\ref{thm::duality_gentleness_privacy}, establishes a one-to-one correspondence between gentle and quantum differentially private measurements for positive-definite operators with optimal constants relating the two. \medskip

Another useful result for both gentleness as well as quantum differential privacy is the fact that it suffices to verify them on the set of pure quantum states. Proposition~\ref{prop::pure_states_imply_all_states} assures that it is enough to calculate the gentleness of a measurement on pure states to show gentleness on all states. Furthermore, Proposition~\ref{prop::equalities_qDP} assures that a measurement is locally-$\delta$-quantum-differentially-private on mixed states if and only if it is locally-$\delta$-quantum-differentially-private on pure states. Furthermore, in order to verify quantum differential privacy of a measurement $M = (M_y)_{y \in \mathcal{Y}}$ it suffices to check that $\lambda_{max}(E_{y}) \leq e^{\delta} \lambda_{min}(E_{y})$ for all $y \in \mathcal{Y}$, where $E_{y} = M_{y}^*M_{y}$.

\section{Quantum state certification}
\label{sec::upper_bound}
In quantum state certification we are given access to $n$ identical copies $\rho^{\otimes n}$ of an unknown state $\rho$. Our goal is to decide whether $\rho$ is equal to a known reference state $\rho_0$ or whether is $\epsilon$-far away from $\rho_0$ in trace-norm by performing a locally-$\alpha$-gentle measurement $M = M_1 \otimes ... \otimes M_n$ on $\rho^{\otimes n}$. Based on the random outcomes of the measurement, we then calculate a classical test $\Delta$ deciding for one of our two possible hypotheses. Formally, we are considering the hypothesis testing task
\begin{equation}
\label{eqn::testing_problem}
    H_0: \rho = \rho_0 \hspace{20pt} vs. \hspace{20pt} H_1: \norm{\rho - \rho_0}_{Tr}> \epsilon
\end{equation}
and the corresponding minimax testing error
\begin{equation*}
    P_e^* = \inf_{(M, \Delta)} \mathbb{P}_{\rho_0^{\otimes n}}^{R^M}\left( \Delta = 1 \right) + \sup_{\norm{\rho - \rho_0}_{Tr} > \epsilon} \mathbb{P}_{\rho^{\otimes n}}^{R^M}\left( \Delta = 0 \right),
\end{equation*}
where the infimum is taken over all locally-$\alpha$-gentle measurements $M$ and subsequent classical test functions $\Delta$. Crucially, once we have decided which measurement we perform, we are left with a classical testing task based on the random variables $R^{M_i} \sim \mathbb{P}_{\rho}^{R^{M_i}}$. The difficulty therefore lies in finding a measurement whose outcome distribution $p_{\rho}^{R^{M_i}}$ is closely related to - and most informative about - the quantum states themselves. In particular, we want $p_{\rho}^{R^{M_i}}$ to be very different from $p_{\rho_0}^{R^{M_i}}$ if $\rho$ is far away from $\rho_0$. We shall see that this is far from guaranteed even if we are not considering gentle measurements. Suppose that we chose the non-gentle basis measurement $M_i = M = (\ket{e_k}\bra{e_k})_{k = 1}^d$ for some orthonormal basis $(\ket{e_k})_{k = 1}^d$ of $\mathbb{C}^d$. For $\rho_0 = \rho_{mm} = \frac{1}{d} \mathbbm{1}$ being the maximally mixed state we have $p_{\rho_0}^{R^M}(k) = \frac{1}{d}$ for all $k \in \{1,...,d\}$, \emph{i.e.} the outcome distribution under $\rho_0$ is uniform. However, for the pure state $\rho = \ket{\psi}\bra{\psi}$ for
\begin{equation}
\label{eqn::pure_state_indistinguishable}
    \ket{\psi} = \sum_{k = 1}^d \frac{1}{\sqrt{d}} \ket{e_k}
\end{equation}
we have $p_{\rho}^{R^M}(k) = \frac{1}{d}$ as well even as $\norm{\rho - \rho_0}_{Tr} = 1 - 1/d$. As such, the outcome distributions of $M$ under $\rho$ and $\rho_0$ are identical meaning that this measurement cannot differentiate at all between $\rho$ and $\rho_0$. Note that this is true in the non-gentle case which is significantly easier than the gentle case. A way to circumvent this problem is by considering a complete set of mutually unbiased bases (MUBs) as was done in the first described quantum state certificatioin algorithm with fixed measurements by \citet{yu:LIPIcs.ITCS.2021.11}. MUBs are sets of orthonormal bases which are evenly spaced apart in the $d$-dimensional space $\mathbb{C}^d$. As such, there exists no state $\rho$ whose outcome distribution appears to be the same as the one of $\rho_0$ as there will always be one basis in which we can see a big difference in the distribution. 

In our case we must further deal with the additional constrain of gentleness. In general, measuring a state using a finite quantum measurement results in a multinomial distribution of the outcomes. Using the relation between gentleness and differential privacy it is natural to consider measurements that mirror the behavior of the optimal privacy kernels for multinomials like the RAPPOR mechanism \citet{acharya_inference_2021}. A measurement than gentle-izes a singular basis measurement runs into the same problems as the non-gentle version. There are pure states that have the same outcome distribution as the maximally mixed state. Such a measurement could never distinguish between these two states, which is only natural as its non-gentle counterpart cannot do so either. Another possible measurement would be one where we consider the optimal non-gentle measurement for state certification and gentle-ized it as a whole. However, such an optimal measurement, consisting of a complete set of MUBs, has $d(d+1) = O(d^2)$ measurement operators. The gentle version of such a measurement, that behaves like directly applying the RAPPOR mechanism on the outcomes, is however also suboptimal as it increases the factor of any subsequent test  by a factor of $d^2/\alpha^2$ compared to the non-gentle version. We see that constructing an optimal gentle measurement is more involved than applying the optimal privacy kernel to the optimal quantum measurement.

It turns out that the optimal way to perform gentle quantum state certification is by considering a complete set of MUBs but instead of gentle-izing the whole measurement, we gentle-ize each individual basis and combining them afterwards. The benefit of this way of measuring is that we retain the benefit of MUBs, which is that no state is simultaneously difficult to detect for every basis, while reducing the variance of the outcomes by a factor of $d$ as each basis only consists of $d$ elements.

\subsection{A locally gentle state certification algorithm}
Let us now demonstrate how to construct a locally-$\alpha$-gentle state certification algorithm. As we have already discussed, the construction of our measurement is based on mutually unbiased bases. We will assume that $d$ is such that there exists a complete set of $d+1$ mutually unbiased basis. This is always guaranteed when $d = p^q$ is a prime power, which includes most $d$-dimensional quantum systems of interest such as a system of $q$ joint qubits. The measurement we propose randomizes the basis-projections of the mutually unbiases bases in such a way that the output state remains close to the pre-measurement state. More precisely, let $D = d(d+1)$ and 
\begin{equation*}
    (\ket{v_m})_{m = 1,...,D} = \ket{e_k^{(b)}}_{\genfrac{}{}{0pt}{}{k = 1,...,d}{b = 1,...,d+1}}
\end{equation*}
be a complete set of mutually unbiased bases. Note that for every $m \in \{1,...,D\}$ there exist uniquely defined $k \in \{1,...,d\}$ and $b \in \{1,...,d+1\}$ such that $m = d(b-1) + k$. Now for every base $b$ and for every $z \in \{0,1\}^d$ we define the operators
\begin{equation}
\label{eqn::measurement_definion}
    M_{\delta, z}^{(b)} = \left( \frac{e^{\frac{\delta}{2}}}{e^{\frac{\delta}{2}} + 1} \right)^{\frac{d}{2}} \sum_{k = 1}^d e^{-\frac{\delta}{4} \norm{z - e_k}_1} \ket{e_k^{(b)}}\bra{e_k^{(b)}} \hspace{20pt} \text{and} \hspace{20pt} E_{\delta, z}^{(b)} = {M_{\delta, z}^{(b)}}^2,
\end{equation}
where $\delta = 4 \arctanh(\alpha)$ and $e_k \in \{0,1\}^d$ being the $k$-th standard basis vector, that is the vector that is zero everywhere except in the $k$-th entry. Lemma~\ref{lem::gentleness_of_gentleized_2_design} assures, that $M_{\delta}^{(b)} = (M_{\delta, z}^{(b)})_{z \in \{0,1\}^d}$ is in fact an $\alpha$-gentle quantum measurement. 

\begin{lemma}
\label{lem::gentleness_of_gentleized_2_design}
    Let $M_{\delta, z}^{(b)}$ be as in~\eqref{eqn::measurement_definion}. Then $M_{\delta}^{(b)} = (M_{\delta, z}^{(b)})_{z \in \{0,1\}^d}$ is an $\alpha$-gentle measurement on $\mathcal{S}(\mathbb{C}^d)$.
\end{lemma}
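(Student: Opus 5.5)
We need two things: that $M_\delta^{(b)}$ is a valid measurement, i.e.\ $\sum_z E_{\delta,z}^{(b)} = \mathbbm{1}$, and that it is $\alpha$-gentle. For both we fix $b$ and work in the orthonormal basis $(\ket{e_k^{(b)}})_k$. All operators $M_{\delta,z}^{(b)}$ are diagonal in this basis with strictly positive entries, so they are positive definite and self-adjoint. Gentleness will then follow from Theorem~\ref{thm::duality_gentleness_privacy} i): we show the POVM is $\delta$-qDP and use the implementation $|M_y| = M_y$.

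\textbf{Step 1 (resolution of identity).} Write $c = e^{\delta/2}/(e^{\delta/2}+1)$. Then
\[
E_{\delta,z}^{(b)} = c^d \sum_k e^{-\frac{\delta}{2}\norm{z-e_k}_1}\ket{e_k^{(b)}}\bra{e_k^{(b)}},
\]
and it suffices to show that for each fixed $k$
\[
c^d\sum_{z\in\{0,1\}^d} e^{-\frac{\delta}{2}\norm{z-e_k}_1}=1.
\]
The Hamming distance $\norm{z-e_k}_1$ is a sum over coordinates, so the sum factorizes into $\prod_{j=1}^d(1+e^{-\delta/2})$. Since $c(1+e^{-\delta/2}) = 1$, the product equals $c^{-d}$, which gives the claim. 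This is exactly the RAPPOR channel: each bit of $e_k$ is flipped independently with probability $1/(1+e^{\delta/2})$.

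\textbf{Step 2 (privacy).} By the criterion stated after Proposition~\ref{prop::equalities_qDP}, it suffices to check $\lambda_{\max}(E_{\delta,z}^{(b)})\le e^{\delta}\lambda_{\min}(E_{\delta,z}^{(b)})$. The eigenvalues are $c^d e^{-\frac{\delta}{2}\norm{z-e_k}_1}$ for $k=1,\dots,d$. Changing $k$ to $k'$ alters the Hamming distance $\norm{z-e_k}_1$ by at most $2$, since $e_k$ and $e_{k'}$ differ in two coordinates. Hence the eigenvalue ratio is at most $e^{\delta}$, and the measurement is $\delta$-qDP on $\mathcal{S}(\mathbb{C}^d)$.

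\textbf{Step 3 (gentleness).} Theorem~\ref{thm::duality_gentleness_privacy} i), applied with $n=1$ and the positive implementation $M_{\delta,z}^{(b)}=(E_{\delta,z}^{(b)})^{1/2}$, gives $\alpha'$-gentleness with $\alpha'=\tanh(\delta/4)$. With $\delta = 4\arctanh(\alpha)$ this is exactly $\alpha$.

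\textbf{Main obstacle.} The main obstacle is really bookkeeping. One must make sure that the implementation used in Theorem~\ref{thm::duality_gentleness_privacy} i) coincides with $M_{\delta,z}^{(b)}$ as defined. It does, because $c^{d/2}e^{-\frac{\delta}{4}\norm{z-e_k}_1}$ is the square root of the corresponding eigenvalue of $E_{\delta,z}^{(b)}$. One must also confirm that the eigenvalue-ratio criterion applies to POVMs with $2^d$ outcomes, which is the case. If one preferred a direct argument, one could instead verify gentleness on pure states (Proposition~\ref{prop::pure_states_imply_all_states}). This would bound $1-|\langle\psi|M\psi\rangle|^2/\norm{M\psi}^2$ for a positive $M$ with condition number $e^{\delta/2}$, via the Kantorovich inequality, yielding $\norm{\rho-\rho_{M\to y}}_{Tr}\le (e^{\delta/2}-1)/(e^{\delta/2}+1)$.
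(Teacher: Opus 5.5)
Your proposal is correct and follows essentially the same route as the paper. Both arguments verify the completeness relation, show that the eigenvalue ratio of $E_{\delta,z}^{(b)}$ is at most $e^{\delta}$ because $\norm{z-e_k}_1\in\{\norm{z}_1-1,\norm{z}_1+1\}$, and then obtain $\alpha$-gentleness of the positive implementation $M_{\delta,z}^{(b)}=\sqrt{E_{\delta,z}^{(b)}}$ from Theorem~\ref{thm::duality_gentleness_privacy}~i) with $\tanh(\delta/4)=\alpha$; the only differences are cosmetic, such as your coordinatewise factorization of the completeness sum in place of the paper's split by $z_k$ with binomial sums.
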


\begin{proof}
    We use Theorem~\ref{thm::duality_gentleness_privacy} to show the results. Since each $E_{\delta, z}^{(b)}$ is a sum of positive operators, $E_{\delta, z}^{(b)}$ is itself positive and we can take the operator square root $M_{\delta, z}^{(b)} = \sqrt{E_{\delta, z}^{(b)}}$. In order to assure that $M_{\delta}^{(b)} = (M_{\delta, z}^{(b)})_{z \in \{0,1\}^d}$ indeed defines a quantum measurement we must assure that it fulfills the completeness relation. We have
    \begin{align*}
        \sum_{z \in \{0,1\}^d} E_{\delta, z}^{(b)} &=  \left( \frac{e^{\frac{\delta}{2}}}{e^{\frac{\delta}{2}} + 1} \right)^d \sum_{k = 1}^d \ket{e_k^{(b)}}\bra{e_k^{(b)}} \sum_{z \in \{0,1\}^d} e^{-\frac{\delta}{2} \norm{z - e_k}_1} 
        \\
        &= \left( \frac{e^{\frac{\delta}{2}}}{e^{\frac{\delta}{2}} + 1} \right)^d \sum_{k = 1}^d \ket{e_k^{(b)}}\bra{e_k^{(b)}} \left[ \sum_{z | z_k = 1} e^{-\frac{\delta}{2} \norm{z - e_k}_1} + \sum_{z | z_k = 0} e^{-\frac{\delta}{2} \norm{z - e_k}_1}\right]
        \\
        &= \left( \frac{e^{\frac{\delta}{2}}}{e^{\frac{\delta}{2}} + 1} \right)^d \sum_{k = 1}^d \ket{e_k^{(b)}}\bra{e_k^{(b)}} \left[ \sum_{j = 0}^{d-1} \binom{d-1}{j} e^{-\frac{\delta}{2} j } + \sum_{j = 0}^{d-1} \binom{d-1}{j} e^{-\frac{\delta}{2} (j+1)} \right]
        \\
        &= \left( \frac{e^{\frac{\delta}{2}}}{e^{\frac{\delta}{2}} + 1} \right)^d \sum_{m = 1}^d \ket{e_k^{(b)}}\bra{e_k^{(b)}} \left( (e^{-\frac{\delta}{2}} + 1)^{d-1} + (e^{-\frac{\delta}{2}} +1)^{d-1} e^{-\frac{\delta}{2}} \right)
        \\
        &= \frac{e^{\frac{\delta}{2}}}{e^{\frac{\delta}{2}} + 1} \left( 1 + e^{-\frac{\delta}{2}} \right) \mathbbm{1}
        = \mathbbm{1}.
    \end{align*}
    Therefore, taking $M_{\delta, z}^{(b)} = \sqrt{E_{\delta, z}^{(b)}}$, the operators $M_{\delta}^{(b)} = (M_{\delta, z})_{z \in \{0,1\}^d}$ do indeed define a quantum measurement. Now, let $z \in \{0,1\}^d$ with $\norm{z} = l$ be fixed. We have
    \begin{equation*}
        e^{-\frac{\delta}{2} (l+1)} \leq e^{-\frac{\delta}{2} \norm{z - e_k}} \leq e^{-\frac{\delta}{2} (l-1)}
    \end{equation*}
    for all $l$ from which we obtain
    \begin{equation*}
        \frac{\mathbbm{P}_{\ket{\psi}}(R^{M_{\delta}^{(b)}} = z)}{\mathbbm{P}_{\ket{\psi'}}(R^{M_{\delta}^{(b)}} = z)} \leq e^{\frac{\delta}{2}((l+1)-(l-1))} \leq e^{\delta}
    \end{equation*}
    for any $\ket{\psi}, \ket{\psi'}$ which. Together with Theorem~\ref{thm::duality_gentleness_privacy} this shows the $\alpha$-gentleness of $M_{\delta}^{(b)}$.
\end{proof}

For each basis $b$, the measurement is a gentle-ized version of measuring in the basis $|e_k^{(b)}\rangle$. When measuring each of these measurements $n_b$ times we obtain $n_b$ independent and identically distributed random variables $R^{M_{\delta}^{(b)}}_i \in \{0,1\}^d$. If the non-gentle basis measurement had measured outcome $k$, outcome of the gentle version will be a vector $z \in \{0,1\}^d$ such that $z_k = 1$ and $z_l = 0$ for $l \neq k$ with high probability. In fact, for $\delta \to \infty$, this measurement coincides with its non-gentle counterpart when we identify $k$ with the vector $z = e_k$. Let us now describe how we can construct a state certification test based on the outcomes of the measurements $M_{\delta}^{(b)}$. First, we define
\begin{equation*}
    N_k^{(b)} = \sum_{i = 1}^{n_b} \mathbbm{1}_{\left\{ \left(R^{M_{\delta}^{(b)}}_i\right)_k = 1 \right\}} \in \mathbb{N}_0^d
\end{equation*}
which counts the occurrences of each entry being measured. We know that, correcting for some bias, this value is close to the non-gentle measurement outcome
\begin{equation}
\label{eqn::defn_p_rho}
    p_{\rho}^{(b)}(k) = \bra{e_k^{(b)}}\rho \ket{e_k^{(b)}}.
\end{equation}
In order to now test whether or not $\rho$ is equal to the reference state $\rho_0$, we can calculate the non-gentle outcome distribution $p_{\rho_0}^{(b)}$ as in~\eqref{eqn::defn_p_rho} and
compare it to the calculated values $N_k^{(b)}$ in the following way. Let $\beta = (e^{\frac{\delta}{2}} + 1)^{-1}$ and define
\begin{equation*}
    T_{n_b}^{(b)} = \sum_{k = 1}^d \left( (N_k^{(b)} - (n_b-1) (\alpha p_{\rho_0}^{(b)}(k) + \beta))^2 - N_k^{(b)} + (n_b-1)(\alpha p_{\rho_0}^{(b)}(k) + \beta)^2  \right).
\end{equation*}
$T_{n_b}^{(b)}$ will be small if $\rho = \rho_0$. If $\rho$ is far away from $\rho_0$, typically, the value of $T_{n_b}^{(b)}$ will be large. However, in the case $\rho$ is such that the basis $|e_k^{(b)}\rangle$ cannot distinguish well between $\rho$ and $\rho_0$, like we have described in~\eqref{eqn::pure_state_indistinguishable}, $T_{n_b}^{(b)}$ will also be small. In order to assure that we can always distinguish $\rho$ and $\rho_0$, we perform the measurements in every basis $b$, resulting in $d+1$ independent random variables $T_{n_b}^{(b)}$. We can then set $n_b = n/(d+1)$ for each $b$, meaning we measure each basis an equal amount of time, and define
\begin{equation*}
    T_n = \sum_{b = 1}^{d+1} T_{n_b}^{(b)}.
\end{equation*}
If $\rho = \rho_0$, $T_n$ will be small and if $\norm{\rho - \rho_0}_{Tr} > \epsilon$, at least one of the $T_{n_b}^{(b)}$ will be large, which results in a large $T_n$. We then define the test
\begin{equation}
\label{eqn::test_definition}
    \Delta_n = \mathbbm{1}_{\{T_n > c\}} \hspace{20pt} \text{ for } \hspace{20pt} c = \frac{1}{2} \frac{n_b(n_b-1)\alpha^2 \epsilon^2}{d},
\end{equation}
for which we have the following result.
\begin{theorem}
\label{thm::upper_bound}
    Consider the testing task~\eqref{eqn::testing_problem} and assume $n = O( \frac{d^3}{\epsilon^2 \alpha^2} )$. Then, for the test $\Delta_n$ defined in~\eqref{eqn::test_definition} it holds
    \begin{align}
        \mathbb{P}_{0}(\Delta_n = 1) + \sup_{\rho \,:\, \norm{\rho - \rho_0} > \epsilon} \mathbb{P}_{\rho}(\Delta_n = 0) \leq \frac{1}{3}.
    \end{align}
\end{theorem}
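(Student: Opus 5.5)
The plan is a second-moment (Chebyshev) argument for the U-statistic $T_n$, carried out basis by basis and then combined across the $d+1$ independent bases through the 2-design identity~\eqref{eqn::property_2_design}.

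\textbf{Marginal law of a coordinate.} For basis $b$ and coordinate $k$, I would first compute the law of $(R_i^{M_\delta^{(b)}})_k$. By the product structure of $E^{(b)}_{\delta,z}$, conditional on the underlying basis outcome $j$ (which has probability $p^{(b)}_\rho(j)$), the coordinates $z_l$ are independent Bernoulli variables. Coordinate $z_j$ equals $1$ with probability $e^{\delta/2}/(e^{\delta/2}+1)=1-\beta$, and every other coordinate equals $1$ with probability $\beta$. Hence
\begin{equation*}
(R_i)_k\sim \mathrm{Bernoulli}(q_k), \qquad q_k=\alpha' p^{(b)}_\rho(k)+\beta, \qquad \alpha'=1-2\beta=\tanh(\delta/4)=\alpha .
\end{equation*}
Write $q^0_k=\alpha p^{(b)}_{\rho_0}(k)+\beta$. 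I would read the centring in $T^{(b)}_{n_b}$ as the standard unbiased statistic, using $n_b$ or $n_b-1$ as appropriate. Since $N_k\sim\mathrm{Bin}(n_b,q_k)$, this gives $\mathbb{E}T^{(b)}_{n_b}=n_b(n_b-1)\sum_k (q_k-q^0_k)^2=n_b(n_b-1)\alpha^2\|p^{(b)}_\rho-p^{(b)}_{\rho_0}\|_2^2$.

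\textbf{Summing the means over bases.} Summing over $b$ and applying~\eqref{eqn::property_2_design} with $M=\rho-\rho_0$ (which is traceless) gives $\sum_b\|p^{(b)}_\rho-p^{(b)}_{\rho_0}\|_2^2=\|\rho-\rho_0\|_F^2$. Combining this with~\eqref{eqn::Schatten_norm_inequalities} (rank at most $d$), under $H_1$ we get $\|\rho-\rho_0\|_F^2\ge 4\epsilon^2/d$. Therefore $\mathbb{E}_0T_n=0$ and $\mathbb{E}_\rho T_n\ge 4n_b(n_b-1)\alpha^2\epsilon^2/d=8c$. This separates the means by a constant factor relative to the threshold $c$.

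\textbf{Variance bound.} This is the main obstacle. Within one basis the coordinates of $R_i$ are \emph{not} independent, because they are coupled through the hidden outcome $j$. So the variance of $T^{(b)}$ is not simply a sum over $k$. I would write $T^{(b)}=\sum_{i\neq i'}\langle X_i-q^0,X_{i'}-q^0\rangle$ with $X_i=R_i\in\{0,1\}^d$ and use the Hoeffding decomposition of this U-statistic:
\begin{equation*}
\Var T^{(b)}\lesssim n_b^2\|\Sigma\|_F^2+n_b^3\,(q-q^0)^\top\Sigma(q-q^0), \qquad \Sigma=\mathrm{Cov}(X_1).
\end{equation*}
The key computation is $\Sigma=\mathrm{diag}(q_k(1-q_k))-\alpha^2 p p^\top+\alpha^2\mathrm{diag}(p)-\alpha^2\mathrm{diag}(p)^2$, which gives $\|\Sigma\|_F^2\lesssim d$ and $\|\Sigma\|_{op}\lesssim 1$, because $\beta\le 1/2$ and $\sum_k p_k=1$. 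Hence $\Var T^{(b)}\lesssim n_b^2 d+n_b^3\alpha^2\|p_\rho-p_{\rho_0}\|^2$. Summing over the $d+1$ independent bases yields
\begin{equation*}
\Var T_n\lesssim n_b^2 d^2+n_b^3\alpha^2\|\rho-\rho_0\|_F^2 .
\end{equation*}

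\textbf{Chebyshev under each hypothesis.} Under $H_0$, Chebyshev gives $\mathbb{P}_0(T_n>c)\le \Var_0 T_n/c^2\lesssim d^2 n_b^2\, d^2/(n_b^4\alpha^4\epsilon^4)=d^4/(n_b^2\alpha^4\epsilon^4)$. This is at most $1/6$ once $n_b\gtrsim d^2/(\alpha^2\epsilon^2)$, that is, $n=(d+1)n_b\gtrsim d^3/(\alpha^2\epsilon^2)$; this matches the claimed rate, read as "$n\ge C d^3/(\epsilon^2\alpha^2)$". Under $H_1$, with $\Delta:=n_b^2\alpha^2\|\rho-\rho_0\|_F^2\ge 8c$, the failure probability is at most $\Var T_n/(\mathbb{E}T_n-c)^2\lesssim (n_b^2d^2+n_b\Delta)/\Delta^2$. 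The first term is handled exactly as under $H_0$. The second term is $n_b/\Delta\lesssim d/(n_b\alpha^2\epsilon^2)$, which is small under the same sample-size condition. Choosing the constant $C$ large enough makes the total error at most $1/3$.
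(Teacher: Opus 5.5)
Your proposal is correct and follows the paper's route. In both, the outcome law of $M_\delta^{(b)}$ is RAPPOR applied to the basis distribution $p_\rho^{(b)}$, and the statistic has mean $n_b(n_b-1)\alpha^2\|p_\rho^{(b)}-p_{\rho_0}^{(b)}\|_2^2$ with per-basis variance of order $dn_b^2+n_b^3\alpha^2\|p_\rho^{(b)}-p_{\rho_0}^{(b)}\|_2^2$; these are summed over the $d+1$ independent bases with the 2-design identity, and Chebyshev is applied under each hypothesis. The only difference is that you derive the mean and variance yourself via the U-statistic and its Hoeffding decomposition, where the paper cites Lemmas III.3--III.4 of Acharya et al.

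There is one harmless algebra slip: the covariance is $\Sigma=\beta(1-\beta)\mathbbm{1}+\alpha^2\bigl(\mathrm{diag}(p)-pp^\top\bigr)$, equivalently $\mathrm{diag}(q_k(1-q_k))-\alpha^2pp^\top+\alpha^2\mathrm{diag}(p)^2$. Your expression's diagonal is off by $\alpha^2p_k(1-2p_k)$, but the bounds $\|\Sigma\|_F^2\lesssim d$ and $\|\Sigma\|_{op}\lesssim 1$ hold for the corrected $\Sigma$ just the same.
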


\begin{proof}
    In order to show the sample complexity of our algorithm, we relate the properties of the measurement~\eqref{eqn::measurement_definion} to the classical RAPPOR mechanism. The latter samples a $d$ dimensional vector based on an outcome $m \in \{1,...,d\}$ where it randomly flips the entries of the $m$-th basis vector $e_m$ with some probability according to~\eqref{eqn::privacy_kernel_multinomial}. To see this, let
    \begin{equation*}
        p_{\rho}^{(b)}(k) = \Tr\left[ \rho \ket{e_k^{(b)}}\bra{e_k^{(b)}} \right] = \bra{e_k^{(b)}}\rho \ket{e_k^{(b)}}
    \end{equation*}
    be the probability of measuring the state $\rho$ non-gently using in the basis $(|e_k^{(b)}\rangle)_{k = 1}^d$ directly (that is, using the measurement $M^{(b)} = (|e_k^{(b)}\rangle\langle e_k^{(b)}|)_{k = 1}^d$). The crucial result of our proof is now that the outcome probabilities of the measurement~\eqref{eqn::measurement_definion} are equal to the outcomes of applying the RAPPOR mechanism to the probability distribution $p_{\rho}^{(b)}$. To see this equality, consider the privatization mechanism described by, where, depending on the outputs of $R^{M^{(b)}}$, we independently draw $d$ elements $z_j \in \{0,1\}$ according to
    \begin{align}
        Q_j(Z_j = z_j | R^{M^{(b)}} = k)
        &= \frac{1}{e^{\frac{\delta}{2}}+1} \begin{cases}
            z_j e^{\frac{\delta}{2}} + (1-z_j) \hspace{10pt}& j = k
            \\
            (1-z_j) e^{\frac{\delta}{2}} + z_j \hspace{10pt}& j \neq k
        \end{cases}.\label{eqn::privacy_kernel_multinomial}
    \end{align}
    We let $Z = (Z_1,...,Z_d)$. Then
    \begin{align*}
        Q(Z = z |R^{M^{(b)}} = k) &= \prod_{j = 1}^d Q_j(Z_j = z_j | R^{M^{(b)}} = k)
        \\
        &= \left( \frac{1}{e^{\frac{\delta}{2}} +1} \right)^d \hspace*{-3pt} Q_k(Z_k = z_k | R^{M^{(b)}} = k) \prod_{j \neq k}^d Q_j(Z_j = z_j | R^{M^{(b)}} = k) 
        \\
        &= \left( \frac{1}{e^{\frac{\delta}{2}} +1} \right)^d  \hspace*{-3pt} \left( e^{\frac{\delta}{2}} \right)^{(z - e_k)_k} \prod_{j \neq k} \left( e^{\frac{\delta}{2}} \right)^{(z - e_k)_j}
        \hspace*{-5pt} = \left( \frac{e^{\frac{\delta}{2}}}{e^{\frac{\delta}{2}} +1} \right)^d \hspace*{-5pt} e^{-\frac{\delta}{2} \norm{z - e_k}_1}.
    \end{align*}
    This shows
    \begin{align*}
        \mathbb{P}_{\rho}(Z = z) &= \sum_{m = 1}^d Q(Z = z | R^{M^{(b)}} = k) \mathbb{P}_{\rho}(R^{M^{(b)}} = k)
        \\
        &= \sum_{k = 1}^d \left( \frac{e^{\frac{\delta}{2}}}{e^{\frac{\delta}{2}} +1} \right)^d e^{-\frac{\delta}{2} \norm{z - e_k}_1} \Tr\left[ \rho \ket{e_k^{(b)}}\bra{e_k^{(b)}} \right]
        \\
        &= \Tr\left[ \left( \frac{e^{\frac{\delta}{2}}}{e^{\frac{\delta}{2}} +1} \right)^d \sum_{k = 1}^d e^{-\frac{\delta}{2} \norm{z - e_k}_1} \ket{e_k^{(b)}}\bra{e_k^{(b)}} \rho \right] = \mathbb{P}_{\rho}(M_{\delta}^{(b)} = z).
    \end{align*}
    Having established this equality allows us to now apply results from differential privacy. Any statistic based on the results of our gentle measurement~\eqref{eqn::measurement_definion} has the same properties as applied to the privatized version of the outcome distribution of the non-gentle basis measurement. In particular, applying \citet{acharya_inference_2021} Lemma III.3 and III.4 to $T_n$, we obtain
    \begin{equation*}
        \mathbb{E}_{\rho}\left[T_n^{(b)}\right] = n_b (n_b-1) \alpha^2 \norm{p_{\rho}^{(b)} - p_{\rho_0}^{(b)}}_2^2 
    \end{equation*}
    and 
    \begin{equation*}
        \Var_{\rho}\left[T_n^{(b)}\right] \leq 2 d n_b^2 + 5 n_b^3 \alpha^2 \norm{p_{\rho}^{(b)} - p_{\rho_0}^{(b)}}_2^2 = 2dn_b^2 + 5n_b\mathbb{E}_{\rho}\left[T_n^{(b)}\right].
    \end{equation*}
    With these equations we can calculate
    \begin{align*}
        \mathbb{E}_{\rho}\left[ T_n \right] = \sum_{b = 1}^{d+1} \mathbb{E}_{\rho}\left[ T_n^{(b)} \right] &= n_b (n_b-1) \alpha^2 \sum_{b = 1}^{d+1} \norm{p_{\rho}^{(b)} - p_{\rho_0}^{(b)}}_2^2 
        \\
        &= n_b (n_b - 1) \alpha^2 \sum_{b = 1}^{d+1} \sum_{k = 1}^d \bra{e_k^{(b)}}(\rho - \rho_0)\ket{e_k^{(b)}}^2
        \\
        &= n_b (n_b - 1) \alpha^2 \sum_{m = 1}^D \bra{v_m}(\rho - \rho_0)\ket{v_m}^2 = n_b (n_b-1) \alpha^2 \norm{\rho - \rho_0}_F^2,
    \end{align*}
    where we used the $2$-design property~\eqref{eqn::property_2_design} of the complete set of mutually unbiased bases
    \begin{equation*}
        \left(\ket{e_k^{(b)}}\right)_{\genfrac{}{}{0pt}{}{k = 1,...,d}{b = 1,...,d+1}} = \left( \ket{v_m} \right)_{m = 1,...,D}.
    \end{equation*}
    The same argument allows for the calculation of the variance as
    \begin{equation*}
        \Var_{\rho}\left[ T_n \right] = \sum_{b = 1}^{d+1} \Var_{\rho}\left[T_n^{(b)}\right] \leq 2 d(d+1) n_b^2 + 5n_b^3 \alpha^2 \norm{\rho - \rho_0}_F^2 \leq 2d(d+1) n_b^2 + 5n_b \mathbb{E}_{\rho}\left[ T_n \right],
    \end{equation*}
    Now, under the null hypothesis we have $\mathbb{E}_{\rho}[T_n] = 0$ and therefore
    \begin{align*}
        \mathbb{P}_{\rho_0}\left( \Delta_n = 1 \right) = \mathbb{P}_{\rho_0}\left( T_n > c \right) \leq \frac{1}{c^2} \Var_{\rho_0}\left[T_n \right] \leq \frac{8r^2}{n_b^4 \alpha^4 \epsilon^4} d^2 n_b^2 = \frac{8 d^6}{n^2 \alpha^4 \epsilon^4},
    \end{align*}
    where we note that $n_b = n/(d+1)$. Under the alternative, using inequality~\eqref{eqn::Schatten_norm_inequalities}, we have 
    $$
    \norm{\rho - \rho_0}_F^2 \geq \frac{1}{d} \norm{\rho - \rho_0}_{Tr}^2 \geq \frac{\epsilon^2}d$$
    and with $n_b = n/(d+1)$ we therefore have
    \begin{align*}
        \mathbb{P}_{\rho}\left( \Delta_n = 0 \right) 
        \leq \mathbb{P}_{\rho}\left( T_n \leq \frac{1}{2} \mathbb{E}_{\rho}\left[ T_n \right] \right) 
        &\leq \frac{2d^2n_b^2 d^2}{n_b^4 \alpha^4 \epsilon^4} + 5 \frac{n_b d}{n_b^2 \alpha^2 \epsilon^2}
        \leq \frac{2 d^6}{n^2 \alpha^4 \epsilon^4} + 5 \frac{d^2}{n \alpha^2 \epsilon^2}.
    \end{align*}
    For the sum of error we then have
    \begin{align*}
        \mathbb{P}_{\rho_0}\left( \Delta_n = 1 \right) + \sup_{\norm{\rho - \rho_0}_{Tr} > \epsilon} \mathbb{P}_{\rho}\left( \Delta_n = 0 \right) &\leq \frac{8 d^6}{n^2 \alpha^4 \epsilon^4} +  \frac{2 d^6}{n^2 \alpha^4 \epsilon^4} + 5 \frac{d^2}{n \alpha^2 \epsilon^2} = 10 \frac{d^6}{n^2 \alpha^4 \epsilon^2} + 5 \frac{d^2}{n\alpha^2 \epsilon^2}.
    \end{align*}
    We see that the last term is smaller than $\frac{1}{3}$ for $n = O(\frac{d^3}{\epsilon^2 \alpha^2})$.
\end{proof}

\begin{remark}
    When we additionally know that $\rho_0$ and $\rho$ are rank $r$ quantum states we can make use of the inequality $\norm{\rho - \rho_0}_F^2 \geq \norm{\rho - \rho_0}_{Tr}^2 /(\max\{2r,d\})\geq \epsilon^2/(\max\{2r,d\})$ to show the improved sample complexity $n = O(\frac{d^2r}{\epsilon^2 \alpha^2})$.
\end{remark}

\section{Optimality of the algorithm}
Let us now demonstrate that the algorithm in Section~\ref{sec::upper_bound} is in fact sample optimal for locally-$\alpha$-gentle quantum state certification. 

\begin{theorem}
\label{thm::lower_bound}
Consider the testing task~\eqref{eqn::testing_problem} for $\rho_0 = \frac{1}{d}\mathbbm{1}$ and assume $n = O(\frac{d^3}{\epsilon^2 \alpha^2})$. Then, for the minimal testing error it holds
\begin{equation}
\label{eqn::test_error}
    \inf_{(M, \Delta)} \mathbb{P}_{\rho_0^{\otimes n}}^{R^{M}}\left( \Delta = 1 \right) + \sup_{\rho: \norm{\rho - \rho_0}_{Tr} > \epsilon} \mathbb{P}_{\rho^{\otimes n}}^{R^{M}}\left( \Delta = 0 \right) \geq \frac{1}{3}.
\end{equation}
Here the infimum is taken over over all locally-$\alpha$-gentle measurements $M$ and subsequent tests $\Delta$.
\end{theorem}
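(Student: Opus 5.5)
We follow the lower-bound strategy of \citet{liu2024rolesharedrandomnessquantum}, adapted to full-rank POVMs via the super-operator $\mathcal{H}$ from~\eqref{eqn::defn_linear_superoperator}. Fix an arbitrary locally-$\alpha$-gentle product measurement $M = M_1\otimes\dots\otimes M_n$ with POVM elements $E^{(i)}_y = M^{(i)*}_yM^{(i)}_y$. By Lemma~\ref{lem::improved_constant_positivity}, or by Theorem~\ref{thm::duality_gentleness_privacy}\,ii) up to constants, each $M_i$ is $\delta$-qDP with $\delta = O(\alpha)$ for small $\alpha$. By the remark after Definition~\ref{defn::quantum_differential_privacy}, this means $\lambda_{\max}(E^{(i)}_y)\le e^{\delta}\lambda_{\min}(E^{(i)}_y)$.

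\textbf{Alternative and reduction.} We take a random alternative $\rho_\nu = \rho_0 + \frac{c\epsilon}{d}\,U_\nu$. Here $U_\nu = \sum_{j=1}^{d^2-1}\nu_j V_j$ with Rademacher signs $\nu_j$, and $(V_j)$ is an orthonormal (Frobenius) basis of traceless Hermitian matrices. This should be normalised so that, with high probability, $\norm{U_\nu}_{op}\lesssim \sqrt d$ and $\norm{U_\nu}_{Tr}\gtrsim d$ (standard matrix concentration, as in Liu et al.). Then $\rho_\nu$ is a state and is $\epsilon$-far from $\rho_0$; the low-probability failure event is handled by conditioning the prior, at a constant cost. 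The test error is bounded below by $1-\norm{\mathbb{E}_\nu\mathbb{P}_{\rho_\nu}^{\otimes}-\mathbb{P}_{\rho_0}^{\otimes}}_{TV}\ge 1-\sqrt{d_{\chi^2}}$. We then apply Theorem~\ref{thm::chi_squared_fluctuation}. Since $p_0(y)=\Tr[E_y]/d$ and $\delta_\nu(y)=\frac{c\epsilon}{d}\cdot d\,\Tr[U_\nu E_y]/\Tr[E_y]$, we get
\[
H_i(\nu_1,\nu_2)=\frac{c^2\epsilon^2}{d}\,\langle U_{\nu_1},\mathcal{H}_i(U_{\nu_2})\rangle .
\]
Self-adjointness and positivity of $\mathcal{H}_i$ follow from this formula directly. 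Using $1+x\le e^x$ and writing $\bar{\mathcal H}=\sum_i\mathcal H_i$, we need to bound $\mathbb{E}\exp\bigl(\frac{c^2\epsilon^2}{d}\,\nu_1^\top A\,\nu_2\bigr)$, where $A$ is the matrix of $\bar{\mathcal H}$ in the basis $(V_j)$.

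\textbf{Key spectral bound.} This is the step we expect to be the main obstacle. We need to show that for a $\delta$-qDP POVM the super-operator $\mathcal H_i$, restricted to traceless matrices, satisfies $\norm{\mathcal H_i}_{Tr}\lesssim \delta^2 = O(\alpha^2)$ and $\norm{\mathcal H_i}_F^2\lesssim \alpha^2\cdot O(1)$. Compare this to the non-gentle rank-one case, where the trace is $O(d)$. To prove it, write $E_y = \frac{\Tr E_y}{d}(\mathbbm 1 + \Delta_y)$ with $\norm{\Delta_y}_{op}\le e^\delta-1 = O(\alpha)$. For traceless $A$ this gives $\Tr[AE_y]/\Tr[E_y] = \Tr[A\Delta_y]/d$. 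Hence
\[
\Tr_{\text{traceless}}\mathcal H_i=\sum_y\frac{\Tr E_y}{d^2}\norm{\Delta_y^{0}}_F^2\le \sum_y \frac{\Tr E_y}{d^2}\, d\,O(\alpha^2)=O(\alpha^2),
\]
using $\sum_y\Tr E_y=d$. This constant trace (instead of $d$) is exactly the factor-$d/\alpha^2$ gain. The operator norm is also small, since $\mathcal{H}_i\preceq$ (its trace).

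\textbf{Concluding.} With these bounds, we follow the Gaussian/Rademacher chaos argument of \citet{liu2024rolesharedrandomnessquantum}: decouple via Hanson–Wright or a moment-generating-function bound for bilinear Rademacher forms. This gives
\[
\log\mathbb E e^{\lambda\nu_1^\top A\nu_2}\lesssim \lambda^2\norm{A}_F^2,\qquad \text{provided } \lambda\norm{A}_{op}\lesssim 1,
\]
with $\lambda = c^2\epsilon^2/d$. Since $A$ is a sum of $n$ positive semidefinite terms, each of trace $O(\alpha^2)$, we have $\norm{A}_F^2\le \norm{A}_{op}\Tr A$ and $\norm{A}_F^2\le n\,\Tr(A)\max_i\norm{\mathcal H_i}_{op}$. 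The crude bound $\norm{A}_F\le \Tr A\le n\alpha^2$ is insufficient. Instead we use that the prior lives in a $(d^2-1)$-dimensional space, so the relevant quantity is $\frac{\epsilon^4}{d^2}\norm{A}_F^2$. Exploiting that $\mathcal H_i$ is nearly isotropic in trace but has rank at most about $d\cdot|\mathcal Y|$, we aim for $\norm{A}_F^2\lesssim n^2\alpha^4/d^2$ by averaging over a random unitary rotation of the prior (Haar twirling). This gives $\frac{\epsilon^4}{d^2}\cdot\frac{n^2\alpha^4}{d^4}$, so $d_{\chi^2}\lesssim n^2\alpha^4\epsilon^4/d^6$, which is small for $n\le c'd^3/(\alpha^2\epsilon^2)$. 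The test error is then at least $1/3$. The delicate part is making this twirling/isotropy argument rigorous, i.e.\ showing $\mathbb E_U\norm{\mathcal H\circ\mathrm{Ad}_U}_F^2\approx(\Tr\mathcal H)^2/d^2$. If direct bounds on $\norm{\mathcal{H}_i}_F$ fail, this is where the proof would need the most work.
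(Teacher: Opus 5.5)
There is a genuine gap. Your reduction matches the paper's: the conditioned Rademacher prior, Theorem~\ref{thm::chi_squared_fluctuation}, the identity $H_i\propto\langle U_{\nu_1},\mathcal H_i(U_{\nu_2})\rangle$, and the per-copy trace bound $\Tr\mathcal H_i|_{\mathrm{traceless}}\le (e^\delta-1)^2=O(\alpha^2)$. But the step you flag as delicate fails, and it is exactly where the extra factor $d$ must come from.

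With a prior that is chosen independently of the measurement and spans the traceless space, the second-order term of the $\chi^2$ is of order $\lambda^2\norm{A}_F^2$. Here $\norm{A}_F$ is the Hilbert--Schmidt norm of $\bar{\mathcal H}$ compressed to the traceless subspace, which does not depend on the orthonormal basis. Since $X\mapsto UXU^*$ is an isometry of that subspace, $\norm{\mathcal H\circ\mathrm{Ad}_U}_F=\norm{\mathcal H}_F$ for every unitary $U$, so Haar averaging cannot produce $(\Tr\mathcal H)^2/d^2$. Mixing the prior over random $U$ does not help either: it only gives an isotropic covariance and the same second-order term. Nor is $\mathcal H_i$ isotropic in general. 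Take $d$ even, $P$ traceless with eigenvalues $\pm1$, $\eta\asymp\alpha$, and $E_\pm=\frac12(\mathbbm 1\pm\eta P)$; this has an $\alpha$-gentle implementation by Theorem~\ref{thm::duality_gentleness_privacy}. Then $\mathcal H(A)=\frac{\eta^2}{d}\Tr[AP]\,P$ for traceless $A$, a rank-one map with $\norm{\mathcal H}_F=\Tr\mathcal H=\eta^2$. Against this measurement an isotropic prior keeps $\chi^2$ small only for $n\lesssim d^2/(\epsilon^2\alpha^2)$. So your route can at best reproduce Corollary~\ref{thm::lower_bound_randomized}, not the $d^3$ rate.

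The missing idea is that, because $M$ is fixed, the alternatives may depend on $M$. Take $V_1,\dots,V_D$ with $D=d^2/2$ to be eigenvectors of $\bar{\mathcal H}$ (self-adjoint and positive on traceless Hermitian matrices) for its $D$ smallest eigenvalues $\mu_1\le\dots\le\mu_D$. Put $\rho_\nu=\rho_0+\frac{c\epsilon}{\sqrt{dD}}\sum_{i\le D}\nu_iV_i$. The bilinear form then becomes diagonal, $\langle\Delta_{\nu_1},\bar{\mathcal H}(\Delta_{\nu_2})\rangle\propto\sum_{i\le D}\mu_i\nu_{1,i}\nu_{2,i}$. Hence $\mathbb E\, e^{\lambda\sum_i\mu_i\nu_{1,i}\nu_{2,i}}=\prod_i\cosh(\lambda\mu_i)\le e^{\lambda^2\sum_i\mu_i^2/2}$, with no operator-norm condition or Hanson--Wright needed. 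Ordering plus your trace bound give, with your $\bar{\mathcal H}=\sum_i\mathcal H_i$, $\mu_D\le\frac{1}{d^2-D}\sum_{i\ge D}\mu_i\le\frac{2n(e^\delta-1)^2}{d^2}$. Therefore $\sum_{i\le D}\mu_i^2\le D\mu_D^2=O(n^2\alpha^4/d^2)$, which is precisely the bound you hoped to get from twirling, and it yields $\chi^2\lesssim e^{Cn^2\epsilon^4\alpha^4/d^6}-1$. Lemma~\ref{lem::probability_valid_states} applies to any orthonormal family of $D\ge d^2/2$ traceless Hermitian matrices, so these measurement-adapted alternatives are still valid states and $\epsilon$-far with high probability.

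Separately, fix your normalization. With $\rho_\nu=\rho_0+\frac{c\epsilon}{d}U_\nu$ you need $\norm{U_\nu}_{op}=O(1)$, i.e.\ $U_\nu=d^{-1/2}\sum_j\nu_jV_j$. With $\norm{U_\nu}_{op}\asymp\sqrt d$ the $\rho_\nu$ fail to be states once $\epsilon\gg d^{-1/2}$. The relevant exponent is then $\frac{\epsilon^4}{d^4}\norm{A}_F^2$. Currently your $\frac{\epsilon^4}{d^2}$ and the $d^4$ you use in place of $d^2$ cancel each other.
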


Let us first recall the following result from \citet{liu2024rolesharedrandomnessquantum} which we will prove again for the readers convenience.

\begin{lemma}[Proposition 3 in \citet{liu2024rolesharedrandomnessquantum}]
\label{lem::probability_valid_states}
    Let $d^2/2 \leq D \leq d^2-1$ and $\nu$ be drawn uniformly from $\mathcal{V} = \{-1,1\}^D$. Let $\Delta_{\nu}$ and $\sigma_{\nu}$ as in~\eqref{eqn::defn_alternative_states_2}. Then there exists a universal constant $c \leq 10\sqrt{2}$ such that for $\epsilon < 1/c^2$ we have
    \begin{equation*}
        \mathbb{P}_{\nu}\left( \norm{\Delta_{\nu}}_{op} \leq \frac{1}{d} \text{ and } \norm{\rho_{\nu} - \rho_{0}} \geq \epsilon \right) \geq 1 - 2e^{-d}.
    \end{equation*}
\end{lemma}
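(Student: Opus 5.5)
The plan is to treat the two events separately. The operator-norm event carries all the probability; the trace-distance event then follows deterministically. I assume that \eqref{eqn::defn_alternative_states_2} has the form used by \citet{liu2024rolesharedrandomnessquantum}. That is, $(V_i)_{i=1}^{D}$ is a family of traceless Hermitian matrices, orthonormal for $\langle A,B\rangle=\Tr[A^*B]$, and
\begin{equation*}
    \rho_\nu=\rho_0+\Delta_\nu,\qquad \Delta_\nu=\frac{c\,\epsilon}{d^{3/2}}\sum_{i=1}^{D}\nu_i V_i ,
\end{equation*}
with $c$ the universal constant to be fixed at the end. Since $\Tr[\Delta_\nu]=0$, the condition $\norm{\Delta_\nu}_{op}\le 1/d$ guarantees $\rho_\nu\succeq 0$, so $\rho_\nu$ is a valid state. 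If the normalisation in \eqref{eqn::defn_alternative_states_2} differs, only the constants below change.

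\textbf{Step 1 (operator norm, with probability $1-2e^{-d}$).} Set $X_\nu=\sum_i\nu_iV_i$. Since $X_\nu$ is Hermitian, $\norm{X_\nu}_{op}=\sup_{\|x\|=1}|\bra{x}X_\nu\ket{x}|$. For a fixed unit vector $x$, the quantity $\bra{x}X_\nu\ket{x}=\sum_i\nu_i\bra{x}V_i\ket{x}$ is a Rademacher sum. Its coefficients satisfy
\begin{equation*}
    \sum_i\bra{x}V_i\ket{x}^2=\sum_i\langle V_i,\ket{x}\bra{x}\rangle^2\le\norm{\ket{x}\bra{x}}_F^2=1
\end{equation*}
by Bessel's inequality, so Hoeffding gives $\mathbb{P}(|\bra{x}X_\nu\ket{x}|>t)\le 2e^{-t^2/2}$. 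Next take a $1/4$-net $\mathcal{N}$ of the complex unit sphere, viewed as the real sphere $S^{2d-1}$, with $|\mathcal{N}|\le 9^{2d}$. The standard net argument for quadratic forms gives $\norm{X_\nu}_{op}\le 2\max_{x\in\mathcal{N}}|\bra{x}X_\nu\ket{x}|$. A union bound then yields
\begin{equation*}
    \mathbb{P}\bigl(\norm{X_\nu}_{op}>2t\bigr)\le 2\cdot 81^{d}e^{-t^2/2}.
\end{equation*}
Choosing $t=C_0\sqrt d$ with $C_0^2/2\ge \log 81+1$ makes this at most $2e^{-d}$. On the complement, $\norm{\Delta_\nu}_{op}\le 2C_0c\,\epsilon/d$. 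This is $\le 1/d$ as soon as $\epsilon\le 1/(2C_0c)$, which is implied by $\epsilon<1/c^2$ once $c\ge 2C_0$.

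\textbf{Step 2 (trace distance, deterministic on that event).} By orthonormality, $\norm{\Delta_\nu}_F^2=c^2\epsilon^2 D/d^3\ge c^2\epsilon^2/(2d)$, using $D\ge d^2/2$. The interpolation bound $\norm{A}_F^2\le\norm{A}_{op}\,\Tr|A|$ then gives, on the event of Step 1,
\begin{equation*}
    \norm{\rho_\nu-\rho_0}_{Tr}=\tfrac12\Tr|\Delta_\nu|\ge\frac{\norm{\Delta_\nu}_F^2}{2\norm{\Delta_\nu}_{op}}\ge\frac{c^2\epsilon^2/(2d)}{2\cdot 2C_0c\,\epsilon/d}=\frac{c\,\epsilon}{8C_0}.
\end{equation*}
This is $\ge\epsilon$ once $c\ge 8C_0$.

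\textbf{Main obstacle and constants.} The main obstacle is getting the exponentially small failure probability $2e^{-d}$. The matrix Rademacher inequality would only give a factor $2d\,e^{-t^2/(2d)}$, which is not of the form $2e^{-d}$. That is why I use the net-plus-Hoeffding argument, where Bessel's inequality for the orthonormal family supplies unit variance uniformly in $x$. The last task is to track $C_0$. We have $C_0\approx\sqrt{2(\log 81+1)}\approx 3.3$, so $c=8C_0\approx 27$. This comfortably exceeds $2C_0$, but it also exceeds the stated bound $10\sqrt2\approx 14$. Meeting $c\le 10\sqrt2$ would therefore need a finer net or a sharper argument, e.g.\ refining the net argument or using a more direct quadratic-form comparison. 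This constant bookkeeping is secondary to the structure of the proof.
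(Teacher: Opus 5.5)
Your argument has the same two-step structure as the paper's proof. First, an operator-norm bound $\norm{W}_{op}\le K\sqrt d$ for $W=\sum_{i\le D}\nu_iV_i$ holds with probability at least $1-2e^{-d}$. Second, on that event the deterministic Hölder step $\norm{\Delta_\nu}_F^2\le\norm{\Delta_\nu}_{op}\Tr|\Delta_\nu|$ gives the trace-distance bound. The only difference is that the paper imports $K=10$ from Theorem 15 of \citet{liu2024rolesharedrandomnessquantum}. You instead prove the operator-norm bound directly (net, Hoeffding, Bessel). That proof is correct and gives the better constant $K=2C_0\approx 6.6$.

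The constant shortfall you flag is bookkeeping, not a gap in the method. The actual normalisation in \eqref{eqn::defn_alternative_states_2} is $\Delta_\nu=\frac{c\epsilon}{\sqrt{dD}}W$, not $\frac{c\epsilon}{d^{3/2}}W$. Hence $\norm{\Delta_\nu}_F^2=c^2\epsilon^2/d$ exactly, and $\norm{\Delta_\nu}_{op}\le Kc\epsilon/\sqrt D\le\sqrt2\,Kc\epsilon/d$. It follows that $\Tr|\Delta_\nu|\ge c\epsilon\sqrt D/(Kd)\ge c\epsilon/(\sqrt2K)$.

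The paper's proof ends with $\norm{\Delta_\nu}_1\ge\epsilon$, that is, without the factor $\frac12$. This requires only $c\ge\sqrt2K$, which also covers the operator-norm condition. That gives $c=10\sqrt2$ with $K=10$, and $c\approx 9.3$ with your $K$.

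If you keep the factor $\frac12$ from the paper's definition of $\norm{\cdot}_{Tr}$, as you did, you need $c\ge2\sqrt2K$. Note that the paper's own argument with $K=10$ would then need $c=20\sqrt2$. Your route still reaches the stated bound with a finer net. Take mesh $0.05$, so at most $41^{2d}$ points, with $t^2=2(2\log 41+1)$. This gives $K=t/0.9\approx4.56$ and hence $c\approx12.9\le10\sqrt2$.
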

\begin{proof}
    The Hölder inequality for matrices gives
    \begin{equation*}
        \norm{\Delta_{\nu}}_{F}^2 \leq \norm{\Delta_{\nu}}_{op} \norm{\Delta_{\nu}}_{1}.
    \end{equation*}
    Applying Theorem 15 in \citet{liu2024rolesharedrandomnessquantum} to $\Delta_{\nu} = \frac{c\epsilon}{\sqrt{Dd}} W$ with $\norm{\Delta_{\nu}}_F = \frac{c \epsilon}{\sqrt{d}}$ gives
    \begin{equation*}
        \mathbb{P}_{\nu}\left( \norm{\Delta_{\nu}}_{op} \leq \frac{c \epsilon}{\sqrt{Dd}} 10 \sqrt{d}\right) = \mathbb{P}_{\nu}\left( \norm{W}_{op} \leq 10 \sqrt{d} \right) \geq 1 - 2e^{-d}.
    \end{equation*}
    As such, we also have
    \begin{align*}
        \mathbb{P}_{\nu} \left( \norm{\Delta_{\nu}}_1 \geq \frac{c\epsilon \sqrt{D}}{10d} \right) \geq \mathbb{P}_{\nu}\left( \norm{\Delta_{\nu}}_{op} \leq \norm{\Delta_{\nu}}_F^2 \frac{10d}{c\epsilon\sqrt{D}} \right) = \mathbb{P}_{\nu}\left( \norm{W}_{op} \leq 10 \sqrt{d} \right) \geq 1 - 2e^{-d}.
    \end{align*}
    Since $D \geq \frac{d^2}{2}$, for $c = 10\sqrt{2}$, for all $\epsilon< \frac{1}{c^2}$ we have
    \begin{equation*}
        \mathbb{P}_{\nu}\left( \norm{\Delta_{\nu}}_{op} \leq \frac{1}{d} \text{ and } \norm{\rho_{\nu} - \rho_{0}} \geq \epsilon \right) \geq 1 - 2e^{-d}.
    \end{equation*}
\end{proof}

With the result of the Lemma~\ref{lem::probability_valid_states} we can now turn to the proof of Theorem~\ref{thm::lower_bound}.

\begin{proof}[Proof of Theorem~\ref{thm::lower_bound}]
     Suppose we are given a $n$-copies of the unknown state $\rho$, that is $\rho^{\otimes n}$ on which we perform a locally-$\alpha$-gentle measurement $M^{\otimes n}$. The error of any subsequent test in then given by
    \begin{equation*}
        \mathbb{P}_{\rho_0^{\otimes n}}^{R^{M^{\otimes n}}}\left( \Delta = 1 \right) + \sup_{\rho: \norm{\rho - \rho_0}_{Tr} > \epsilon} \mathbb{P}_{\rho^{\otimes n}}^{R^{M^{\otimes n}}}\left( \Delta = 0 \right)
    \end{equation*}
    We will reduce the supremum over the alternative to the maximum over a well suited finite set of alternatives. Let us denote by $\mathbb{H}_d$ the real vector space of $d \times d$ hermitian matrices with inner product given by $\langle A , B \rangle = \Tr\left[ A^* B \right]$. Let $(V_j)_{j = 1}^{d^2}$ be an orthonormal basis of $\mathbb{H}_d$ with $V_{d^2} = \frac{1}{\sqrt{d}} \mathbbm{1}$. The exact choice of the $V_j$ will depend on the measurement and will be given later in the proof. For some $d^2/2 \leq D \leq d^2 - 1$ we will now define the states $\rho_{\nu}$ by
    \begin{equation}
    \label{eqn::defn_alternative_states_2}
        \rho_{\nu} = \rho_0 + \Delta_{\nu} = \rho_0 +  \frac{c \epsilon}{\sqrt{dD}} \sum_{i = 1}^D \nu_i V_i 
    \end{equation}
    for $\nu \in \mathcal{V} = \{-1,1\}^D$. Let us now consider $\mathcal{V}_{\mathcal{S}} := \left\{ \nu \in \mathcal{V} \middle| \rho_{\nu} \in \mathcal{S}(\mathbb{C}) \text{ and } \norm{\rho_{\nu} - \rho_0}_{\Tr} > \epsilon \right\}$. The set $\mathcal{V}_{\mathcal{S}}$ indexes those operators $\rho_{\nu}$ that actually define quantum states (in that they are positive) that are at least $\epsilon$ far away from $\rho_0$ in trace-norm. Lemma~\ref{lem::probability_valid_states} assures us that for a suitable $c > 0$, we have $|\mathcal{V}_{\mathcal{S}}|/|\mathcal{V}| \geq 1 - 2e^{-d}$ which we will later use to work with the whole set $\mathcal{V}$. We can now lower bound the testing error as follows
    \begin{align*}
        \mathbb{P}_{\rho_0^{\otimes n}}^{R^{M}}\left( \Delta = 1 \right) + \sup_{\rho: \norm{\rho - \rho_0}_{Tr} > \epsilon} \mathbb{P}_{\rho^{\otimes n}}^{R^{M}}\left( \Delta = 0 \right) 
        &\geq \mathbb{P}_{\rho_0^{\otimes n}}^{R^{M}}\left( \Delta = 1 \right) + \max_{\nu \in \mathcal{V}_{\mathcal{S}}} \mathbb{P}_{\rho_{\nu}^{\otimes n}}^{R^{M}}\left( \Delta = 0 \right)
        \\
        &\geq \mathbb{P}_{\rho_0^{\otimes n}}^{R^{M}}\left( \Delta = 1 \right) + \frac{1}{|\mathcal{V}_{\mathcal{S}}|} \sum_{\nu \in \mathcal{V}_{\mathcal{S}}} \mathbb{P}_{\rho_{\nu}^{\otimes n}}^{R^{M}}\left( \Delta = 0 \right)
        \\
        &\geq \mathbb{P}_{\rho_0^{\otimes n}}^{R^{M}}\left( \Delta = 1 \right) + \mathbb{E}_{\nu} \left[ \mathbb{P}_{\rho_{\nu}^{\otimes n}}^{R^{M}} \right] \left( \Delta = 0 \right)
        \\
        &\geq 1 - \sqrt{d_{\chi^2}\left( \mathbb{P}_{\rho_0^{\otimes n}}^{R^{M}}, \mathbb{E}_{\nu \sim \mathcal{V}_{\mathcal{S}} } \left[ \mathbb{P}_{\rho_{\nu}^{\otimes n}}^{R^{M}} \right] \right)}. 
    \end{align*}
    In order to bound this we use the decoupled $\chi^2$ fluctuation which we can calculate using Theorem \ref{thm::chi_squared_fluctuation}. As such, we need to calculate
    \begin{equation*}
        H_i(\nu_1, \nu_2) = \mathbb{E}_{y_i \sim \mathbb{P}^{(i)}}\left[ \delta_{\nu_1}^{(i)}(y_i) \delta_{\nu_2}^{(i)}(y_i) \right] \hspace{10pt} \text{and} \hspace{10pt} \delta_{\nu}^{(i)}(y_i) = \frac{q_{\nu}^{(i)}(y_i) - p^{(i)}(y_i)}{p^{(i)}(y_i)}.
    \end{equation*}
    In our particular setup, we have
    \begin{equation}
    \label{eqn::defn_p_and_q}
        p^{(i)}(y_i) = \Tr(\rho_{0} E_{y_i}) \hspace{20pt} \text{and} \hspace{20pt} q_{\nu}^{(i)}(y_i) = \Tr(\rho_{\nu} E_{y_i}),
    \end{equation}
    where $\nu$ is drawn uniform at random from $\mathcal{V}_{\mathcal{S}}$ and $\rho_{mm}$ and $\rho_{\nu} = \rho_{mm} + \Delta_{\nu}$. This gives 
    \begin{align}
        H_i(\nu_1, \nu_2) 
        &= d \sum_{y_i \in \mathcal{Y}_i} \frac{\Tr\left[ \Delta_{\nu_1} E_{y_i} \right] \Tr\left[ \Delta_{\nu_2} E_{y_i} \right]}{\Tr[E_{y_i}]} \notag
        \\
        &= d \Tr\left[ \Delta_{\nu_1} \sum_{{y_i} \in \mathcal{Y}_i} \Tr\left[ \Delta_{\nu_2} E_{y_i} \right] \frac{1}{\Tr[E_{y_i}]} E_{y_i} \right] = d \Tr\left[ \Delta_{\nu_1} \mathcal{H}_i(\Delta_{\nu_2}) \right], \label{eqn:relation_H_i_Lüders_channel}
    \end{align}
    where
    \begin{align}
    \label{eqn::Kraus_channel_single}
        \mathcal{H}_i: \mathbb{C}^{d \times d} \to \mathbb{C}^{d \times d} \hspace{10pt} \text{with}& \hspace{10pt} \mathcal{H}_i(A) := \sum_{{y_i} \in \mathcal{Y}_i} \Tr\left[ A E_{y_i} \right] \frac{1}{\Tr[E_{y_i}]} E_{y_i}
        \intertext{and}
        \label{eqn::average_Kraus_channel}
        &\bar{\mathcal{H}} = \frac{1}{n} \sum_{i = 1}^n \mathcal{H}_i
    \end{align}
    are linear super-operators. We will now show that these super-operators have several properties which will prove useful.
    
    \begin{lemma}
\label{lem::properties_of_Kraus_channel}
    Let $\mathcal{H}_i$ and $\bar{\mathcal{H}}$ be defined as in~\eqref{eqn::Kraus_channel_single} and~\eqref{eqn::average_Kraus_channel}. Then
    \begin{enumerate}
        \item[(i)] $\mathcal{H}_i$ and $\bar{\mathcal{H}}$ are hermitian and positive.

        \item[(ii)] $\mathcal{H}_i$ and $\bar{\mathcal{H}}$ are Hermiticity-preserving.

        \item[(iii)] $\mathcal{H}_i$ and $\bar{\mathcal{H}}$ are Trace-preserving.

        \item[(iv)] $\mathcal{H}_i$ and $\bar{\mathcal{H}}$ are unital.
    \end{enumerate}
\end{lemma}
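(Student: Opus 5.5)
The plan is to verify each property directly from the definition $\mathcal{H}_i(A) = \sum_{y} \frac{\Tr[A E_y]}{\Tr[E_y]} E_y$. All four properties are preserved under convex combinations of linear super-operators, so each claim for $\bar{\mathcal{H}} = \frac1n\sum_i \mathcal{H}_i$ follows immediately from the one for $\mathcal{H}_i$. It therefore suffices to treat a single $\mathcal{H} = \mathcal{H}_i$ with POVM elements $E_y = M_y^*M_y \geq 0$ and $\sum_y E_y = \mathbbm{1}$.

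Throughout, I assume $\Tr[E_y] > 0$ for every $y$. For gentle measurements the $E_y$ are even full rank, and outcomes with $E_y = 0$ can simply be discarded.

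\textbf{Step (i): hermitian and positive.} For the inner product $\langle A,B\rangle = \Tr[A^*B]$ one computes
\begin{equation*}
\langle A, \mathcal{H}(B)\rangle = \sum_y \frac{\Tr[BE_y]\,\Tr[A^*E_y]}{\Tr[E_y]} .
\end{equation*}
Since $E_y$ is self-adjoint, $\Tr[A^*E_y] = \overline{\Tr[E_y A]} = \overline{\Tr[AE_y]}$. Hence the right-hand side equals $\sum_y \overline{\Tr[AE_y]}\,\Tr[BE_y]/\Tr[E_y]$, which is symmetric under conjugate-swapping $A$ and $B$, i.e.\ it equals $\overline{\langle B,\mathcal{H}(A)\rangle} = \langle \mathcal{H}(A), B\rangle$. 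This proves self-adjointness. Setting $A=B$ gives
\begin{equation*}
\langle A,\mathcal{H}(A)\rangle = \sum_y |\Tr[AE_y]|^2/\Tr[E_y] \geq 0,
\end{equation*}
so $\mathcal{H}$ is positive semidefinite as an operator on $\mathbb{C}^{d\times d}$. The same holds if positivity is read as ``maps PSD matrices to PSD matrices'', since for $A\geq 0$ the coefficients $\Tr[AE_y]\geq 0$ and each $E_y\geq 0$.

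\textbf{Steps (ii)--(iv).} For Hermiticity preservation: if $A = A^*$, then $\Tr[AE_y]$ is real because it is the trace of a product of two hermitian matrices. So $\mathcal{H}(A)$ is a real combination of the hermitian $E_y$ and is hermitian. For trace preservation,
\begin{equation*}
\Tr[\mathcal{H}(A)] = \sum_y \Tr[AE_y]\,\frac{\Tr[E_y]}{\Tr[E_y]} = \Tr\Bigl[A\sum_y E_y\Bigr] = \Tr[A],
\end{equation*}
using the completeness relation~\eqref{eqn::resolution_of_identity}. For unitality,
\begin{equation*}
\mathcal{H}(\mathbbm{1}) = \sum_y \frac{\Tr[E_y]}{\Tr[E_y]} E_y = \sum_y E_y = \mathbbm{1},
\end{equation*}
again by completeness.

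\textbf{Main obstacle.} There is no substantial difficulty. The only delicate points are bookkeeping ones: handling the complex conjugation correctly in the self-adjointness computation, which works because each $E_y$ is self-adjoint, and fixing which notion of ``positive'' is intended. I would state both notions and note that both hold, since the lower-bound argument needs the spectral (operator) positivity to pick the eigendirections of $\bar{\mathcal{H}}$ with the smallest eigenvalues.
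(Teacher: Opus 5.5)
Your proposal is correct and follows the paper's proof essentially step for step: the same computations give trace preservation and unitality via completeness, Hermiticity preservation via $\overline{\Tr[AE_y]} = \Tr[A^*E_y]$, and positivity via $\langle A,\mathcal{H}(A)\rangle = \sum_y |\Tr[AE_y]|^2/\Tr[E_y] \geq 0$. The only minor difference is that you check self-adjointness directly from the symmetry of the sesquilinear form. The paper instead infers it from the nonnegative quadratic form, which is valid over $\mathbb{C}$ by polarization. You also make explicit the harmless assumption $\Tr[E_y]>0$, which the paper leaves implicit.
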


\begin{proof}
    \begin{enumerate}
        \item[(i)] Let $A \in \mathbb{C}^{d \times d}$. Then
        \begin{align*}
            \Tr\left[ A^* \mathcal{H}_i(A) \right] &= \Tr\left[ A^* \sum_{{y_i} \in \mathcal{Y}_i} \Tr\left[ A E_{y_i} \right] \frac{1}{\Tr[E_{y_i}]} E_{y_i} \right]
            \\
            &= \sum_{{y_i} \in \mathcal{Y}_i} \Tr\left[ A^* E_{y_i} \right] \Tr\left[ A E_{y_i} \right] \frac{1}{\Tr[E_{y_i}]}
            \\
            &= \sum_{{y_i} \in \mathcal{Y}_i} |\Tr\left[ A E_{y_i} \right]|^2 \frac{1}{\Tr[E_{y_i}]} \geq 0
        \end{align*}
        using the fact that $\Tr[A^*E_{y_i}] = \Tr\left[(E_{y_i} A)^*\right] = \overline{\Tr\left[E_{y_i} A\right]}$. As such $\mathcal{H}_i$ is hermitian and positive and with it $\bar{\mathcal{H}}$.

        \item[(ii)] Let $A \in \mathbb{H}_d$ be hermitian. Then
        \begin{align*}
            \mathcal{H}_i(A)^* &= \left(\sum_{{y_i} \in \mathcal{Y}_i} \Tr\left[ A E_{y_i} \right] \frac{1}{\Tr[E_{y_i}]} E_{y_i}\right)^* 
            \\
            &= \sum_{{y_i} \in \mathcal{Y}_i} \Tr\left[ A^* E_{y_i} \right] \frac{1}{\Tr[E_{y_i}]} E_{y_i} = \mathcal{H}_i(A^*) = \mathcal{H}_i(A).
        \end{align*}
        By linearity, $\bar{\mathcal{H}}(A)^* = \bar{\mathcal{H}}(A)$.

        \item[(iii)] Let $A \in \mathbb{C}^{d \times d}$. Then
        \begin{align*}
            \Tr\left[ \mathcal{H}_i(A) \right] &= \Tr\left[ \sum_{{y_i} \in \mathcal{Y}_i} \Tr\left[ A E_{y_i} \right] \frac{1}{\Tr[E_{y_i}]} E_{y_i} \right] 
            \\
            &= \sum_{{y_i} \in \mathcal{Y}_i} \Tr\left[ A E_{y_i} \right] = \Tr\left[ A \sum_{{y_i} \in \mathcal{Y}_i} E_{y_i} \right] = \Tr[A].
        \end{align*}
        By linearity, $\bar{\mathcal{H}}$ is also trace-preserving.

        \item[(iv)] We have
        \begin{equation*}
            \mathcal{H}_i(\mathbbm{1}) = \sum_{{y_i} \in \mathcal{Y}_i} \Tr\left[ \mathbbm{1} E_{y_i} \right] \frac{1}{\Tr[E_{y_i}]} E_{y_i} = \sum_{{y_i} \in \mathcal{Y}_i} E_{y_i} = \mathbbm{1}.
        \end{equation*}
        By linearity, $\bar{\mathcal{H}}$ is also unital.
    \end{enumerate}
\end{proof}
    \noindent Let us now continue with the proof of Theorem~\ref{thm::lower_bound}. Using the linear super-operator form, we can rewrite the decoupled $\chi^2$ fluctuation as 
    \begin{align*}
        d_{\chi^2}\left( \mathbb{P}_{\rho_0^{\otimes n}}^{R^{M}}, \mathbb{E}_{\nu} \left[ \mathbb{P}_{\rho_{\nu}^{\otimes n}}^{R^{M}} \right] \right) &= \mathbb{E}_{\nu_1, \nu_2 \sim U(\mathcal{V}_{\mathcal{S}})} \left[ \prod_{i = 1}^n \left( 1 + H_i(\nu_1, \nu_2) \right) \right] - 1
        \\
        &= \mathbb{E}_{\nu_1, \nu_2 \sim U(\mathcal{V}_{\mathcal{S}})} \left[\exp\left(d \sum_{i = 1}^n \Tr\left[ \Delta_{\nu_1} \mathcal{H}_i(\Delta_{\nu_2}) \right] \right) \right] - 1
        \\
        &= \mathbb{E}_{\nu_1, \nu_2 \sim U(\mathcal{V}_{\mathcal{S}})} \left[ \exp\left(nd \langle \Delta_{\nu_1},  \bar{\mathcal{H}}(\Delta_{\nu_2}) \rangle \right) \right] - 1.
    \end{align*}
    Now, we use the fact that the probability of an alternative state not being a valid quantum states is exponentially small (see Lemma~\ref{lem::probability_valid_states}) in order to further bound
    \begin{equation*}
        \mathbb{E}_{\nu_1, \nu_2 \sim U(\mathcal{V}_{\mathcal{S}})} \left[ \exp\left(nd \langle \Delta_{\nu_1},  \bar{\mathcal{H}}(\Delta_{\nu_2}) \rangle \right) \right] \leq \left( \frac{e^d}{e^d -2} \right)^2 \mathbb{E}_{\nu_1, \nu_2 \sim U(\mathcal{V})} \left[ \exp\left(nd \langle \Delta_{\nu_1},  \bar{\mathcal{H}}(\Delta_{\nu_2}) \rangle \right) \right]
    \end{equation*}
    where we will write $\mathbb{E}_{\nu_1, \nu_2} := \mathbb{E}_{\nu_1, \nu_2 \sim U(\mathcal{V})}$ in short from now on. For $D \leq d^2 - 1$, let $\mathcal{V}_D = \left( V_i \right)_{i = 1,...,D}$, where the $V_i$ are the eigenvectors of $\Bar{\mathcal{H}}$ (Note that by Lemma~\ref{lem::properties_of_Kraus_channel} and the spectral theorem such an orthonormal basis of eigenvectors/eigenmatrices of $\bar{\mathcal{H}}$ always exists). Then, for $\nu \in \{-1,1\}^D$, we have
    \begin{equation*}
        \Delta_{\nu} =  \frac{c \epsilon}{\sqrt{dD}} \sum_{i = 1}^D \nu_i V_i \in \linspan (\mathcal{V}_D).
    \end{equation*}
    Let $\Phi_{\mathcal{V}_D}^{-1}(\Delta_\nu) = \left( \frac{c \epsilon}{\sqrt{dD}} \nu_i \right)_i = \frac{c \epsilon}{\sqrt{dD}} \nu$ be the coefficient vector of $\Delta_{\nu}$ with respect to the basis $\mathcal{V}_D$ of $\linspan (\mathcal{V}_D)$. Furthermore, let $M_D =\mathcal{M}_{\mathcal{V}_D}^{\mathcal{V}_D}(\Bar{\mathcal{H}}|_{\linspan (\mathcal{V}_D)}) = \diag\left( \mu_1,...,\mu_D \right)$ be the transformation matrix of $\Bar{\mathcal{H}}|_{\linspan( \mathcal{V}_D)}$ with respect to the basis $\mathcal{V}_D$. Then it holds
    \begin{align*}
        \langle \Delta_{\nu_1}, \bar{\mathcal{H}}(\Delta_{\nu_2}) \rangle &= \left(\Phi_{\mathcal{V}_D}^{-1}(\Delta_{\nu_1})\right)^t \mathcal{M}_{\mathcal{V}_D}^{\mathcal{V}_D}(\Bar{\mathcal{H}}|_{\linspan( \mathcal{V}_D )}) \Phi_{\mathcal{V}_D}^{-1}(\Delta_{\nu_2}) 
        \\
        &= \frac{c^2 \epsilon^2}{dD} {\nu_1}^t M_D \nu_2 
        \\
        &= \frac{c^2 \epsilon^2}{dD} \sum_{i = 1}^D \mu_i \nu_{1,i} \nu_{2,i}.
    \end{align*}
    This now allows us to further bound the $\chi^2$ distance as.
    \begin{align*}
        d_{\chi^2}\left( \mathbb{P}_{\rho_0^{\otimes n}}^{R^{M}}, \mathbb{E}_{\nu} \left[ \mathbb{P}_{\rho_{\nu}^{\otimes n}}^{R^{M}} \right] \right) &\leq \left( \frac{e^d}{e^d -2} \right)^2 \mathbb{E}_{\nu_1, \nu_2} \left[ \exp\left(nd \langle \Delta_{\nu_1},  \bar{\mathcal{H}}(\Delta_{\nu_2}) \rangle \right) \right] - 1
        \\
        &= \left( \frac{e^d}{e^d -2} \right)^2 \mathbb{E}_{\nu_1, \nu_2} \left[ \exp\left(\frac{d c^2 \epsilon^2}{D} \sum_{i = 1}^D \mu_i \nu_{1,i} \nu_{2,i}\right) \right] - 1.
    \end{align*}
    Using the tower property of the conditional expectation, we write 
    \begin{align*}
        \mathbb{E}_{\nu_1, \nu_2} \left[ \exp\left(\frac{d c^2 \epsilon^2}{D} \sum_{i = 1}^D \mu_i \nu_{1,i} \nu_{2,i}\right) \right] &= \mathbb{E}_{\nu_1} \left[ \mathbb{E}_{\nu_2}\left[ \exp\left(\frac{d c^2 \epsilon^2}{D} \sum_{i = 1}^D \mu_i \nu_{1,i} \nu_{2,i}\right) \middle| \nu_1 \right] \right]
        \\
        &\leq \mathbb{E}_{\nu_1}\left[ \exp\left(\frac{1}{2} \frac{n^2 c^4 \epsilon^4}{D^2} \sum_{i = 1}^D \mu_i^2 \nu_{1,i}^2 \right) \right]
        \\
        &= \exp\left(\frac{1}{2} \frac{n^2 c^4\epsilon^4}{D^2} \sum_{i = 1}^D \mu_i^2 \right) 
        \\
        &= \exp\left( \frac{n^2c^4 \epsilon^4}{2 D^2} \norm{M_D}_{F}^2 \right)
    \end{align*}
    using the fact that the Rademacher random variables $\nu_{1,i}, \nu_{2,i}$ are iid. sub-gaussian and applying Theorem 7.27 in \citet{Foucart2013}. 
    The main result of this proof is the fact that all the necessary information of the gentle measurement is encoded in the matrix $M_D$ together with the parameter vectors $\Phi_{\mathcal{V}_D}^{-1}(\Delta_\nu)$. For gentle measurements, the results from Proposition~\ref{prop::properties_of_Kraus_channel_2} give us a bound on the eigenvalues of $\bar{\mathcal{H}}$, which are the diagonal entries of $M_D$, in terms of the gentleness of the measurement. 
    \begin{proposition}
\label{prop::properties_of_Kraus_channel_2}
    Let $M_i = (M_{y_i})_{y_i \in \mathcal{Y}_i}$ be $\alpha$ gentle measurements for $i = 1,...,n$, $\alpha \in [0, 1/2)$ and $E_{y_i} = M_{y_i}^*M_{y_i}$. Furthermore, let $\bar{\mathcal{H}}$ be as in~\eqref{eqn::average_Kraus_channel} where the $\mathcal{H}_i$ are defined with respect to the gentle measurements $M_i$. Then
    \begin{enumerate}
        \item[(i)] There exists an orthonormal basis $V_1,...V_{d^2} \in \mathbb{H}_d$ of eigenvectors of $\bar{\mathcal{H}}$ with eigenvalues $\mu_1, ...,\mu_{d^2} \geq 0$ such that
        \begin{equation*}
            \bar{\mathcal{H}}(A) = \sum_{i = 1}^{d^2} \mu_i \Tr\left[V_i A \right] V_i
        \end{equation*}

        \item[(ii)] $\mathbbm{1}/\sqrt{d} = V_{d^2}$ is an eigenvector of $\bar{\mathcal{H}}$ with eigenvalue $1$.

        \item[(iii)] We have $\Tr\left[ V_i \right] = 0$ for $i = 1,...,d^2-1$.

        \item[(iv)] We have $\sum_{i = 1}^{d^2-1} \mu_i \leq \frac{16\alpha^2}{(1-4\alpha^2)^2}$.
    \end{enumerate}
\end{proposition}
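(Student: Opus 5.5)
The plan is to get (i)--(iii) from the structural facts in Lemma~\ref{lem::properties_of_Kraus_channel}, and then to get (iv) by computing the superoperator trace of $\bar{\mathcal{H}}$ explicitly and bounding it using the privacy side of Theorem~\ref{thm::duality_gentleness_privacy}.

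\textbf{Steps (i)--(iii).} By Lemma~\ref{lem::properties_of_Kraus_channel}(ii), each $\mathcal{H}_i$ maps the real inner-product space $\mathbb{H}_d$ into itself, and hence so does $\bar{\mathcal{H}}$. By part (i) of the same lemma, $\bar{\mathcal{H}}$ is self-adjoint and positive there. The real spectral theorem therefore gives an orthonormal eigenbasis $V_1,\dots,V_{d^2}$ of $\mathbb{H}_d$ with eigenvalues $\mu_j\ge 0$. Since $\Tr[V_jA]=\langle V_j,A\rangle$ for hermitian $V_j$, extending by complex linearity gives the displayed formula on all of $\mathbb{C}^{d\times d}$. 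Unitality (part (iv) of the lemma) gives $\bar{\mathcal{H}}(\mathbbm{1}/\sqrt d)=\mathbbm{1}/\sqrt d$, so we may take $V_{d^2}=\mathbbm{1}/\sqrt d$ with $\mu_{d^2}=1$. Self-adjointness then makes the orthogonal complement of $\mathbbm{1}$, namely the traceless hermitian matrices, invariant. So the remaining eigenvectors can be chosen there, which gives $\Tr[V_j]=\sqrt d\,\langle V_{d^2},V_j\rangle=0$.

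\textbf{Step (iv): reducing to a per-outcome bound.} The trace of a superoperator does not depend on the orthonormal basis used. Since $\bar{\mathcal H}$ commutes with the adjoint operation, its trace over $\mathbb{H}_d$ equals its trace over $\mathbb{C}^{d\times d}$. Using the matrix units $\ket{j}\bra{k}$, I compute
\[
\mathrm{tr}(\mathcal{H}_i)=\sum_{y}\frac{1}{\Tr[E_y]}\sum_{j,k}|(E_y)_{jk}|^2=\sum_y\frac{\Tr[E_y^2]}{\Tr[E_y]} .
\]
Subtracting the trivial eigenvalue $1=\sum_y\Tr[E_y]/d$ gives
\[
\sum_{j<d^2}\mu_j=\frac1n\sum_{i}\sum_{y}\frac{\Tr[E_y]}{d}\cdot\frac{\frac1d\sum_l(\lambda_l-m_y)^2}{m_y^2},
\]
where $\lambda_l$ are the eigenvalues of $E_y$ and $m_y=\Tr[E_y]/d$ is their mean. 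Since $\sum_y\Tr[E_y]/d=1$, it suffices to bound, uniformly in $y$, the relative variance $\mathrm{Var}(\lambda)/m^2$ of the spectrum of a single $E_y$ by $16\alpha^2/(1-4\alpha^2)^2$.

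\textbf{Step (iv): the per-outcome bound (main obstacle).} I expect this step to be the crux, and I will do it as follows.
\begin{itemize}
\item By Theorem~\ref{thm::duality_gentleness_privacy}(ii), $\alpha$-gentleness with $\alpha<1/2$ implies $\delta$-qDP with $e^{\delta}=\bigl((1+2\alpha)/(1-2\alpha)\bigr)^2$.
\item By the eigenvalue criterion noted after that theorem, this means $\lambda_{\max}\le r\,\lambda_{\min}$ with $r=e^\delta$.
\item The Bhatia--Davis inequality gives $\mathrm{Var}(\lambda)\le(\lambda_{\max}-m)(m-\lambda_{\min})$.
\item Maximizing $(\lambda_{\max}-m)(m-\lambda_{\min})/m^2$ over $m\in[\lambda_{\min},\lambda_{\max}]$ is attained at $m=2\lambda_{\max}\lambda_{\min}/(\lambda_{\max}+\lambda_{\min})$. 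This gives the bound $(\lambda_{\max}-\lambda_{\min})^2/(4\lambda_{\max}\lambda_{\min})=(\sqrt r-1/\sqrt r)^2/4$.
\item With $\sqrt r=(1+2\alpha)/(1-2\alpha)$ we have $\sqrt r-1/\sqrt r=8\alpha/(1-4\alpha^2)$, so the bound equals exactly $16\alpha^2/(1-4\alpha^2)^2$.
\end{itemize}
Averaging over $y$ with weights $\Tr[E_y]/d$, and then over $i$, completes the proof of (iv).
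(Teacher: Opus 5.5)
Your proof is correct. For (i)--(iii) it matches the paper. Your explicit choice of $V_{d^2}=\mathbbm{1}/\sqrt d$ inside the eigenvalue-$1$ eigenspace, via invariance of $\mathbbm{1}^\perp$, is in fact more careful than the paper, which ignores the possibility that the eigenvalue $1$ is degenerate.

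For (iv) your route is genuinely different from the paper's.

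\textbf{The paper's route.} It works channel by channel in the eigenbasis $V_i^{(j)}$ of each $\mathcal{H}_j$. It uses $\Tr[E_y]\ge d\,\lambda_{\min}(E_y)$, subtracts $\lambda_{\min}(E_y)\mathbbm{1}$ using tracelessness, and bounds $\sum_i \Tr[V_i^{(j)}A_y]^2\le \|A_y\|_F^2\le d$ for the rescaled operator $A_y$ with spectrum in $[0,1]$. This gives $\sum_{i<d^2}\mu_i\le (e^\delta-1)^2$.

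\textbf{Your route.} You compute the superoperator trace exactly, $\mathrm{tr}(\mathcal{H}_i)=\sum_y \Tr[E_y^2]/\Tr[E_y]$. This turns the nontrivial part of the spectrum into a $\Tr[E_y]/d$-weighted average of relative spectral variances. Bhatia--Davis plus optimisation over the mean then gives the per-outcome bound $(r-1)^2/(4r)$.

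\textbf{What each buys.} The paper's argument is more elementary, needing only a Frobenius-norm bound and no exact trace computation. But it is lossy, and it does not actually reach the displayed constant. With $e^\delta=\bigl((1+2\alpha)/(1-2\alpha)\bigr)^2$ one gets $(e^\delta-1)^2=64\alpha^2/(1-2\alpha)^4$. This exceeds $16\alpha^2/(1-4\alpha^2)^2$ by the factor $4(1+2\alpha)^2/(1-2\alpha)^2$, so the paper's final equality is an arithmetic slip. Your argument does deliver the stated constant exactly, since $(r-1)^2/(4r)=(\sqrt r-1/\sqrt r)^2/4=16\alpha^2/(1-4\alpha^2)^2$. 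Both bounds are $O(\alpha^2)$ for $\alpha$ bounded away from $1/2$, which is all Theorem~\ref{thm::lower_bound} uses.

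One small point to state explicitly: $(\lambda_{\max}-\lambda_{\min})^2/(4\lambda_{\max}\lambda_{\min})$ is increasing in the ratio $\lambda_{\max}/\lambda_{\min}\ge 1$. So under $\lambda_{\max}\le r\lambda_{\min}$ it is at most, not equal to, $(\sqrt r-1/\sqrt r)^2/4$.
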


\begin{proof}
    \begin{enumerate}
        \item[(i)] By Lemma \ref{lem::properties_of_Kraus_channel}, we know that $\bar{\mathcal{H}}$ is a positive $\mathbb{C}$-linear operator on the $\mathbb{C}$-vector space $\mathbb{C}^{d \times d}$. Since it is also Hermiticity-preserving, its restriction $\bar{\mathcal{H}}|_{\mathbb{H}_d}$ to the space of hermitian matrices $\mathbb{H}_d$ is also a self-adjoint and positive $\mathbb{R}$-linear operator on the $\mathbb{R}$-vector space $\mathbb{H}_d$. As such, there exist an orthonormal basis $V_1,...,V_{d^2} \in \mathbb{H}_d$ of eigenvectors of $\bar{\mathcal{H}}|_{\mathbb{H}_d}$ with eigenvalues $\Tilde{\mu}_1,...,\Tilde{\mu}_{d^2} > 0$ such that
        \begin{equation*}
            \bar{\mathcal{H}}|_{\mathbb{H}_d}(A) = \sum_{i = 1}^{d^2} \Tilde{\mu}_i \Tr[V_i A] V_i.
        \end{equation*}
        Since $\bar{\mathcal{H}}$ is a positive $\mathbb{C}$-linear operator on $\mathbb{C}^{d \times d}$, there exist an orthonormal basis of vectors $W_1,...,W_{d^2} \in \mathbb{C}^{d \times d}$ with eigenvalues $\mu_1,...,\mu_{d^2} > 0$ such that
        \begin{equation*}
            \bar{\mathcal{H}}(A) = \sum_{i = 1}^{d^2} \mu_i \Tr[W_i A] W_i.
        \end{equation*}
        Obviously, the vectors $V_i$ are eigenvectors of $\bar{\mathcal{H}}$ as well and they remain linear independent in the larger $\mathbb{C}$-vector space $\mathbb{C}^{d \times d}$ due to them remaining orthogonal. Therefore, it must hold $W_i = V_i$ and $\Tilde{\mu}_i = \mu_i$ for all $i$.

        \item[(ii)] Since $\bar{\mathcal{H}}$ is unital, $\mathbbm{1}$ is an eigenvector of $\bar{\mathcal{H}}$ with eigenvalue $1$ and therefore its renormalization $\mathbbm{1}/\sqrt{d} = V_{d^2}$ as well.

        \item[(iii)] Since the $V_i$ are orthogonal, we have
        \begin{equation*}
            0 = \Tr\left[ V_{d^2} V_i \right] = \Tr\left[ \mathbbm{1} V_i \right] = \Tr\left[ V_i \right]
        \end{equation*}
        for all $i = 1,...,d^2-1$.

        \item[(iv)] Since the measurements $M_i$ are $\alpha$-gentle, by Theorem~\ref{thm::duality_gentleness_privacy}, they are $\delta$-quantum-differentially-private for $\delta = 2\log(\frac{1+2\alpha}{1-2\alpha})$. Then each channel $\mathcal{H}_j$ has a basis of eigenvectors $V_i^{(j)}$ with eigenvalues $\mu_i^{(j)}$, where $V_{d^2}^{(j)} = \frac{1}{\sqrt{d}} \mathbbm{1}$. We know that the matrices $V_i^{(j)}$ are traceless and as such we have
        \begin{align*}
            \sum_{i = 1}^{d^2 -1} \mu_i^{(j)} &= \sum_{i = 1}^{d^2 -1} \Tr\left[ V_i^{(j)} \Bar{\mathcal{H}_j}(V_i^{(j)}) \right]
            \\
            &= \sum_{i = 1}^{d^2 -1} \sum_{y_j \in \mathcal{Y}_j} \Tr\left[ V_i^{(j)} E_{y_j} \right]^2 \frac{1}{\Tr[E_{y_j}]}
            \intertext{Now, using the fact that $\Tr[E_{y_j}] \geq d \lambda_{min}(E_{y_j})$ we get}
            &\leq \frac{1}{d} \sum_{i = 1}^{d^2 -1} \sum_{y_j \in \mathcal{Y}_j} \Tr\left[ V_i^{(j)} E_{y_j} \right]^2 \frac{1}{\lambda_{min}(E_{y_j})}
            \\
            &= \frac{1}{d} \sum_{i = 1}^{d^2 -1} \sum_{y_j \in \mathcal{Y}_j} \Tr\left[ V_i^{(j)} \left( \frac{E_{y_j}}{\lambda_{min}(E_{y_j})} - \mathbbm{1} \right) \right]^2 \lambda_{min}(E_{y_j})
            \\
            &= \frac{(e^{\delta} -1)^2}{d} \sum_{y_j \in \mathcal{Y}_j} \sum_{i = 1}^{d^2 -1} \Tr\left[ V_i^{(j)} A_{y_j} \right]^2 \lambda_{min}(E_{y_j})
            \intertext{where $A_{y_j} := \frac{1}{e^{\delta}-1} \left(\frac{E_{y_j}}{\lambda_{min}(E_{y_j})} - \mathbbm{1}\right)$ has maximal eigenvalue $1$ due to Proposition~\ref{prop::equalities_qDP}. Since $(V_i^{(j)})_{i = 1,...,d^2}$ forms an orthonormal basis of $\mathbb{C}^{d \times d}$, we may bound}
            &\sum_{i = 1}^{d^2 -1} \Tr\left[ V_i^{(j)} A_{y_j} \right]^2 \leq \norm{A_{y_j}}_{F}^2 \leq d.
            \intertext{Finally, this gives the bound}
            \sum_{i = 1}^{d^2 -1} \mu_i^{(j)} &\leq \frac{(e^{\delta} -1)^2}{d} \sum_{y_j \in \mathcal{Y}_j} d \lambda_{min}(E_{y_j}) \leq (e^{\delta} -1)^2.
        \end{align*}
        Therefore, using the fact that $\frac{1}{\sqrt{d}} \mathbbm{1}$ is an eigenvector for all $\mathcal{H}_j$ and $\Bar{\mathcal{H}}$ with eigenvalue 1, it holds 
        \begin{align*}
            \sum_{i = 1}^{d^2 -1} \mu_i &= \Tr\left[ \Bar{\mathcal{H}} \right] - 1 = \frac{1}{n} \sum_{j = 1}^n \Tr\left[ \mathcal{H}_j \right] - 1 = \frac{1}{n} \sum_{j = 1}^n \sum_{i = 1}^{d^2} \mu_i^{(j)} -1 
            \\
            &\leq \frac{1}{n} \sum_{j = 1}^n \left((e^{\delta} -1)^2 + 1\right) - 1 
            = (e^{\delta} -1)^2
            = \frac{16 \alpha^2}{(1-4\alpha^2)^2}.
        \end{align*}
    \end{enumerate}
    \end{proof}
    \noindent With the results of Proposition~\ref{prop::properties_of_Kraus_channel_2} we can now finalize the proof of Theorem~\ref{thm::lower_bound}. When choosing $D = \frac{d^2}{2}$ we can further bound
    \begin{align}
    \label{eqn::HS_norm_M_D_fixed}
        \norm{M_D}_{F}^2 = \sum_{i = 1}^D \mu_i^2 \leq D \mu_D^2 \leq D \left(\frac{\sum_{i = D}^{d^2 - 1} \mu_i}{d^2 -1 - (D-1)} \right)^2 
        \leq \frac{512 \alpha^4}{(1-2\alpha)^8 d^2}
    \end{align}
    where we used the fact that assume the eigenvalues $\mu_i$ to be ordered from smallest to largest and the result of Proposition~\ref{prop::properties_of_Kraus_channel_2} (iv). This gives
    \begin{align*}
        d_{\chi^2}\left( \mathbb{P}_{\rho_0^{\otimes n}}^{R^{M}}, \mathbb{E}_{\nu} \left[ \mathbb{P}_{\rho_{\nu}^{\otimes n}}^{R^{M}} \right] \right) &\leq \left( \frac{e^d}{e^d -2} \right)^2 \exp\left( 512 \frac{n^2 c^4 \epsilon^4 \alpha^4}{2D^2 d^2} \right) - 1 
        \\
        &\leq \left( \frac{e^d}{e^d -2} \right)^2 \exp\left( 1024 c^4 \frac{n^2 \epsilon^4 \alpha^4}{d^6} \right) - 1
    \end{align*}
    showing that the error of any locally-$\alpha$-gentle test is bounded from below from 0 as long as $n \geq \Omega\left( \frac{d^3}{\epsilon^2 \alpha^2} \right)$, completing the proof of Theorem~\ref{thm::lower_bound}. 
\end{proof}

For randomized measurements in the non-gentle case, \citet{Bubeck2020} has shown an optimal rate of $n = \Theta(\frac{d^{3/2}}{\epsilon^2})$. It turns out that our proof technique allows to show lower bounds for randomized measurements in the gentle case as in \citet{liu2024rolesharedrandomnessquantum}. If the measurements we perform are random, we cannot chose $\Delta_{\nu}$ in~\eqref{eqn::defn_alternative_states_2} according to the direction of least sensitivity of the measurement. This is because these directions are unknown to us due to the randomness of the measurement. Mathematically, this corresponds to the fact that we cannot assume the eigenvalues in~\eqref{eqn::HS_norm_M_D_fixed} to be ordered, resulting in a larger upper bound. In that case, for $D = d^2-1$, we have
\begin{equation}
\label{eqn::HS_norm_M_D_randomized}
    \sum_{i = 1}^{D} \mu_i^2 \leq \frac{16 \alpha^2}{(1-4\alpha^2)^2} \sum_{i = 1}^{D} \mu_i \leq \left(\frac{16 \alpha^2}{(1-4\alpha^2)^2}\right)^2 = \frac{256 \alpha^4}{(1-4\alpha)^4}
\end{equation}
which gives the following lower bound for randomized measurement schemes. We believe it to be an interesting open question to identify whether this lower bound is optimal.
\begin{corollary}
\label{thm::lower_bound_randomized}
    A total of $n = \Omega\left( \frac{d^2}{\epsilon^2 \alpha^2} \right)$ copies of the state $\rho$ are needed to verify whether $\rho$ is the maximally mixed state $\rho_0 = \frac{1}{d} \mathbbm{1}$ or $\norm{\rho - \rho_0}_{Tr} > \epsilon$ with high probability using randomized locally-$\alpha$-gentle measurements. 
\end{corollary}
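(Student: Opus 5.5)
We would rerun the proof of Theorem~\ref{thm::lower_bound} with one change: the perturbation directions must be chosen without knowing the measurement. In the randomized setting the measurements $M_i$ are drawn from a distribution (shared randomness), and the test may depend on which measurements were drawn. We condition on a realization of the measurements; the resulting conditional testing problem is exactly the fixed-measurement problem. The prior over alternatives, however, must be fixed beforehand.

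\textbf{Step 1: an isotropic prior.} Take $D = d^2-1$ and let $(V_i)_{i=1}^{D}$ be any fixed orthonormal basis of the traceless Hermitian matrices, for instance the generalized Gell-Mann basis. Define $\Delta_\nu$ as in~\eqref{eqn::defn_alternative_states_2}. Lemma~\ref{lem::probability_valid_states} applies unchanged, since it does not depend on the choice of the $V_i$. So with probability at least $1-2e^{-d}$ the state $\rho_\nu$ is valid and $\epsilon$-far from $\rho_0$. As in the main proof, restricting $\nu$ to $\mathcal{V}_{\mathcal S}$ costs only the factor $(e^d/(e^d-2))^2$.

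\textbf{Step 2: the $\chi^2$ computation for a realized measurement.} For a realized measurement, Theorem~\ref{thm::chi_squared_fluctuation} and the computation~\eqref{eqn:relation_H_i_Lüders_channel} give
\[
\mathbb{E}_{\nu_1,\nu_2}\Bigl[\exp\bigl(nd\langle \Delta_{\nu_1},\bar{\mathcal H}(\Delta_{\nu_2})\rangle\bigr)\Bigr]
=\mathbb{E}_{\nu_1,\nu_2}\Bigl[\exp\Bigl(\tfrac{nc^2\epsilon^2}{D}\,\nu_1^t G\,\nu_2\Bigr)\Bigr],
\qquad G_{kl}=\langle V_k,\bar{\mathcal H}(V_l)\rangle .
\]
Here $G$ is the Gram matrix of $\bar{\mathcal H}$ in the fixed basis; it is no longer diagonal.

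Conditioning on $\nu_1$, the exponent is a Rademacher sum in $\nu_2$, so the sub-Gaussian bound gives
\[
\mathbb{E}_{\nu_2}\bigl[\,\cdot\mid \nu_1\bigr]\le \exp\Bigl(\tfrac{n^2c^4\epsilon^4}{2D^2}\,\norm{G\nu_1}_2^2\Bigr).
\]
We then bound $\norm{G\nu_1}_2^2 \le \norm{G}_{op}^2 D$. Alternatively, we decouple via Hanson--Wright, which requires controlling $\norm{G}_F^2=\norm{P\bar{\mathcal H}P}_F^2$, where $P$ is the projection onto traceless matrices.

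\textbf{Step 3: controlling the spectrum of $\bar{\mathcal H}$ on traceless matrices.} By Proposition~\ref{prop::properties_of_Kraus_channel_2} (ii)--(iii), $\bar{\mathcal H}$ maps traceless matrices to traceless matrices. Its eigenvalues $\mu_i\ge 0$ on that subspace satisfy $\sum_i\mu_i\le \frac{16\alpha^2}{(1-4\alpha^2)^2}$ by part (iv). Since the $\mu_i$ are nonnegative, $\max_i \mu_i \le \sum_i\mu_i$. This gives
\[
\norm{G}_F^2=\sum_i\mu_i^2\le \Bigl(\sum_i\mu_i\Bigr)^2 \le \Bigl(\tfrac{16\alpha^2}{(1-4\alpha^2)^2}\Bigr)^2 ,
\]
which is~\eqref{eqn::HS_norm_M_D_randomized}, and also $\norm{G}_{op}\lesssim\alpha^2$.

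The cheap route $\norm{G\nu_1}^2\le D\norm{G}_{op}^2$ loses a factor of $D$. So the main obstacle is to get the exponent to scale as $\frac{n^2\epsilon^4}{D^2}\norm{G}_F^2$ rather than $\frac{n^2\epsilon^4}{D}\norm{G}_{op}^2$. I would try the following: $\mathbb{E}_{\nu_1}\norm{G\nu_1}^2=\norm{G}_F^2$, and $\norm{G\nu_1}^2$ concentrates, with fluctuations governed by $\norm{G^2}_F\le\norm{G}_F^2$. This gives
\[
\mathbb{E}_{\nu_1}\exp\bigl(t\norm{G\nu_1}^2\bigr)\le \exp\bigl(Ct\norm{G}_F^2\bigr)
\qquad\text{for } t\norm{G}_{op}^2\lesssim 1 .
\]
With $t=\frac{n^2c^4\epsilon^4}{2D^2}$ and $D\asymp d^2$, the exponent becomes $O\bigl(\frac{n^2\epsilon^4\alpha^4}{d^4}\bigr)$. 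The condition $t\norm{G}_{op}^2\lesssim 1$ holds in the same regime.

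\textbf{Step 4: conclusion.} Averaging over the measurement randomness preserves the bound, because it holds uniformly in the realization. Hence the $\chi^2$ divergence is $o(1)$ unless $n=\Omega\bigl(\frac{d^2}{\epsilon^2\alpha^2}\bigr)$, and the testing error stays bounded away from $0$ below that threshold.
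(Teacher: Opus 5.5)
Your argument is correct and has the same skeleton as the paper's. Both use a measurement-independent Rademacher prior with $D=d^2-1$, express the $\chi^2$-fluctuation through $\bar{\mathcal H}$, and turn the trace bound of Proposition~\ref{prop::properties_of_Kraus_channel_2}\,(iv) into $\sum_i\mu_i^2\le(\sum_i\mu_i)^2$, i.e.\ \eqref{eqn::HS_norm_M_D_randomized}. You differ in how the quadratic form is handled.

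The paper simply inserts \eqref{eqn::HS_norm_M_D_randomized} into the diagonal estimate $\exp\bigl(\frac{n^2c^4\epsilon^4}{2D^2}\sum_i\mu_i^2\bigr)$ inherited from the fixed-measurement proof. That estimate relied on $\Delta_\nu$ being expanded in the eigenbasis of $\bar{\mathcal H}$, so that $\nu_{1,i}^2=1$ removes all randomness after the sub-Gaussian step. With shared randomness the prior must be fixed before the measurement is drawn, so that basis is unavailable. As you note, one is then left with a non-diagonal $G$ and the random quantity $\norm{G\nu_1}_2^2$, whose trivial bound $D\norm{G}_{op}^2$ would lose a factor $d$ in $n$.

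Your Hanson--Wright step supplies exactly this missing piece. An equally short self-contained alternative is Gaussian linearization: write $\exp(t\norm{G\nu_1}_2^2)=\mathbb{E}_g\exp(\sqrt{2t}\,g^tG\nu_1)$ and use sub-Gaussianity of $\nu_1$. This gives $\mathbb{E}_{\nu_1}\exp(t\norm{G\nu_1}_2^2)\le\prod_i(1-2t\mu_i^2)^{-1/2}\le\exp(2t\norm{G}_F^2)$ whenever $4t\norm{G}_{op}^2\le1$.

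You also condition explicitly on the realized measurements. Since the law of the measurement choice does not depend on $\rho$, the total variation between the joint laws of choice and outcomes is the average of the conditional ones, so a bound uniform in the realization suffices. Together, these two steps turn the paper's one-line remark into a complete proof at the cost of absolute constants only. The paper's version is shorter but leaves both points implicit.
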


Finally, we note that the proof is not exclusively valid for $\rho_0$ being the maximally mixed state. Suppose that $\rho_0$ is a full rank quantum state belonging to some class $\mathcal{S}_{c_{min}} = \left\{  \rho \in \mathcal{S}(\mathbb{C}^d) \; \middle| \;  \lambda_{min}(\rho) \geq \frac{c_{min}}{d}\right\}$ for some fixed $c_{min} > 0$. Then the results of Lemma~\ref{lem::probability_valid_states} still hold for $c = 10\sqrt{2}$ and $\epsilon < \frac{c_{min}}{c^2}$ showing that the construction in equation~\eqref{eqn::defn_alternative_states_2} are valid alternative states with high probability, even in the case that $\rho_0$ is not the maximally mixed state. Furthermore, using the properties of gentle measurements, for the probability mass functions defined in~\eqref{eqn::defn_p_and_q}, which we insert in~\eqref{eqn:relation_H_i_Lüders_channel}, we have
\begin{equation*}
    \frac{1}{p^{(i)}(y_i)} = \frac{1}{\Tr[\rho_0 E_{y_i}]} \leq \frac{1}{\lambda_{min}(E_{y_i})} \leq \frac{e^{\delta}}{\lambda_{max}(E_{y_i})} \leq \frac{d e^{\delta}}{\Tr[E_{y_i}]}.
\end{equation*}
Therefore, the linear super-operator we end up with in this case is the same as in~\eqref{eqn::average_Kraus_channel} with an additional factor of $e^{\delta}$. For $\alpha$ bounded away from $\frac{1}{2}$, we have $\delta$ bounded away from $\infty$ which does not alter the rate in terms of $d, n, \epsilon$ and $\alpha$.

\newpage
\bibliographystyle{apalike} 
\bibliography{testingquditsbiblio}       

\newpage
\appendix
\section{Proofs of Theorem~\ref{thm::duality_gentleness_privacy}}
\label{sec::appendix_gentleness_proofs}

This section is split into a first part in which we prove (i) in Theorem~\ref{thm::duality_gentleness_privacy} while the second part of this section is concerned with (ii) in Theorem~\ref{thm::duality_gentleness_privacy} and the improvement on the constants appearing for positive-definite measurement operators. Note that, since we work exclusively with locally-gentle measurements, it suffices to show the results on each register independently. For a brief overview on the differences between locally- and globally-gentle measurement, see \citet{butucea2025sampleoptimallearningquantumstates}. 

\begin{proposition}
\label{prop::equalities_qDP}
    Let $M = (M_{y})_{y \in \mathcal{Y}}$ be a quantum measurement. Then the following are equivalent:
    \begin{enumerate}
        \item[i)] $M$ is a $\delta$-quantum-differentially-private measurement on $\mathcal{S}(\mathbb{C}^d)$.

        \item[ii)] $M$ is a $\delta$-quantum-differentially-private measurement on $\mathcal{S}_{pure}(\mathbb{C}^d)$.

        \item[iii)] $\lambda_{max}(E_{y}) \leq e^{\delta} \lambda_{min}(E_{y})$ for all $y \in \mathcal{Y}$, where $E_{y} = M_{y}^*M_{y}$. 
    \end{enumerate}
\end{proposition}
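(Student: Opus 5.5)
We prove the cycle of implications i) $\Rightarrow$ ii) $\Rightarrow$ iii) $\Rightarrow$ i). Two facts are used throughout. Outcome probabilities depend only on the POVM elements $E_y = M_y^*M_y$. For a normalized vector $\ket{\psi}$, the quantity $\bra{\psi}E_y\ket{\psi}$ ranges exactly over the interval $[\lambda_{min}(E_y), \lambda_{max}(E_y)]$.

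The implication i) $\Rightarrow$ ii) is immediate, since $\mathcal{S}_{pure}(\mathbb{C}^d) \subseteq \mathcal{S}(\mathbb{C}^d)$.

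For ii) $\Rightarrow$ iii), fix $y$. Let $\ket{\psi_{max}}$ and $\ket{\psi_{min}}$ be normalized eigenvectors of $E_y$ for $\lambda_{max}(E_y)$ and $\lambda_{min}(E_y)$. Applying the privacy inequality~\eqref{eqn::quantum_privacy_equation} to the pure states $\rho_1 = \ket{\psi_{max}}\bra{\psi_{max}}$ and $\rho_2 = \ket{\psi_{min}}\bra{\psi_{min}}$ gives
\begin{equation*}
\lambda_{max}(E_y) = \mathbb{P}_{\rho_1}(R^M = y) \leq e^{\delta}\, \mathbb{P}_{\rho_2}(R^M = y) = e^{\delta} \lambda_{min}(E_y).
\end{equation*}

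For iii) $\Rightarrow$ i), take arbitrary states $\rho_1, \rho_2$ and write the spectral decomposition $\rho_1 = \sum_j p_j \ket{\phi_j}\bra{\phi_j}$ with $p_j \geq 0$ and $\sum_j p_j = 1$. Then
\begin{equation*}
\Tr[\rho_1 E_y] = \sum_j p_j \bra{\phi_j}E_y\ket{\phi_j} \leq \lambda_{max}(E_y),
\end{equation*}
and in the same way $\Tr[\rho_2 E_y] \geq \lambda_{min}(E_y)$. Combining these with iii) yields
\begin{equation*}
\mathbb{P}_{\rho_1}(R^M=y) \leq \lambda_{max}(E_y) \leq e^{\delta}\lambda_{min}(E_y) \leq e^{\delta}\,\mathbb{P}_{\rho_2}(R^M=y)
\end{equation*}
for every $y$, which is i).

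The only point needing care is the degenerate case $\lambda_{min}(E_y) = 0$. There iii) forces $E_y = 0$, so both sides of every inequality vanish and the argument still holds. No step is a real obstacle; the whole proof reduces to the Rayleigh quotient bounds together with convexity of the state space.
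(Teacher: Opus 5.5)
Your proof is correct and follows the paper's argument essentially step for step: the trivial inclusion for i) $\Rightarrow$ ii), eigenvectors of $E_y$ for the extreme eigenvalues for ii) $\Rightarrow$ iii), and for iii) $\Rightarrow$ i) the bound $\lambda_{min}(E_y) \leq \Tr[\rho E_y] \leq \lambda_{max}(E_y)$ from the spectral decomposition. Your separate treatment of the case $\lambda_{min}(E_y) = 0$ is harmless but unnecessary, since the chain of inequalities involves no division.
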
 

\begin{proof}
\noindent

\begin{enumerate}[left=0pt, leftmargin=20pt, align=left, itemindent=\parindent, itemsep=0pt, parsep=0pt, topsep=0pt]

\item[i) $\implies$ ii)] Since $\mathcal{S}_{pure}(\mathbb{C}^d) \subseteq \mathcal{S}(\mathbb{C}^d)$.

\item[ii) $\implies$ iii)] Let $M$ be $\delta$-quantum-differentially-private on pure states. For any $y \in \mathcal{Y}$ we define $E_{y} = M_{y}^*M_{y}$. Then for $\rho = \ket{\psi}\bra{\psi}, \rho' = \ket{\psi'}\bra{\psi'}$ it holds
    \begin{equation*}
        \bra{\psi}E_{y}\ket{\psi} = \mathbb{P}_{\ket{\psi}}(R^M = y) \leq e^{\delta}\mathbb{P}_{\ket{\psi'}}(R^M = y) = e^{\delta} \bra{\psi'}E_{y}\ket{\psi'}
    \end{equation*}
    Now, let $\ket{\psi}$ and $\ket{\psi'}$ be eigenvectors of $E_y$ associated to eigenvalue $\lambda_{max}(E_{y})$ and $\lambda_{min}(E_{y})$ respectively. Then we have
    \begin{align*}
        \lambda_{max}(E_{y}) = \bra{\psi}E_{y}\ket{\psi} &= \mathbb{P}_{\ket{\psi}}(R^M = y) 
        \\
        &\leq e^{\delta}\mathbb{P}_{\ket{\psi'}}(R^M = y) 
        \\
        &= e^{\delta} \bra{\psi'}E_{y}\ket{\psi'} = e^{\delta} \lambda_{min}(E_{y}).
    \end{align*}

\item[iii) $\implies$ i)] Note that we can write any quantum state as $\rho = \sum_{n = 1}^{d} \lambda_n \ket{\psi_n}\bra{\psi_n}$. The outcome probability for $y \in \mathcal{Y}$ is then given by
    \begin{equation*}
        \mathbb{P}_{\rho}\left( R^M = y \right) = \Tr\left[ \rho E_{y} \right] = \sum_{n = 1}^{d} \lambda_k \Tr\left[ \ket{\psi_k}\bra{\psi_k} E_{y} \right] = \sum_{n = 1}^{d} \lambda_k \bra{\psi_k}E_{y}\ket{\psi_k}.
    \end{equation*}
    By definition of $\lambda_{max}(E_{y})$ and $\lambda_{min}(E_{y})$, since $\sum_{k = 1}^{d} \lambda_k = 1$, we have
    \begin{equation*}
        \lambda_{min}(E_{y}) \leq \mathbb{P}_{\rho}\left( R^M = y \right) \leq \lambda_{max}(E_{y})
    \end{equation*}
    for any $\rho \in \mathcal{S}(\mathbb{C}^d)$ . Therefore we have
    \begin{equation*}
        \mathbb{P}_{\rho}\left( R^M = y \right) \leq \lambda_{max}(E_{y}) \leq e^{\delta} \lambda_{min}(E_{y}) \leq e^{\delta} \mathbb{P}_{\rho'}\left( R^M = y \right)
    \end{equation*}
    for all $\rho, \rho' \in \mathcal{S}(\mathbb{C}^d), y \in \mathcal{Y}$.
\end{enumerate}
\end{proof}

\begin{proposition}
\label{prop::Privacy_implies_gentleness_pure}
    Let $M = (M_y)_{y \in \mathcal{Y}}$ be a quantum measurement and $E_y = M_y^*M_y$. If $M$ is $\delta$-quantum-differentially-private on $\mathcal{S}(\mathbb{C}^{d})$, then there exists an implementation $\Tilde{M}$ of $M$ such that $\Tilde{M}$ is $\alpha$-gentle on $\mathcal{S}_{pure}(\mathbb{C}^{d})$ for
    \begin{equation*}
        \alpha = \frac{e^{\frac{\delta}{2}} -1}{e^{\frac{\delta}{2}} +1} = \tanh\left( \frac{\delta}{4} \right).
    \end{equation*}
\end{proposition}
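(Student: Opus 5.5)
The plan is to take the canonical implementation $\tilde M_y = |M_y| = \sqrt{E_y}$, the positive square root. Since $\tilde M_y^*\tilde M_y = E_y$, this is an implementation of $M$ with the same outcome statistics. By Proposition~\ref{prop::equalities_qDP}, $\delta$-qDP is equivalent to
\[
\lambda_{max}(E_y)\le e^{\delta}\lambda_{min}(E_y)\quad\text{for all } y .
\]
Consequently $\sqrt{E_y}$ has spectrum contained in an interval $[a,b]$ with $b\le e^{\delta/2}a$ and $a>0$. For $E_y=0$ the outcome never occurs and there is nothing to check.

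Fix a pure state $\ket{\psi}$ and an outcome $y$, and set $\ket{\phi} = \sqrt{E_y}\ket{\psi}/\norm{\sqrt{E_y}\ket{\psi}}$. By Lemma~\ref{lemmaTraceNorm}, the trace distance between $\ket{\psi}\bra{\psi}$ and the post-measurement state equals $\sqrt{1-|\langle\psi|\phi\rangle|^2}$. It therefore suffices to show
\[
|\langle\psi|\phi\rangle| = \frac{\bra{\psi}A\ket{\psi}}{\norm{A\ket{\psi}}}\ \ge\ \frac{2\sqrt{ab}}{a+b},\qquad A=\sqrt{E_y}.
\]
Once this holds, we get $1-\frac{4ab}{(a+b)^2} = \left(\frac{b-a}{b+a}\right)^2$. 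This gives a trace distance of at most
\[
\frac{b-a}{b+a} = \frac{b/a-1}{b/a+1} \le \frac{e^{\delta/2}-1}{e^{\delta/2}+1},
\]
because $t\mapsto (t-1)/(t+1)$ is increasing. The final identity with $\tanh(\delta/4)$ is elementary.

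The main step is the inequality $\langle\psi|A|\psi\rangle^2 \ge \frac{4ab}{(a+b)^2}\langle\psi|A^2|\psi\rangle$ for a positive operator $A$ with spectrum in $[a,b]$. This is a Kantorovich-type inequality. We prove it by diagonalizing $A$ and writing $w_k = |\langle e_k|\psi\rangle|^2$, a probability vector, with eigenvalues $s_k\in[a,b]$. The inequality then becomes the claim that $(\mathbb{E}s)^2/\mathbb{E}s^2 \ge 4ab/(a+b)^2$ for any distribution on $[a,b]$.

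To prove this claim we use $(s-a)(b-s)\le 0$ reversed, namely $(s-a)(b-s)\ge 0$ on $[a,b]$. This gives $s^2\le (a+b)s-ab$, hence $\mathbb{E}s^2 \le (a+b)\mathbb{E}s - ab$. Writing $m=\mathbb{E}s$, it remains to check that $m^2(a+b)^2 \ge 4ab\bigl((a+b)m-ab\bigr)$. This rearranges to $\bigl((a+b)m-2ab\bigr)^2\ge 0$, which is true.

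I expect this Kantorovich step to be the only real obstacle. Everything else consists of Lemma~\ref{lemmaTraceNorm}, Proposition~\ref{prop::equalities_qDP}, and monotonicity. The extension from pure states to all states is a separate step that the paper handles afterwards via Proposition~\ref{prop::pure_states_imply_all_states}.
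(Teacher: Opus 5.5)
Your proposal is correct and follows the paper's proof essentially step for step. Both use the implementation $|M_y|=\sqrt{E_y}$, the eigenvalue-ratio characterization from Proposition~\ref{prop::equalities_qDP}, Lemma~\ref{lemmaTraceNorm} for pure states, and the Kantorovich bound $1-4\lambda_1\lambda_d/(\lambda_1+\lambda_d)^2=\bigl((\lambda_d-\lambda_1)/(\lambda_d+\lambda_1)\bigr)^2$. The differences are that you prove the Kantorovich inequality yourself, via $(s-a)(b-s)\ge 0$ and a perfect square, where the paper only cites it, and that you explicitly dispose of the degenerate case $E_y=0$.
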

\begin{proof}
    Note that, by Proposition~\ref{prop::equalities_qDP}, we have $\lambda_{max}(E_{y}) \leq e^{\delta} \lambda_{min}(E_{y})$ for all $y \in \mathcal{Y}$, where $E_{y} = M_{y}^*M_{y}$. Let $|M_y| = \sqrt{E_y}$ be the unique positive square root of $E_y$. Then $\Tilde{M} = (|M_y|)_{y |\in \mathcal{Y}}$ does define a quantum measurement that has the same outcome probabilities as $M_y$. Given $|M_y|$, there exists an orthonormal basis $\ket{v_{y,1}},...,\ket{v_{y,d}}$ of $\mathbb{C}^d$ and $\lambda_{y,1},...,\lambda_{y,d} > 0$ such that
    \begin{equation*}
        |M_y| = \sum_{i = 1}^d \lambda_{y, i} \ket{v_{y, i}}\bra{v_{y, i}},
    \end{equation*}
    where $\lambda_{y,1} = \sqrt{\lambda_{min}(E_y)}$ and $\lambda_{y,d} = \sqrt{\lambda_{max}(E_y)}$. Any pure state $\ket{\psi} \in \mathcal{S}_{pure}(\mathbb{C}^d)$ can be written as
    \begin{equation*}
        \ket{\psi} = \sum_{i = 1}^d \mu_i \ket{v_{y,i}}, \hspace{20pt} \text{for } \sum_{i = 1}^d |\mu_i|^2 = 1.
    \end{equation*}
    Then, for $\rho = \ket{\psi}\bra{\psi}$ we have
    \begin{align*}
        \norm{\rho - \rho_{|M| \to y}}_{Tr}^2 = 1 - \frac{\left|\bra{\psi}|M_y|\ket{\psi}\right|^2}{\bra{\psi}|M_y|^2\ket{\psi}}\leq 1 - \frac{4 \lambda_1 \lambda_d}{(\lambda_d+ \lambda_1)^2} = \left(\frac{\lambda_d - \lambda_1}{\lambda_d + \lambda_1}\right)^2 \leq \left(\frac{e^{\frac{\delta}{2}} -1}{e^{\frac{\delta}{2}} +1}\right)^2,
    \end{align*}
    where the first inequality is due to the Kantorovich inequality (\citet{Moslehian2012}) and the second inequality is due to the fact that the eigenvalues of $E_y$ are given by the square of the eigenvalues of $|M_y|$. This shows that $\Tilde{M}$ is $\alpha$-gentle on pure states.
\end{proof}

\begin{proposition}
\label{prop::pure_states_imply_all_states}
    Let $M = (M_{y})_{y \in \mathcal{Y}}$ be an $\alpha$-gentle measurement on $\mathcal{S}_{pure}(\mathbb{C}^d)$ such that $M_{y}$ is positive and self-adjoint. Then, $M$ is $\alpha$-gentle on $\mathcal{S}(\mathbb{C}^d)$.
\end{proposition}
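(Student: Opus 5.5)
The plan is to reduce the mixed-state case to the pure-state case by purification. Trace distance is contractive under partial trace, and the expression $\norm{\rho - \rho_{M\to y}}_{Tr}$ for a pure state depends only on how the state's weight is distributed over the eigenbasis of $M_y$. Throughout, fix an outcome $y$ and write $M_y = \sum_{i=1}^d \lambda_i \ket{v_i}\bra{v_i}$ with $\lambda_i \geq 0$ and $(\ket{v_i})$ orthonormal. This is possible since $M_y$ is positive and self-adjoint.

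\textbf{Step 1 (purification).} Let $\rho = \sum_k p_k \ket{\psi_k}\bra{\psi_k}$ be any state. Take the purification
\begin{equation*}
\ket{\Psi} = \sum_k \sqrt{p_k}\,\ket{\psi_k}\otimes\ket{k} \in \mathbb{C}^d\otimes\mathbb{C}^d,
\end{equation*}
so that $\Tr_B \ket{\Psi}\bra{\Psi} = \rho$. Consider the operator $M_y\otimes\mathbbm{1}$. We check three facts:
\begin{itemize}
\item $\bra{\Psi}(M_y^2\otimes\mathbbm{1})\ket{\Psi} = \Tr[\rho M_y^2] = \mathbb{P}_\rho(R^M=y)$.
\item The normalized vector $\ket{\Psi_y} = (M_y\otimes\mathbbm{1})\ket{\Psi}/\norm{(M_y\otimes\mathbbm{1})\ket{\Psi}}$ satisfies $\Tr_B\ket{\Psi_y}\bra{\Psi_y} = M_y\rho M_y/\Tr[\rho M_y^2] = \rho_{M\to y}$.
\item Partial trace does not increase the trace norm.
\end{itemize}
Together these give
\begin{equation*}
\norm{\rho-\rho_{M\to y}}_{Tr} \leq \norm{\ket{\Psi}\bra{\Psi}-\ket{\Psi_y}\bra{\Psi_y}}_{Tr}.
\end{equation*}

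\textbf{Step 2 (pure-state formula on the enlarged space).} By Lemma~\ref{lemmaTraceNorm},
\begin{equation*}
\norm{\ket{\Psi}\bra{\Psi}-\ket{\Psi_y}\bra{\Psi_y}}_{Tr}^2 = 1 - \frac{\bra{\Psi}(M_y\otimes\mathbbm{1})\ket{\Psi}^2}{\bra{\Psi}(M_y^2\otimes\mathbbm{1})\ket{\Psi}}.
\end{equation*}
Here we use that $M_y\otimes\mathbbm{1}$ is positive, so the overlap is real and nonnegative. Define the weights $q_i := \bra{\Psi}(\ket{v_i}\bra{v_i}\otimes\mathbbm{1})\ket{\Psi} \geq 0$. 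These satisfy $\sum_i q_i = 1$, and the right-hand side equals
\begin{equation*}
1 - \frac{\left(\sum_i \lambda_i q_i\right)^2}{\sum_i \lambda_i^2 q_i}.
\end{equation*}

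\textbf{Step 3 (transfer back to $\mathbb{C}^d$).} Set $\ket{\phi} = \sum_i \sqrt{q_i}\,\ket{v_i} \in \mathbb{C}^d$. This is a unit vector with $\bra{\phi}M_y\ket{\phi} = \sum_i\lambda_i q_i$ and $\bra{\phi}M_y^2\ket{\phi} = \sum_i \lambda_i^2 q_i$. Applying Lemma~\ref{lemmaTraceNorm} again, the expression from Step 2 is exactly $\norm{\ket{\phi}\bra{\phi}-\ket{\phi}_{M\to y}\bra{\phi}_{M\to y}}_{Tr}^2$. By the hypothesis of gentleness on pure states, this is at most $\alpha^2$. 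Combining with Step 1 yields $\norm{\rho-\rho_{M\to y}}_{Tr}\leq\alpha$ for every $y$, which is the claim.

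\textbf{Main obstacle.} The main care point is Step 3: the enlarged-space quantity must be realized by a genuine pure state in $\mathbb{C}^d$. This works only because $M_y\otimes\mathbbm{1}$ is diagonal in the basis $\ket{v_i}\otimes\ket{j}$, so only the marginal weights $q_i$ matter. This is precisely where positivity and self-adjointness of $M_y$ are used; for a non-normal $M_y$ the phases would not disappear.

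Two degenerate cases are handled separately:
\begin{itemize}
\item If $\Tr[\rho M_y^2]=0$, the outcome never occurs and nothing needs to be shown.
\item If some $q_i$ vanish, the state $\ket{\phi}$ is still well defined, so no issue arises.
\end{itemize}
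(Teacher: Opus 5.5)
Your proof is correct, and after Step~1 it takes a genuinely different route from the paper. Step~1 matches the paper's; the paper purifies with $\ket{\psi_k}\otimes\ket{\psi_k}$ rather than $\ket{\psi_k}\otimes\ket{k}$, which makes no difference.

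To bound the disturbance of $\ket{\Psi}$ under $M_y\otimes\mathbbm{1}$, the paper detours through quantum differential privacy:
\begin{itemize}
\item Lemma~\ref{lem::improved_constant_positivity}, whose proof only tests gentleness on a single pure state, gives $\lambda_{max}(E_y)\leq e^{4\arctanh(\alpha)}\lambda_{min}(E_y)$.
\item This eigenvalue ratio is unchanged for $E_y\otimes\mathbbm{1}$.
\item The Kantorovich inequality in Proposition~\ref{prop::Privacy_implies_gentleness_pure} then makes $M\otimes\mathbbm{1}$ $\alpha$-gentle on pure states of $\mathbb{C}^d\otimes\mathbb{C}^d$.
\end{itemize}
That argument closes only because $\delta=4\arctanh(\alpha)$ and $\alpha=\tanh(\delta/4)$ are exact inverses.

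You instead build a pure state $\ket{\phi}$ with $|\langle v_i|\phi\rangle|^2=\bra{v_i}\rho\ket{v_i}$. It reproduces the first two moments of $M_y$, so the pure-state hypothesis applies to it directly.

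What your route buys:
\begin{itemize}
\item It is shorter and self-contained: no Kantorovich inequality, no privacy duality, and no forward reference to a lemma whose stated hypothesis is gentleness on all of $\mathcal{S}(\mathbb{C}^d)$.
\item It does not need $M_y$ to be invertible, whereas the paper's chain passes through $\lambda_{min}(E_y)>0$.
\item It gives an instance-wise bound: the disturbance of $\rho$ is at most that of the single pure state $\ket{\phi}$.
\item It extends to normal $M_y$ with $\lambda_i^2$ replaced by $|\lambda_i|^2$.
\end{itemize}
The paper's route, in exchange, shows that the worst-case disturbance of a positive $M_y$ is governed exactly by the spectral ratio of $E_y$. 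That is the structural fact behind its gentleness--privacy correspondence.
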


\begin{proof}
    Let $\rho = \sum_{k = 1}^{d}\lambda_k \ket{\psi_k}\bra{\psi_k} \in \mathcal{S}(\mathbb{C}^d)$ be any quantum state and $M$ a $\alpha$-gentle measurement on $\mathcal{S}_{pure}(\mathbb{C}^d)$. Define \begin{equation*}
        \ket{\Psi} = \sum_{k = 1}^{d} \sqrt{\lambda_k} \ket{\psi_k} \otimes \ket{\psi_k} \in \mathcal{S}_{pure}(\mathbb{C}^d \otimes \mathbb{C}^d).
    \end{equation*} 
    Then it holds $\rho = \Tr_2[\ket{\Psi}\bra{\Psi}]$, where $\Tr_2$ is the partial trace over the seconds Hilbert space $\mathbb{C}^d$. Furthermore, for the measurement $M \otimes I = (M_{y} \otimes I)_{y \in \mathcal{Y}}$ it holds
    \begin{align*}
        \mathbb{P}_{\rho}\left( R^M = y \right) = \Tr\left[ \rho M_{y}^* M_{y} \right] &= \Tr\left[ \Tr_2[\ket{\Psi}\bra{\Psi}] M_{y}^* M_{y} \right] 
        \\
        &= \Tr\left[ \ket{\Psi}\bra{\Psi} (M_{y}^*M_{y} \otimes I) \right]
        \\
        &= \Tr\left[ \ket{\Psi}\bra{\Psi} (M_{y}^* \otimes I) (M_{y} \otimes I) \right]
        \\
        &= \Tr\left[ \ket{\Psi}\bra{\Psi} (M_{y} \otimes I)^* (M_{y} \otimes I) \right] = \mathbb{P}_{\ket{\Psi}}\left( R^{M\otimes I} = y \right).
    \end{align*}
    Furthermore, we have 
    \begin{align*}
        M_{y} \rho M_{y}^* = M_{y} \Tr_2[\ket{\Psi}\bra{\Psi}] M_{y}^* = \Tr_2 \left[ (M_{y} \otimes I) \ket{\Psi}\bra{\Psi} (M_{y}^* \otimes I) \right].
    \end{align*}
    This shows that
    \begin{align*}
        \rho_{M \to y} &= \frac{1}{\mathbb{P}_{\rho}\left( R^M = y \right)} M_{y}\rho M_{y}^* 
        \\
        &= \frac{1}{\mathbb{P}_{\ket{\Psi}}\left( R^{M\otimes I} = y \right)} \Tr_2 \left[ (M_{y} \otimes I) \ket{\Psi}\bra{\Psi} (M_{y}^* \otimes I) \right]
        \\
        &= \Tr_2 \left[ \frac{1}{\sqrt{\mathbb{P}_{\ket{\Psi}}\left( R^{M\otimes I} = y \right)}} (M_{y} \otimes I) \ket{\Psi}\bra{\Psi} (M_{y}^* \otimes I) \frac{1}{\sqrt{\mathbb{P}_{\ket{\Psi}}\left( R^{M\otimes I} = y \right)}} \right]
        \\
        &= \Tr_2\left[ \ket{\Psi_{M \otimes I \to y}}\bra{\Psi_{M \otimes I \to y}} \right].
    \end{align*}
    Since the trace norm is contractive under quantum channels such as the partial trace, we have
    \begin{align*}
        \norm{\rho - \rho_{M \to y}}_{Tr} &= \norm{\Tr_2[\ket{\Psi}\bra{\Psi}] - \Tr_2\left[ \ket{\Psi_{M \otimes I \to y}}\bra{\Psi_{M \otimes I \to y}} \right]}_{Tr} 
        \\
        &\leq \norm{\ket{\Psi}\bra{\Psi} -  \ket{\Psi_{M \otimes I \to y}}\bra{\Psi_{M \otimes I \to y}}}_{Tr}.
    \end{align*}
    Therefore, the gentleness of $M$ on $\mathcal{S}(\mathbb{C}^d)$ is bounded by the gentleness of $M \otimes I$ on $\mathcal{S}_{pure}(\mathbb{C}^d \otimes \mathbb{C}^d)$. By Lemma~\ref{lem::improved_constant_positivity}, $M$ is $\delta$-quantum-differentially-private on $\mathcal{S}(\mathbb{C}^d)$ for $\delta = 4 \arctanh(\alpha)$. Note that even though we have not proven Lemma 14 yet, its proof is independent of the results shown so far. It therefore remains applicable. We now have
    \begin{equation*}
        \lambda_{max}(M_{y}^*M_{y} \otimes I) = \lambda_{max}(M_{y}^*M_{y}) \leq e^{\delta} \lambda_{min}(M_{y}^*M_{y}) = e^{\delta} \lambda_{min}(M_{y}^*M_{y} \otimes I)
    \end{equation*}
    which shows that $M \otimes I$ is also $\delta$-quantum-differentially-private on $\mathcal{S}(\mathbb{C}^d \otimes \mathbb{C}^d)$. Proposition \ref{prop::Privacy_implies_gentleness_pure} then shows, that $M \otimes I$ is $\alpha$-gentle on $\mathcal{S}_{pure}(\mathbb{C}^d \otimes \mathbb{C}^d)$ from which we get
    \begin{align*}
        \norm{\rho - \rho_{M \to y}}_{Tr} \leq \norm{\ket{\Psi}\bra{\Psi} -  \ket{\Psi_{M \otimes I \to y}}\bra{\Psi_{M \otimes I \to y}}}_{Tr} \leq \alpha.
    \end{align*}
\end{proof}

\begin{corollary}
\label{prop::Privacy_implies_gentleness}
    Let $M = (M_y)_{y \in \mathcal{Y}}$ be a quantum measurement and $E_y = M_y^*M_y$. If $M$ is $\delta$-quantum-differentially-private on $\mathcal{S}(\mathbb{C}^{d})$, then there exists an implementation $\Tilde{M}$ of $M$ such that $\Tilde{M}$ is $\alpha$-gentle on $\mathcal{S}(\mathbb{C}^{d})$ for
    \begin{equation*}
        \alpha = \frac{e^{\frac{\delta}{2}} -1}{e^{\frac{\delta}{2}} +1} = \tanh\left( \frac{\delta}{4} \right).
    \end{equation*}
\end{corollary}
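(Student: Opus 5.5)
The plan is to combine Proposition~\ref{prop::Privacy_implies_gentleness_pure} with Proposition~\ref{prop::pure_states_imply_all_states}, taking as implementation the positive square roots $\Tilde{M} = (|M_y|)_{y \in \mathcal{Y}}$ with $|M_y| = \sqrt{E_y}$.

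First, since $|M_y|^*|M_y| = E_y = M_y^*M_y$, the family $\Tilde{M}$ satisfies the completeness relation~\eqref{eqn::resolution_of_identity}. It therefore defines a quantum measurement with the same outcome distribution as $M$, so $\Tilde{M}$ is still $\delta$-quantum-differentially-private on $\mathcal{S}(\mathbb{C}^d)$.

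Second, Proposition~\ref{prop::Privacy_implies_gentleness_pure} applies directly. It shows that this particular implementation $\Tilde{M}$ is $\alpha$-gentle on $\mathcal{S}_{pure}(\mathbb{C}^d)$ with $\alpha = \tanh(\delta/4)$. The proof of that proposition uses exactly the operators $|M_y|$, so no further choice is needed.

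Third, we extend gentleness from pure states to all states. Each $|M_y|$ is positive and self-adjoint, which is the hypothesis of Proposition~\ref{prop::pure_states_imply_all_states}, so that proposition gives $\alpha$-gentleness of $\Tilde{M}$ on $\mathcal{S}(\mathbb{C}^d)$ with the same $\alpha$.

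The one step to check is that Proposition~\ref{prop::pure_states_imply_all_states} does not lose anything in the constant. Internally it passes from $\alpha$-gentleness to $\delta' = 4\arctanh(\alpha)$ privacy via Lemma~\ref{lem::improved_constant_positivity}, then back to gentleness for $\Tilde{M}\otimes I$. With $\alpha = \tanh(\delta/4)$ we get $\delta' = \delta$, so the round trip returns exactly $\alpha$. Alternatively, and more directly, one can skip that detour. The eigenvalues of $E_y \otimes I$ coincide with those of $E_y$, so $\Tilde{M}\otimes I$ is $\delta$-private by Proposition~\ref{prop::equalities_qDP}(iii). Proposition~\ref{prop::Privacy_implies_gentleness_pure} then gives $\alpha$-gentleness of $\Tilde{M}\otimes I$ on pure states of $\mathbb{C}^d\otimes\mathbb{C}^d$. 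The purification and partial-trace contraction argument from the proof of Proposition~\ref{prop::pure_states_imply_all_states} transfers this bound to every mixed $\rho$.
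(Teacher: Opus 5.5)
Your proposal is correct and matches the paper's proof. Like the paper, you take the implementation $\tilde{M} = (|M_y|)_{y\in\mathcal{Y}}$, get pure-state gentleness from Proposition~\ref{prop::Privacy_implies_gentleness_pure}, and extend it to all of $\mathcal{S}(\mathbb{C}^d)$ via Proposition~\ref{prop::pure_states_imply_all_states}, using that each $|M_y|$ is positive. Your extra check that the detour through Lemma~\ref{lem::improved_constant_positivity} returns exactly $\alpha=\tanh(\delta/4)$, and your direct alternative via $\tilde{M}\otimes I$, are sound refinements of what that proposition already does internally.
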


\begin{proof}
    For the implementation $\Tilde{M} = (|M_{y}|)_{y \in \mathcal{Y}}$ we chose in the proof of Proposition~\ref{prop::Privacy_implies_gentleness_pure}, we have that the measurement operators are positive and self-adjoint. Thus, by Proposition~\ref{prop::pure_states_imply_all_states}, we have that the same implementation $\Tilde{M}$ is $\alpha$-gentle on $\mathcal{S}(\mathbb{C}^d)$.
\end{proof}

This concludes the proof of part (i) of Theorem~\ref{thm::duality_gentleness_privacy} which shows that quantum-differentially-private measurements have a gentle implementation. The following two results are now concerned with the opposite direction showing that gentle measurements are always quantum-differentially-private.

\begin{proposition}
\label{lem::gentleness_implies_privacy}
    Let $\alpha $ in $[0, \frac{1}{2})$ and $M$ be $\alpha$-gentle on $\mathcal{S}(\mathbb{C}^d)$ with measurement operators $M_{y}$. Then $M$ is $\delta$ quantum differentially-private on $\mathcal{S}(\mathbb{C}^d)$ for $\delta = 2\log(\frac{1+2\alpha}{1-2\alpha})$. 
\end{proposition}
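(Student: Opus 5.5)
The plan is to reduce the claim to an eigenvalue-ratio bound and obtain that bound by testing gentleness on suitable mixed states. By Proposition~\ref{prop::equalities_qDP}, $M$ is $\delta$-qDP on $\mathcal{S}(\mathbb{C}^d)$ as soon as $\lambda_{max}(E_y) \le e^{\delta}\lambda_{min}(E_y)$ for every $y$, where $E_y = M_y^*M_y$. With $\delta = 2\log\frac{1+2\alpha}{1-2\alpha}$, this means showing
\[
\sqrt{r}\le \frac{1+2\alpha}{1-2\alpha}, \qquad r := \frac{\lambda_{max}(E_y)}{\lambda_{min}(E_y)} .
\]

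Fix $y$ and orthonormal eigenvectors $\ket{u},\ket{v}$ of $E_y$ with eigenvalues $\lambda_{max}$ and $\lambda_{min}$. If $\lambda_{min}=0$, the state $\ket{v}$ produces outcome $y$ with probability zero, so I treat this degenerate case separately by a limiting argument; otherwise assume $\lambda_{min}>0$. Set $\ket{a} = M_y\ket{u}/\sqrt{\lambda_{max}}$ and $\ket{b} = M_y\ket{v}/\sqrt{\lambda_{min}}$. These are unit vectors, and they are orthogonal because $\langle M_y u, M_y v\rangle = \bra{u}E_y\ket{v} = 0$. Gentleness on the pure states $\ket{u}$ and $\ket{v}$ gives
\[
\norm{\ket{u}\bra{u}-\ket{a}\bra{a}}_{Tr}\le\alpha, \qquad \norm{\ket{v}\bra{v}-\ket{b}\bra{b}}_{Tr}\le\alpha .
\]

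For $q\in(0,1)$ consider $\rho_q = q\ket{u}\bra{u}+(1-q)\ket{v}\bra{v}$. Its post-measurement state is
\[
\rho_{q,M\to y} = p_q\ket{a}\bra{a}+(1-p_q)\ket{b}\bra{b}, \qquad p_q=\frac{q r}{qr+1-q}.
\]
Let $\sigma_q = q\ket{a}\bra{a}+(1-q)\ket{b}\bra{b}$. By convexity of the trace norm, $\norm{\rho_q-\sigma_q}_{Tr}\le\alpha$. Since $a\perp b$, we have exactly $\norm{\rho_{q,M\to y}-\sigma_q}_{Tr} = |p_q-q|$. The triangle inequality together with $\alpha$-gentleness on $\rho_q$ then yields
\[
|p_q - q| \le \norm{\rho_{q,M\to y}-\rho_q}_{Tr}+\norm{\rho_q-\sigma_q}_{Tr}\le 2\alpha \qquad\text{for all } q .
\]

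The last step is to optimise over $q$; I expect this to be the main point. The choice $q=1/2$ only gives $r\le\frac{1+4\alpha}{1-4\alpha}$, which is valid only for $\alpha<1/4$ and is not the claimed constant. Maximising $p_q-q$ instead gives the optimum at $q = 1/(1+\sqrt r)$, where $p_q - q = \frac{\sqrt r-1}{\sqrt r+1}$. Hence $\frac{\sqrt r-1}{\sqrt r+1}\le 2\alpha$. Because $\alpha<1/2$, this rearranges to $\sqrt r\le\frac{1+2\alpha}{1-2\alpha}$, which is exactly $r\le e^{\delta}$ with $\delta = 2\log\frac{1+2\alpha}{1-2\alpha}$.

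Two details still need care. First, the rearrangement is where $\alpha<1/2$ is genuinely needed. Second, the case $\lambda_{min}=0$ must be excluded: there $p_q=1$ for every $q>0$, so $|p_q-q|\le 2\alpha<1$ fails as $q\to 0$, a contradiction. Hence gentle measurements with $\alpha<1/2$ are automatically full rank.
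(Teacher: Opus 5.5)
Your argument is correct and is essentially the paper's. Both test gentleness on a mixture of two orthogonal states, use the triangle inequality to show that the shift in mixing weight ($p_q-q$, the paper's $\delta$) is at most $2\alpha$, and then pick the mixing weight optimally: the paper's $\lambda_0=(1-2\alpha)/2$ is exactly your $q=1/(1+\sqrt r)$ at the extremal ratio $\sqrt r=(1+2\alpha)/(1-2\alpha)$, and your use of eigenvectors of $E_y$ makes precise the paper's loose passage from orthogonal pairs to all states.

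One small fix: when $\lambda_{min}=0$, neither $\ket{b}$ nor gentleness at $\ket{v}$ is available, so derive $1-q\le 2\alpha$ directly from $\rho_{q,M\to y}-\rho_q=(\ket{a}\bra{a}-\ket{u}\bra{u})+(1-q)(\ket{u}\bra{u}-\ket{v}\bra{v})$. The paper's decomposition, which puts the weight mismatch on the pre-measurement states, covers this case automatically.
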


\begin{proof}
    Assume that $M = (M_{y})_{y \in \mathcal{Y}}$ is an $\alpha$-gentle measurement on $\mathcal{S}(\mathbb{C}^d)$. Define $E_{y} := M_{y}^*M_{y}$. Let $\rho_1, \rho_2 \in \mathcal{S}(\mathbb{C}^d)$ s.t. $\norm{\rho_1 - \rho_2}_{Tr} = 1$. Let us further denote by $p_1$ (respectively $p_2$) the probability of obtaining outcome $y$ under $\rho_1$ (respectively $\rho_2$). That is
    \begin{equation*}
        p_1 = \mathbb{P}_{\rho_1}\left(R^M = y \right) \hspace{20pt} \text{and} \hspace{20pt} p_2 = \mathbb{P}_{\rho_2}\left(R^M = y \right) 
    \end{equation*} Without loss of generality we assume that $p_1 > p_2 \geq 0$. Now, let
    \begin{equation*}
        \rho_{\lambda} = \lambda \rho_1 + (1-\lambda) \rho_2 \hspace{10pt} \text{for all } \lambda \in (0,1).
    \end{equation*}
    The probability of obtaining the outcome $y$ when measuring $\rho_{\lambda}$ is
    \begin{equation*}
        p_{\lambda} = \mathcal{P}_{\rho_{\lambda}}(R^M = y) = \Tr\left( \rho_{\lambda} E_y \right) = \lambda \Tr\left( \rho_1 E_y \right) + (1 - \lambda) \Tr\left( \rho_2 E_y \right) = \lambda p_1 + (1 - \lambda) p_2.
    \end{equation*}
    The post-measurement state of $\rho_{\lambda}$ is then given by
    \begin{equation*}
        (\rho_{\lambda})_{M \to y} = \frac{1}{p_{\lambda}} M_y \rho_{\lambda} M_y^* = \frac{\lambda p_1 (\rho_1)_{M \to y} + (1-\lambda) p_2 (\rho_2)_{M \to y}}{\lambda p_1 + (1-\lambda) p_2}.
    \end{equation*}
    Now if we define $\delta = \frac{\lambda p_1}{\lambda p_1 + (1-\lambda) p_2} - \lambda > 0$, we get
    \begin{align*}
        \rho_{\lambda} - (\rho_{\lambda})_{M \to y} = \frac{\lambda p_1}{p_{\lambda}} \left((\rho_1 - (\rho_1)_{M \to y}\right) + \frac{(1-\lambda) p_2}{p_{\lambda}} \left(\rho_2 - (\rho_2)_{M \to y}\right) + \delta \left(\rho_2 - \rho_1\right)).
    \end{align*}
    By the triangle inequality and gentleness we now have
    \begin{align*}
        \delta \norm{\rho_2 - \rho_1}_{Tr} \leq \frac{\lambda p_1}{p_{\lambda}} \alpha + \frac{(1-\lambda) p_2}{p_{\lambda}} \alpha + \alpha = 2 \alpha
    \end{align*}
    Since we further assumed $\norm{\rho_2 - \rho_1}_{Tr} = 1$, we get $\delta \leq 2 \alpha$. This allows us to write
    \begin{align*}
        p_1 = \frac{\lambda - \lambda^2 + \delta (1-\lambda)}{\lambda - \lambda^2 - \delta \lambda} p_2 \leq \frac{\lambda - \lambda^2 + 2\alpha (1-\lambda)}{\lambda - \lambda^2 - 2 \alpha \lambda} p_2 \hspace{10pt} \text{for all } \lambda \in (0,1 - 2 \alpha).
    \end{align*}
    The last inequality only holds as long as the denominator is positive which is the case for $\lambda < 1 - 2 \alpha$. As such, for $\lambda_0 = \frac{1 - 2 \alpha}{2} < 1 - 2 \alpha$, we obtain
    \begin{equation*}
        p_1 \leq \frac{\lambda_0 - \lambda_0^2 + 2\alpha (1-\lambda_0)}{\lambda_0 - \lambda_0^2 - 2 \alpha \lambda_0} p_2 = \left( \frac{1 + 2\alpha}{1 - 2\alpha} \right)^2 p_2.
    \end{equation*}
    Now, since we started with $\rho_1, \rho_2$ such that $\norm{\rho_1 - \rho_2}_{Tr} = 1$, the last relation holds for every pure state. As such $M$ is $2 \log\left( \frac{1 + 2\alpha}{1 - 2\alpha} \right)$-quantum-differentially-private on pure states. Proposition \ref{prop::equalities_qDP} then proves that $M$ is $2 \log\left( \frac{1 + 2\alpha}{1 - 2\alpha} \right)$-quantum-differentially-private on the whole space.
\end{proof}

While Proposition~\ref{lem::gentleness_implies_privacy} shows that an arbitrary gentle measurement is quantum differentially private, showing part (ii) of Theorem~\ref{thm::duality_gentleness_privacy}, we can show that the constant relating the two can be improved for measurements with positive-definite measurement operators.

\begin{lemma}
\label{lem::improved_constant_positivity}
    Let $\alpha $ in $[0, 1)$ and $M$ be $\alpha$-gentle on $\mathcal{S}(\mathbb{C}^d)$ with positive-definite measurement operators $M_y$. Then $M$ is $\delta$ quantum differentially-private on $\mathcal{S}(\mathbb{C}^d)$ for $\delta = 2\log(\frac{1+\alpha}{1-\alpha}) = 4 \arctanh(\alpha)$.
\end{lemma}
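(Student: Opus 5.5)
The plan is to verify criterion iii) of Proposition~\ref{prop::equalities_qDP}, namely $\lambda_{max}(E_y) \leq e^{\delta}\lambda_{min}(E_y)$ with $e^{\delta} = \left(\frac{1+\alpha}{1-\alpha}\right)^2$. The implication iii) $\implies$ i) in that proposition then yields $\delta$-quantum differential privacy on all of $\mathcal{S}(\mathbb{C}^d)$. To obtain iii), I will test gentleness on a single, carefully chosen pure state. This state will saturate the Kantorovich bound used in Proposition~\ref{prop::Privacy_implies_gentleness_pure}, so the inequality derived there can be reversed.

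Fix an outcome $y$. Since $M_y$ is positive definite (in particular self-adjoint), we have $E_y = M_y^* M_y = M_y^2$. Let $a = \lambda_{min}(M_y) > 0$ and $b = \lambda_{max}(M_y)$, with orthonormal eigenvectors $\ket{v_1}$ and $\ket{v_2}$, so that $\lambda_{min}(E_y) = a^2$ and $\lambda_{max}(E_y) = b^2$. If $a = b$ there is nothing to prove, so assume $a < b$. Consider the pure state
\begin{equation*}
    \ket{\psi} = \cos\theta \ket{v_1} + \sin\theta \ket{v_2}.
\end{equation*}
Its post-measurement state is the normalization of $M_y\ket{\psi} = a\cos\theta \ket{v_1} + b \sin\theta\ket{v_2}$. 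By Lemma~\ref{lemmaTraceNorm},
\begin{equation*}
    \norm{\rho - \rho_{M\to y}}_{Tr}^2 = 1 - \frac{\left(a\cos^2\theta + b\sin^2\theta\right)^2}{a^2\cos^2\theta + b^2\sin^2\theta}.
\end{equation*}

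Now choose $\cos^2\theta = \frac{b}{a+b}$ and $\sin^2\theta = \frac{a}{a+b}$, which is the equality case of Kantorovich. The numerator becomes $\left(\frac{2ab}{a+b}\right)^2$ and the denominator becomes $ab$. The squared overlap is therefore $\frac{4ab}{(a+b)^2}$, and the trace distance equals exactly $\frac{b-a}{b+a}$. Gentleness requires $\frac{b-a}{b+a} \leq \alpha$. Since $\alpha < 1$, this is equivalent to $\frac{b}{a} \leq \frac{1+\alpha}{1-\alpha}$. Squaring gives
\begin{equation*}
    \frac{\lambda_{max}(E_y)}{\lambda_{min}(E_y)} \leq \left(\frac{1+\alpha}{1-\alpha}\right)^2 = e^{4\arctanh(\alpha)},
\end{equation*}
which is iii) with $\delta = 2\log\frac{1+\alpha}{1-\alpha}$.

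The main point requiring care is the use of positive definiteness. It is needed so that $M_y$ and $E_y$ share eigenvectors with $\lambda(E_y) = \lambda(M_y)^2$, and so that $a > 0$, which makes the test state well defined and the post-measurement state computable in closed form. For a general $M_y = U_y|M_y|$, the extra unitary would destroy this exact computation; this is why Proposition~\ref{lem::gentleness_implies_privacy} only gives the weaker constant. Everything else is the routine algebra above.
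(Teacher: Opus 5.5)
Your proposal is correct and is essentially the paper's own proof. The paper tests gentleness on the same pure state $\frac{1}{\sqrt{\lambda_1+\lambda_d}}\bigl(\sqrt{\lambda_d}\ket{v_1}+\sqrt{\lambda_1}\ket{v_d}\bigr)$, which is your choice of $\theta$, obtains the same squared overlap $4\lambda_1\lambda_d/(\lambda_1+\lambda_d)^2$, rearranges to $\lambda_{max}(E_y)/\lambda_{min}(E_y)\le e^{4\arctanh(\alpha)}$, and concludes via Proposition~\ref{prop::equalities_qDP}.
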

    
\begin{proof}
    Let $M_y = \sum_{i = 1}^d \lambda_i \ket{v_i}\bra{v_i}$ be positive-definite with maximal and minimal eigenvalue $\lambda_d$ and $\lambda_1$ respectively. Consider the gentleness of $M_y$ on the pure state $\ket{\psi} = \frac{1}{\sqrt{\lambda_1 + \lambda_d}} (\sqrt{\lambda_d} \ket{v_1} + \sqrt{\lambda_1} \ket{v_d})$. Lemma \ref{lemmaTraceNorm} then shows that, due to the gentleness of $M$ on $\rho = \ket{\psi}\bra{\psi}$, we have
    \begin{align*}
        \alpha^2 \geq \norm{\rho - \rho_{M \to y}}_{Tr}^2 = 1- \frac{\left| \bra{\psi} M_y \ket{\psi} \right|^2}{\bra{\psi} M_y^2 \ket{\psi}}.
    \end{align*}
    For the numerator and denominator we have
    \begin{align*}
        \left|\bra{\psi} M_y \ket{\psi}\right|^2 &= \frac{1}{(\lambda_1 + \lambda_d)^2} \left( \lambda_d \lambda_1 + \lambda_d \lambda_1 \right)^2 = \frac{4 \lambda_1^2 \lambda_d^2}{(\lambda_1+ \lambda_d)^2}
        \\
        \bra{\psi} M_y^2 \ket{\psi} &= \frac{1}{\lambda_1 + \lambda_d} \left( \lambda_d \lambda_1^2 + \lambda_d^2 \lambda_1 \right) = \lambda_1 \lambda_d.
    \end{align*}
    from which we obtain
    \begin{equation*}
        \frac{\left| \bra{\psi} M_y \ket{\psi} \right|^2}{\bra{\psi} M_y^2 \ket{\psi}} = \frac{4 \lambda_1 \lambda_d}{(\lambda_1 + \lambda_d)^2}
    \end{equation*}
    This gives
    \begin{equation*}
        \alpha^2 \geq \frac{(\lambda_d - \lambda_1)^2}{(\lambda_d + \lambda_1)^2} \hspace{20pt} \text{or equivalently} \hspace{20pt} \alpha \geq \frac{\lambda_d - \lambda_1}{\lambda_d + \lambda_1} = \frac{\frac{\lambda_d}{\lambda_1} - 1}{\frac{\lambda_d}{\lambda_1} +1 }
    \end{equation*}
    Using the fact that for $E_{y} = M_y^2$ we have $\lambda_{max}(E_{y}) = \lambda_d^2$ and $\lambda_{min}(E_{y}) = \lambda_1^2$, we get
    \begin{align*}
        \alpha \geq \frac{e^{2 \frac{1}{4}\log\left( \frac{\lambda_{max}(E_{y})}{\lambda_{min}(E_{y})} \right)} - 1}{e^{2 \frac{1}{4}\log\left( \frac{\lambda_{max}(E_{y})}{\lambda_{min}(E_{y})} \right)} + 1} = \tanh\left( \frac{1}{4} \log\left( \frac{\lambda_{max}(E_{y})}{\lambda_{min}(E_{y})} \right) \right).
    \end{align*}
    Finally, using the monotonicity of $\tanh$, we get
    \begin{equation*}
        \frac{\lambda_{max}(E_{y})}{\lambda_{min}(E_{y})} \leq e^{4 \arctanh(\alpha)}
    \end{equation*}
    which together with Proposition \ref{prop::equalities_qDP} shows that $M$ is $\delta$-quantum-differentially-private on $\mathcal{S}(\mathbb{C}^d)$ for
    \begin{equation*}
        \delta = 4 \arctanh(\alpha).
    \end{equation*}
\end{proof}

\end{document}